%% file: main-arxiv.tex
\documentclass[11pt,a4paper]{article}
\usepackage{a4wide}
\usepackage[utf8]{inputenc}
\usepackage[T1]{fontenc}
\usepackage{lmodern}
\usepackage{microtype}
\usepackage{amsmath,amssymb,amsthm}
\usepackage[title, titletoc]{appendix}
\usepackage{thmtools}
\usepackage{graphicx}
\usepackage{tikz}
\usetikzlibrary{arrows.meta}
\usepackage{subcaption}
\usepackage{xcolor}
\usepackage{todonotes}
\usepackage{amsopn}
\usepackage{enumitem}

\definecolor{linkblue}{RGB}{0,70,140}
\definecolor{citegreen}{RGB}{0,100,60}
\definecolor{urlblue}{RGB}{0,90,160}

\usepackage[colorlinks=true,linkcolor=linkblue,citecolor=citegreen,urlcolor=urlblue]{hyperref}
\usepackage[nameinlink,capitalise,noabbrev]{cleveref}

\crefname{enumi}{item}{items}
\Crefname{enumi}{Item}{Items}

\title{Constant-factor approximation of $\MinCostHom$\\ with a conservative majority polymorphism}

\author{
Marcin Kozik\\
Jagiellonian University\\
\texttt{marcin.kozik@uj.edu.pl}
\and
Stanislav \v{Z}ivn\'y\\
University of Oxford\\
\texttt{standa.zivny@cs.ox.ac.uk}
}

\date{} 

\declaretheorem[name=Theorem,numberwithin=section]{theorem}
\declaretheorem[name=Observation,numberwithin=section]{observation}

\newtheorem{proposition}[theorem]{Proposition}

\declaretheoremstyle[
    headfont=\bfseries,
    bodyfont=\itshape,
    notefont=\bfseries,
    notebraces={``}{''},
    headpunct={.},
    postheadspace=0.5em,
    headformat=\NAME\NOTE
]{namedlemmastyle}
\declaretheorem[style=namedlemmastyle,numbered=no,name=Lemma]{namedlemma}
\newcommand{\namedlemcref}[1]{\hyperref[#1]{Lemma ``\nameref*{#1}''}}

\theoremstyle{definition}
\newtheorem{definition}[theorem]{Definition}

\theoremstyle{remark}

\newcommand{\CSP}{\ifmmode\text{\normalfont\textsc{CSP}}\else\textsc{CSP}\fi}
\newcommand{\MinCSP}{\ifmmode\text{\normalfont\textsc{MinCSP}}\else\textsc{MinCSP}\fi}
\newcommand{\cUGC}{\ifmmode\text{\normalfont\textsc{cUG}}\else\textsc{cUG}\fi}
\newcommand{\MinCostHom}{\ifmmode\text{\normalfont\textsc{MinCostCSP}}\else\textsc{MinCostCSP}\fi}
\newcommand{\Conn}[1]{\operatorname{Conn}(#1)}
\newcommand{\Cen}[1]{\operatorname{Cen}(#1)}
\newcommand{\relstr}[1]{\ifmmode\text{\normalfont\textbf{#1}}\else\textbf{#1}\fi}
\newcommand{\constr}[1]{{\mathcal #1}}
\newcommand{\inst}[1]{{\mathcal #1}}
\providecommand{\OPT}{\operatorname{OPT}}
\providecommand{\SDP}{\operatorname{SDP}}
\DeclareMathOperator{\proj}{proj}
\DeclareMathOperator{\ar}{ar}
\DeclareMathOperator{\cost}{cost}
\DeclareMathOperator{\pol}{Pol}

\newcommand{\blambda}{\ensuremath{\mbox{\boldmath $\lambda$}}}
\newcommand{\bkappa}{\ensuremath{\mbox{\boldmath $\kappa$}}}

\begin{document}

\maketitle

\begin{abstract}
  For a relational structure $\relstr{A}$, the \emph{Minimum Cost Constraint
  Satisfaction Problem} is the following problem denoted by
  $\MinCostHom(\relstr{A})$: Given an instance of $\CSP(\relstr{A})$ with
  rational costs on variable-value pairs, find a solution to the instance
  minimizing the sum of the chosen costs. For the exact minimization, a
  classification of $\MinCostHom(\relstr{A})$ in terms of $\relstr{A}$ was
  established by Takhanov~[STACS'10].
    
  We focus on constant-factor approximations of $\MinCostHom(\relstr{A})$.
  DeHaan, Huang, and Lee recently showed that if $\relstr{A}$ fails to admit a
  conservative near-unanimity polymorphism then $\MinCostHom(\relstr{A})$ is not
  constant-factor approximable~[APPROX'25]. We provide a first step towards a
  classification, by proving a dichotomy for structures $\relstr{A}$ admitting a
  conservative majority (also known as $3$-near-unanimity) polymorphism. Our
  dichotomy criterion is not in terms of an algebraic condition on $\relstr{A}$
  but we show that this is unavoidable. We include a simple argument proving
  that no such condition exists. 
\end{abstract}

\section{Introduction}\label{sec:intro}

Satisfiability of formulas with three literals per clause (3-SAT), Maximum Cut (MaxCut), and Minimum Vertex Cover (MinVC) are all examples
of Boolean \emph{Constraint Satisfaction Problems}
(CSPs)~\cite{Creignouetal:siam01}. Given a set of variables and a set of
constraints, one seeks an assignment of $0$s and $1$s to the variables with the
goal of satisfying all the constraints (3-SAT), maximizing the
number of satisfied constraints (MaxCut), and minimizing the
number of variables assigned the value $1$ while simultaneously satisfying all the
constraints (MinVC).\footnote{We call a CSP \emph{Boolean} if the domain of each variable is of size two.} 

This paper is concerned with CSPs parameterized by the set of allowed
constraint relations, known as non-uniform CSPs~\cite{Kolaitis00:jcss}, 
fixed-template CSPs~\cite{Feder98:sicomp}, or CSPs with a fixed constraint
language~\cite{Jeavons97:jacm}. 
Formally, one fixes a (relational) structure $\relstr{A}=(A;R_1,\ldots,R_p)$,
where $A$ is a finite domain of values for the variables and each $R_i$ is a
relation on $A$. We denote by $\CSP(\relstr{A})$ the set of CSP instances whose
constraints only use relations from $\relstr{A}$.
For example, 3-SAT is captured by a CSP with eight (in fact four
suffice) relations corresponding to the eight types of 3-clauses, MaxCut is
captured by a CSP with one relation, namely the binary disequality relation
$\{(0,1),(1,0)\}$, and MinVC is captured by a CSP with the 
relation $\{(0,1),(1,0),(1,1)\}$.

The three examples deal with different computational problems. 3-SAT is an example
of a \emph{decision} Boolean CSP. All decision Boolean CSPs were classified as solvable
in polynomial time or NP-complete by Schaefer back in
1978~\cite{Schaefer78:complexity}. MaxCut is an example of an \emph{optimization}  Boolean
CSP. All optimization Boolean CSPs were classified with respect to exact
solvability by Creignou~\cite{Creignou95:jcss} and with respect to
approximability,  both for minimization and maximization, by Khanna, Sudan,
Trevisan, and Williamson~\cite{Khanna01:sicomp} (see also the monograph by
Creignou, Khanna, and Sudan~\cite{Creignouetal:siam01}). Finally, MinVC 
is one of the simplest examples of a Boolean CSP that mixes decision and
optimization. In other words, it is an optimization problem with strict
constraints. Such CSPs are called MinOnes
in~\cite{Khanna01:sicomp,Creignouetal:siam01} and TMIN by Khanna and Motwani
in~\cite{Khanna96:stoc}.
All Boolean MinOnes, as well as
MaxOnes, were classified with respect to approximability
in~\cite{Khanna01:sicomp}.

While by now we have a very good understanding of Boolean CSPs, the situation is
significantly more complicated for CSPs over larger, non-Boolean domains, also
known as alphabets. Over the years, various fragments of non-Boolean CSPs have
been studied. Two prominent examples that were important in the development of
the field are \emph{graph} CSPs and \emph{conservative} CSPs. The former are
CSPs of the form $\CSP(\relstr{H})$, where $\relstr{H}$ is a graph, i.e., a
single binary symmetric relation. The latter are CSPs that allow arbitrary
restrictions of the domains of individual variables. We now survey what is known
for non-Boolean CSPs, mentioning important special cases.

\smallskip

For \emph{decision} CSPs, Hell and Ne\v{s}et\v{r}il established a 
classification of graph CSPs under the name of
$\relstr{H}$-coloring~\cite{HellN90}.
A complexity
classification for conservative CSPs was established by
Bulatov~\cite{Bulatov11:tocl} (with simple proofs later by
Barto~\cite{Barto11:lics} and Bulatov~\cite{Bulatov16:jcss}), following earlier
classifications for graphs by Feder, Hell, and
Huang~\cite{Feder03:jgt} and for digraphs by Hell and Rafiey~\cite{Hell11:soda}.
After a lot of partial results and important developments, a dichotomy for \emph{all} finite-domain CSPs was established by Bulatov~\cite{Bulatov17:focs} and
Zhuk~\cite{Zhuk20:jacm} via the so-called algebraic approach~\cite{BKW17}, thus
confirming in the affirmative the Feder--Vardi
conjecture~\cite{Feder98:sicomp} and its algebraic formulation by Bulatov,
Jeavons, and Krokhin~\cite{BulatovJeavonsKrokhin2005}.

For \emph{optimization} CSPs, Thapper and \v{Z}ivn\'y classified all
finite-domain CSPs with respect to exact solvability~\cite{tz16:jacm}, building
on earlier classifications for exact solvability of CSPs on Boolean
domains~\cite{Creignou95:jcss}, CSPs on three-element
domains~\cite{JonssonKK06:3valued,hkp14:sicomp}, and conservative
CSPs~\cite{DeinekoJKK08,kz13:jacm}.

For approximation, there is a huge difference between minimization and maximization, as
we shall elaborate on next.
All \emph{Maximum} CSPs (MaxCSPs) admit a constant-factor approximation
algorithm. A celebrated result of Raghavendra~\cite{Raghavendra08:everycsp}
established that the basic SDP relaxation achieves the best approximation ratio
for all MaxCSPs under Khot's Unique Games Conjecture~\cite{Khot02stoc}, building
on earlier works of Khot, Kindler, Mossel, and O'Donnell~\cite{Khot07:sicomp}
and Austrin~\cite{Austrin10:sicomp}. While one can in principle round the SDP
solution~\cite{Raghavendra10:stoc}, the precise approximation factors are open
for most CSPs, cf. the recent work by Brakensiek, Huang, Potechin, and
Zwick~\cite{BHPZ25:sidma,Brakensiek23:focs}.

Not all \emph{Minimum} CSPs (MinCSPs) admit a constant-factor approximation. In
fact, MinCSPs attain various levels of
approximability~\cite{Agarwal05:stoc,Hastad08:cc,Guruswami12:toc,Guruswami16:toc},
already on Boolean domains~\cite{Creignouetal:siam01}. Constant-factor
approximable MinCSPs were studied, e.g., by Ene, Vondr\'ak, and
Wu~\cite{Ene13:soda} and by Dalmau, Krokhin, and Manokaran~\cite{Dalmau18:jcss},
in relation to approximability via the basic linear programming relaxation. We
note that constant-factor approximability of MinCSPs is closely related to
robustly solvable CSPs~\cite{Guruswami12:toc,Barto16:sicomp} (a notion
introduced by Zwick~\cite{Zwick98:stoc}) with linear
loss~\cite{Dalmau13:robust}.

\begin{center}
  \emph{What about approximation of CSPs that mix optimization with strict constraints?}
\end{center}

The simplest {optimization CSP with strict constraints} is the 
\emph{Minimum Cost Constraint Satisfaction Problem} ($\MinCostHom$),
introduced (for graphs) 
by Gutin, Rafiey, Yeo,
and Tso in~\cite{Gutin06:dam-repair}. 
Fix a structure $\relstr{A}$.  
Then, $\MinCostHom(\relstr{A})$ is the following problem:
Given a $\CSP(\relstr{A})$ instance with variables $V$ over domain $A$ and 
a cost function $c_x:A\to\mathbb{Q}_{\geq 0}\cup\{\infty\}$
for every variable $x\in V$,
the goal is to find an assignment $\alpha:V\to A$ that
satisfies all the constraints and minimizes $\sum_{x\in V}c_x(\alpha(x))$.
As an example, MinVC is a Boolean $\MinCostHom(\relstr{A})$ with a single relation, where
$\relstr{A}=(\{0,1\};\{(0,1),(1,0),(1,1)\})$ and
the costs are $c_x(0)=0$ and $c_x(1)=1$ as we interpret the value $1$ as selecting
the vertex represented by variable $x$
in the cover.
MinVC is an example of a special type of $\MinCostHom$s with \emph{injective} cost;
i.e., $c_x(a)\neq c_x(b)$ for every $x$ and $a\neq b$. 
Such $\MinCostHom$s have been studied under the names of StrictCSPs by Kumar, Manokaran,
Tulsiani, and Vishnoi~\cite{Kumar11:soda} and MinSolution by Jonsson, Kuivinen,
and Nordh~\cite{Jonsson08:sicomp}. 

Since cost functions in $\MinCostHom$s can take on infinite costs,
$\MinCostHom$s are a generalization of conservative CSPs. 
Thus, $\MinCostHom(\relstr{A})$ can be constant-factor
approximable in polynomial time only if $\CSP(\relstr{A}')$ is solvable in
polynomial time, where
$\relstr{A}'$ is $\relstr{A}$ expanded with all unary relations on $A$. Moreover,
without loss of generality we can consider $\relstr{A}'$ instead of $\relstr{A}$
and thus can assume that all \emph{polymorphisms} of
$\relstr{A}$ are conservative. Intuitively, a polymorphism of $\relstr{A}$ is a
closure operation on the solution space of any instance of
$\MinCostHom(\relstr{A})$ (for a precise definition
cf.~\Cref{sec:preliminaries}). We call a polymorphism $f:A^k\to A$ conservative
if $f(x_1,\ldots,x_k)\in\{x_1,\ldots,x_k\}$.

\medskip
The fundamental question of which $\MinCostHom$s admit 
constant-factor approximation is still open despite progress 
on exact
solvability~\cite{Gutin08:ejc-mincostdichotomy,Hell12:sidma,Takhanov10:stacs}
(discussed in detail in related work later) and progress on the following
special cases:

\begin{itemize}[leftmargin=1.3em]
  \item 
    For Boolean structures $\relstr{A}$,
    a constant-factor approximation dichotomy of $\MinCostHom(\relstr{A})$ 
    follows from the already mentioned work of
    Khanna et al.~\cite{Khanna01:sicomp}. The Boolean domain is significantly
    easier to understand compared to larger domains, as there is a complete description
    of relational clones on Boolean domains by Post~\cite{Post41}.

  \item 
    For bipartite graphs $\relstr{A}$,
    a constant-factor approximation dichotomy of $\MinCostHom(\relstr{A})$ 
    was established by Hell, Mastrolilli, Nevisi, and
    Rafiey~\cite{Hell12:esa-approximation}. In this case, the bipartitedness of
    $\relstr{A}$ imposes enough structure to limit the approximable cases and
    allows for simple gadgets to show
    inapproximability~\cite{Mastrolilli13:dam}.

  \item Recent work of DeHaan, Huang, and Lee showed several interesting
  results~\cite{DeHaan25:approx}. 

    \begin{itemize}[leftmargin=1.3em]
      \item Building on the already
        mentioned~\cite{Dalmau18:jcss}, the authors of~\cite{DeHaan25:approx} showed that $\MinCostHom(\relstr{A})$ is not
    constant-factor approximable unless $\relstr{A}$ admits a conservative near-unanimity
    polymorphism. A $k$-ary operation $f:A^k\to A$ is called a
    \emph{near-unanimity} operation if it satisfies, for every $a,b \in A$,
    $f(a,\ldots,a,b)=f(a,\ldots,a,b,a)=\cdots=f(b,a,\ldots,a)=a$.
        This \emph{necessary} condition for constant-factor approximability 
        of $\MinCostHom(\relstr{A})$ 
        is known to be sufficient for Boolean
        $\relstr{A}$~\cite{Khanna01:sicomp}. Consequently, the existence of a
        conservative near-unanimity polymorphism of $\relstr{A}$ governs the constant-factor approximability of $\MinCostHom(\relstr{A})$ for Boolean $\relstr{A}$.
        It would be tempting to conjecture that the same condition governs the
        constant-factor approximability of $\MinCostHom(\relstr{A})$ for
        non-Boolean $\relstr{A}$ as well.

  \item A near-unanimity operation $m:A^3\to A$ of three
    arguments is known as a \emph{majority} operation as the identity amounts
        to, for every $a,b \in A$, $m(a,a,b)=m(a,b,a)=m(b,a,a)=a$. 

    The authors of~\cite{DeHaan25:approx} showed that
        $\MinCostHom(\relstr{A})$ is constant-factor approximable if
        $\relstr{A}$ admits a very special type of
    conservative majority polymorphism, known as dual discriminator. In this
        case, $\relstr{A}$ has an explicit description of possible (and rather
        restricted) relations~\cite{Cooper94:characterising}.
        However, \cite{DeHaan25:approx} 
        observed that for a simple $\relstr{A}$, a triangle with a single loop
        (cf.~\Cref{fig:counter}), $\MinCostHom(\relstr{A})$ is not constant-factor approximable
        although $\relstr{A}$ admits a conservative majority polymorphism. 
        Consequently, the existence of 
        a conservative near-unanimity polymorphism (or even a conservative majority
        polymorphism) is \emph{not sufficient} for constant-factor
        approximability of $\MinCostHom(\relstr{A})$ for non-Boolean
        $\relstr{A}$ (already on a three-element domain).

  \item 
    For structures $\relstr{A}$ including all permutation relations on $A$,
    \cite{DeHaan25:approx} established a constant-factor approximation dichotomy
    of $\MinCostHom(\relstr{A})$; the approximation boundary is delineated by the
        existence of a conservative majority polymorphism of $\relstr{A}$ in this case.

  \end{itemize}

  \item 
    For graphs $\relstr{A}$, \cite{Rafiey19:icalp} claimed a constant-factor
    approximation dichotomy of $\MinCostHom(\relstr{A})$, with the approximation
    boundary being delineated by the presence of a conservative majority polymorphism of
    $\relstr{A}$.
    However, this result is incorrect, as observed in~\cite{DeHaan25:approx}:
    $\MinCostHom(\relstr{G})$ for the structure $\relstr{G}$
    from~\Cref{fig:counter} gives a
    counterexample (assuming P$\neq$NP and the UGC). 
\end{itemize}

For which structures $\relstr{A}$ is $\MinCostHom(\relstr{A})$ constant-factor
approximable? Can the approximability boundary of 
$\MinCostHom(\relstr{A})$ be expressed in terms of
algebraic properties of $\relstr{A}$, just like for (many variants of) CSPs and for the studied
special cases of $\MinCostHom$s discussed above?

\paragraph{Contributions}

Our main contribution can be stated as a single theorem: a dichotomy. First, we
give a definition.
For $k\geq 2$, let $Z_k=\{(a,a+1 \bmod k):a\in\{0,\dots,k-1\}\}$. The
\emph{Cyclic Unique Games} problem has alphabet $\{0,\dots,k-1\}$ and directed
constraints $Z_k$. The deletion version, denoted by $\cUGC(k)$,
charges cost $0$ to tuples in $Z_k$ and positive cost to tuples outside
of $Z_k$. Put differently, $\cUGC(k)$ is a MinCSP in which every variable has
domain $\{0,\ldots,k-1\}$ and the only allowed constraint gives zero cost to tuples in $Z_k$ and
positive cost to tuples not in $Z_k$.\footnote{For constant-factor
approximation, positive costs and having, say, all costs equal to 1 is
equivalent.}

\begin{theorem}\label{thm:DichotomyIntro}
Let $\relstr{A}$ be a finite relational structure admitting a conservative majority polymorphism. Then exactly one of the following holds:
\begin{itemize}
    \item $\MinCostHom(\relstr{A})$ admits a randomized polynomial-time constant-factor approximation algorithm;
    \item a fixed-alphabet cyclic Unique Games deletion problem reduces to $\MinCostHom(\relstr{A})$.
\end{itemize}
\end{theorem}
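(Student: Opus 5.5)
We prove the dichotomy by isolating a combinatorial obstruction, a \emph{cyclic pattern} in the pp-definable binary relations of $\relstr{A}$, and showing two things: if a cyclic pattern is present we can reduce $\cUGC(k)$ to $\MinCostHom(\relstr{A})$; if it is absent, an LP-based rounding, guided by the conservative majority polymorphism $m$, gives a constant factor. Both directions rest on a standard fact for structures with a majority polymorphism: every relation is determined by its binary projections (Baker--Pixley). Hence every instance is equivalent to a binary instance over the conservative expansion, using pp-definable binary relations $R\subseteq D_x\times D_y$ on subsets of $A$. Since $\MinCostHom$ is preserved under pp-definitions with equality and unary relations (costs stay on original variables, auxiliary variables get cost $0$), we may work throughout with the finite set $\Gamma$ of all binary relations pp-definable in $\relstr{A}'$.

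\textbf{Obstruction and hardness.} We call $\relstr{A}$ \emph{cyclic} if there are pairwise-disjoint two-element subsets, or more generally pairs $a_i\neq b_i$ for $i\in\mathbb{Z}_k$, and relations $R_i\in\Gamma$ such that:
\begin{itemize}
\item $R_i\cap(\{a_i,b_i\}\times\{a_{i+1},b_{i+1}\})$ is the graph of a bijection;
\item composing these bijections around the cycle gives the transposition $a_0\leftrightarrow b_0$.
\end{itemize}
This is the shape of the triangle-with-a-loop example. Given such a cycle, a $\cUGC(k)$ instance is encoded as follows. Each variable $v$ becomes a block of variables $x_{v,j}$ restricted to $\{a_j,b_j\}$, linked by the bijective relations, so that a consistent assignment to the block encodes an element of $\mathbb{Z}_{2k}$, or of $\mathbb{Z}_k$ after folding. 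A deleted constraint is simulated by letting one variable take a third "escape" value at bounded positive cost. The cost function is then used to charge the escape, which keeps the reduction approximation-preserving with constant loss.

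\textbf{Tractability.} Assume no cyclic pattern exists. We solve the basic LP (or Sherali--Adams level $3$) relaxation and round. The point of the absence of cycles is that one can build a potential-like order on pairs (variable, value): for every binary relation restricted to two-element domains, the induced $2$-SAT implications form an acyclic structure up to "monotone" pieces. A conservative majority restricted to any two-element subset is the Boolean majority, so every such restriction is a $2$-SAT relation. If odd cycles of negations are absent, these $2$-SAT relations become "bijunctive with consistent orientation", i.e.\ they are monotone (implicative/\emph{IHS}-like) after renaming. Boolean $\MinCostHom$ over IHS-$2$ relations is known to be $2$-approximable by LP threshold rounding (Khanna et al., via the Hochbaum-style half-integrality argument).

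The plan is to:
\begin{enumerate}
\item prove that acyclicity yields a global linear order on each domain $D_x$ such that all relations in $\Gamma$ are "monotone-closed" (closed under a conservative min/max-like pair on the orders), lifting the Boolean renaming consistently;
\item round by picking a random threshold $\theta\in[0,1]$ and assigning each $x$ the least value whose cumulative LP mass exceeds $\theta$.
\end{enumerate}
Closure under the lattice-like operations keeps the rounded assignment feasible. The expected cost is bounded by $|A|\cdot\mathrm{LP}$, since each value is chosen with probability at most a constant multiple of its LP weight.

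\textbf{Main obstacle.} The hardest step is item~1: converting the purely local absence of cyclic patterns into a global consistent orientation, since the orientations of different two-element subdomains interact through $m$. We would first try a compactness/propagation argument: fix an orientation on one pair and propagate through all relations in $\Gamma$. A contradiction in the propagation yields a closed walk, and we must show that it can be shortened to a cyclic pattern as defined above. Using majority to shorten such walks, by pp-defining composed relations and intersecting, is where we expect the real technical work. If this fails, we would fall back on an iterative rounding that processes pairs one at a time, with the per-step loss bounded by $|A|^2$.
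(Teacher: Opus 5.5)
\paragraph{The obstruction is wrong, so the hardness step fails.} Your cyclic pattern occurs in templates that are constant-factor approximable. In 2-SAT, $R_{00}\cap R_{11}$ is ${\neq}$ on $\{0,1\}$, so $k=1$, $R_0={\neq}$ is already a cyclic pattern. With disjoint pairs, take $\{(0,2),(1,3)\}$ and $\{(2,1),(3,0)\}$ on $\{0,1,2,3\}$: they compose to the transposition of $\{0,1\}$ and are preserved by the dual discriminator, so this template is approximable by DeHaan, Huang and Lee. Hence, assuming the UGC, no reduction from $\cUGC(k)$ can follow from your condition.

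The step that actually breaks is the ``escape value''. Your definition only constrains $R_i\cap(\{a_i,b_i\}\times\{a_{i+1},b_{i+1}\})$. It gives no reason why, once a variable leaves these pairs, the hard constraints of the gadget remain satisfiable for every pair of endpoint values; in 2-SAT there is no third value at all. This saturation property is exactly what the paper's criterion supplies. It uses an ac-leq-cycle pair $(\inst I,\inst J)$ with $\inst I$ unsolvable:
\begin{itemize}
\item the values in $\inst J\setminus\inst I$ are the paid escapes;
\item cycle consistency of $\inst J$ makes the gadget $(k+1)p$ connect all of $C\times C$ in $\inst J$, while in $\inst I$ it is still the blown-up directed cycle (Lemma ``Saturation on cycle blow-ups'');
\item the cycle, grid and linked patterns are extracted from a minimal such pair via Theorem~\ref{thm:AC}.
\end{itemize}
Your transposition pattern is essentially the linked case of that theorem, but it is only useful when it lives inside an arc-consistent $\inst I$ sitting below a cycle-consistent $\inst J$. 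In the triangle with a loop, the looped vertex plays the role of $\inst J\setminus\inst I$, and your abstraction discards it. Separately, a chain of bijections composing to a transposition has two consistent assignments, or none once closed, so your block does not encode $\mathbb{Z}_k$.

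\paragraph{The tractability plan fails independently at item 1.} The MinVC template $\{(0,1),(1,0),(1,1)\}$ has no cyclic pattern: everything pp-definable from it and unary relations is closed under $\vee$, so ${\neq}$ is not definable. Yet the constraint $x\vee y$ is closed under coordinatewise min and max only when $x$ and $y$ carry opposite orders on $\{0,1\}$. On a triangle instance, some edge must therefore have equal orders, even if you allow one order per variable. On the all-$\tfrac12$ LP solution, your single-threshold rounding then gives both endpoints of that edge the value $0$ with probability $\tfrac12$, which is infeasible. More generally, the true tractable side contains 2-SAT, MinVC and dual-discriminator templates, none of which has lattice-type polymorphisms, so no rounding based on min/max closure can cover it.

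\paragraph{What the paper does instead.} The algorithm has four stages:
\begin{itemize}
\item $(2,3)$-consistency produces a cycle-consistent $\inst J$;
\item the BLP is solved on $\inst J$;
\item a sequence of exponentially decreasing random thresholds $t_i\in(2^{-i},2^{-(i-1)}]$ assigns each variable a level $L(v)$ and truncates its potato, with cost controlled by $\Pr[L(v)=i]\le C2^{-i}$;
\item the truncated instance is shown solvable level by level.
\end{itemize}
In the last stage, pinned realizations of augmented universal covers are combined by the conservative majority, and the tractable-side hypothesis is then invoked on an unfolded ac-leq-cycle pair. Your proposal has no counterpart of this last step, and it is precisely where the dichotomy's hypothesis is used.
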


In order to prove~\Cref{thm:DichotomyIntro}, we analyze instances of $\MinCostHom(\relstr{A})$ via techniques from the algebraic
approach to CSPs~\cite{BKW17}, including the notion of absorption~\cite{Barto12:lmcs}. 
The crux of the proof, both
in the correctness analysis of our algorithm and in the reduction from $\cUGC(k)$,
lies in the intricate interplay between two notions of consistency, arc-consistency and cycle-consistency. 

The algorithm in~\Cref{thm:DichotomyIntro}
uses an LP rounding based on a sequence of
exponentially decreasing thresholds; it is significantly more involved
than the algorithms (one greedy, one LP-based) for $\MinCostHom$s with a dual discriminator polymorphism
considered in~\cite{DeHaan25:approx}. 

The hardness result is obtained by a reduction from $\cUGC(k)$. We note that this is
different from how inapproximability is established in~\cite{DeHaan25:approx},
where gadget reductions from the Nearest Codeword
Problem~\cite{dumer2003hardness,cheng2012deterministic} (for $\MinCostHom$s of
unbounded width) and from MinVC in hypergraphs~\cite{DinurGKR05} (for
$\MinCostHom$s of bounded width but with no near-unanimity polymorphism) are
established.

Why is $\cUGC(k)$ NP-hard to approximate within any constant factor?
For $k=2$, $\cUGC(k)$ is the MinUnCut problem~\cite{Agarwal05:stoc}, asking to partition the
vertex set of a given graph while minimizing the number of edges with both
endpoints in the same part.
It is NP-hard to approximate (edge-weighted) MinUnCut
within any constant factor assuming the UGC~\cite{Khot07:sicomp}.
For $k\geq 2$, we believe that NP-hardness of constant-factor inapproximability
is known. In fact, ChatGPT suggested that this is folklore but
we have not found any reference. For completeness, we provide a proof, which
uses Raghavendra's result~\cite{Raghavendra08:everycsp}. It boils down to 
constructing solutions to the basic SDP relaxation with large integrality gaps.
We present a construction that was suggested by ChatGPT and that uses directed
cycles, motivated by the work of Feige and Schechtman on integrality gaps for MaxCut~\cite{Feige02:rsa}.

\paragraph{Related work}

Exact solvability of $\MinCostHom(\relstr{A})$ was established for graphs $\relstr{A}$
by Gutin, Hell, Rafiey, and Yeo~\cite{Gutin08:ejc-mincostdichotomy},
for digraphs $\relstr{A}$ by Hell and Rafiey~\cite{Hell12:sidma}, and finally
for all relational structures $\relstr{A}$ by Takhanov~\cite{Takhanov10:stacs}.
As a matter of fact, the complexity of finding an optimal solution follows from
a more general framework, called conservative general-valued CSPs, whose
complexity was established by Kolmogorov and \v{Z}ivn\'y~\cite{kz13:jacm},
see also~\cite{tz17:sicomp} for a simpler proof.
Going beyond $\MinCostHom$s, the works of Kozik and Ochremiak~\cite{Kozik15:icalp} and Kolmogorov, Krokhin, and
Rol\'inek~\cite{Kolmogorov17:sicomp} classified all general-valued CSPs with
respect to exact solvability.

\paragraph{Structure of the paper}

The paper is structured as follows.
The following section contains preliminaries and definitions.
In \Cref{sec:result} we reformulate the main result,
and define a property that distinguishes the two cases of the dichotomy.
In the same section, we exhibit~%
an example showing that the distinguishing feature cannot be a term-condition
(the observation is due to Zarathustra Brady).
\Cref{sec:hardness} contains the proof of hardness,
and \Cref{sec:tractability} contains the proof of tractability.
The most technical parts of the proofs are deferred to the appendices:
\Cref{sec:AC} proves the core result for the hardness case
of~\Cref{thm:DichotomyIntro},
\Cref{sec:tractability-rounding-cost} and \Cref{sec:tractability-solvability}
gives details of the tractability case of~\Cref{thm:DichotomyIntro}, and finally
\Cref{sec:directed-cycle-gap} establishes inapproximability of $\cUGC(k)$ for
$k\geq 2$.

\section{Preliminaries}\label{sec:preliminaries}

Fix a finite set $A$. We call $R\subseteq A^m$ a \emph{relation} on $A$ of
arity $m=\ar(R)$, and say that $R$ is \emph{full} if $R=A^m$. 
If $m=2$ then $R$ is called \emph{binary}. A \emph{relational
structure} $\relstr{A}=(A;R_1,\ldots,R_p)$ is a collection of relations 
$R_i$, $1\leq i\leq p$, 
on the same domain $A$. A relational structure $\relstr{A}$ is called
\emph{finite} if its domain $A$ is finite. All relational
structures in this paper are finite.

A $k$-ary operation $f:A^k\to A$ is called a \emph{polymorphism} of
$\relstr{A}=(A;R_1,\ldots,R_p)$ if $f$ preserves all relations in $\relstr{A}$; 
i.e., for every $1\leq i\leq p$ and every possible $m\times k$
matrix whose columns belong to $R_i$, where $m=\ar(R_i)$, the application of $f$ onto the rows of
the matrix gives a tuple that belongs to $R_i$. We denote by $\pol(\relstr{A})$
the set of all polymorphisms of $\relstr{A}$.

\begin{definition}
  Fix a relational structure $\relstr{A}=(A;R_1,\ldots,R_p)$. The \emph{Minimum
Cost Constraint Satisfaction Problem} over $\relstr{A}$, denoted by $\MinCostHom(\relstr{A})$, is a
  triple $(V,\mathcal{C},\{c_x\}_{x\in V})$, where $V$ is a 
set of variables; $\mathcal{C}$ is a set of constraints, and each constraint
$(\mathbf{x},R)\in\mathcal{C}$ is a pair with $\mathbf{x}\in V^m$ an $m$-tuple
of variables from $V$, and $R=R_i$ a relation of arity $\ar(R)=m$ for some
$1\leq i\leq p$; and $c_x:A\to\mathbb{Q}_{\geq 0}\cup\{\infty\}$. An
assignment
$\alpha:V\to A$ is a \emph{solution} if $\alpha(\mathbf{x})\in R$
for every $(\mathbf{x},R)\in\mathcal{C}$, where $\alpha$ is applied
componentwise. The \emph{cost} of a solution $\alpha$
is defined as $\cost(\alpha)=\sum_{x\in V}c_x(\alpha(x))$. The goal is to
find a solution of minimum cost.
\end{definition}

If $c_x=0$ for every $x\in V$ then $\MinCostHom(\relstr{A})$ is just the \emph{Constraint Satisfaction Problem} 
$\CSP(\relstr{A})$~\cite{Feder98:sicomp}.
It is known that the computational complexity of $\CSP(\relstr{A})$ is
determined by
$\pol(\relstr{A})$~\cite{Jeavons97:jacm,BulatovJeavonsKrokhin2005} in the sense
that if there is a clone homomorphism from $\pol(\relstr{A})$ to
$\pol(\relstr{B})$ then $\CSP(\relstr{B})$ polynomial-time reduces to $\CSP(\relstr{A})$ (via
so-called ``gadget reductions'', cf.~\cite{BKW17}).

If $c_x(a)\in\{0,\infty\}$ for every $x$ and $a$ then $\MinCostHom(\relstr{A})$ is $\CSP(\relstr{A}')$, where $\relstr{A}'$ is $\relstr{A}$ expanded with all unary relations on $A$; as we discussed in~\Cref{sec:intro}, such CSPs are known as \emph{conservative} CSPs~\cite{Bulatov11:tocl}. It is easy to see that the costs 0 and $\infty$ in the cost functions $c_x$ allow for simulating all unary relations. Thus, without loss of generality we can assume that every polymorphism of $\relstr{A}$ is conservative~\cite{DeHaan25:approx}, and we can also assume that the range of $c_x$ is included in $\mathbb{Q}_{\geq 0}$ (as domain values $a$ with $c_x(a)=\infty$ cannot be part of any solution).

\section{The result}\label{sec:result}
We now reformulate the result in purely combinatorial terms;
in order to do so, we need to introduce standard definitions and recall some known facts.

Let us begin with an arbitrary finite relational structure $\relstr{A}$ with a conservative majority polymorphism.
By Baker--Pixley~\cite{BakerPixley}, 
we can replace every relation in $\relstr{A}$ by its binary projections
to obtain an equivalent~%
(i.e. constant-factor approximability of the problem is preserved)
relational structure.
We substitute $\relstr{A}$ by this equivalent structure and,
from this point on,
 assume that all relations in $\relstr{A}$ are binary.
It simplifies the presentation of the result:
all the constraints on input are binary.
The condition distinguishing the two cases of the dichotomy 
depends on properties of CSP instances over $\relstr{A}$.

An \emph{instance} $\inst I$ of $\CSP(\relstr{A})$ has a variable set $V$, 
and, for each ordered pair $u,v\in V$, a~%
set $\constr{R}_{uv}$ of binary relations called \emph{constraints}. 
We assume that $\constr{R}_{uv}$ and $\constr{R}_{vu}$ 
contain opposite readings of the same relations.
We stress that $\constr{R}_{uu}$ satisfy no additional conditions;
it can contain any binary relations and the opposite readings of relations from $\constr{R}_{uu}$
are in $\constr{R}_{uu}$ as well.
An instance of $\MinCostHom(\relstr{A})$ is an instance of $\CSP(\relstr{A})$
together with a cost function $c_x:A\to\mathbb{Q}_{\geq 0}$ for every variable
$x\in V$.

Note that sometimes,
e.g., when implementing the algorithm in \Cref{sec:tractability},
we work with instances of simpler structure.
However, in other cases we need the definition above 
in its full generality.

An instance $\inst{I}$ is \emph{arc-consistent} if 
    there exist sets $\{A_u^\inst I\}_{u\in V}$ such that,
    for every variable $u$ and $v$ and every $R\in \constr{R}_{uv}$, 
    one has $\proj_1(R)=A_u^\inst I$ and $\proj_2(R)=A_v^\inst I$.%
    \footnote{Where $\proj_1(R)=\{a\mid (a,b)\in R\}$ 
    and $\proj_2(R)=\{b\mid (a,b)\in R\}$.}
These sets $A_u^\inst I$ are called,
for historical reasons,
\emph{potatoes}.
Let $\inst I$ and $\inst J$ be arc-consistent instances on the same variable set $V$. We write $\inst I\leq \inst J$ and say that $\inst I$ is a \emph{subinstance} of $\inst J$ if, for every $u\in V$, $A_u^\inst I\subseteq A_u^\inst J$, and, for all $u,v\in V$,
\[
\constr{R}^{\inst I}_{uv}=\{R\cap (A_u^\inst I\times A_v^\inst I): R\in \constr{R}^{\inst J}_{uv}\}.
\]
Thus, the order $\leq$ always means ``has smaller potatoes and inherited constraints''.

A \emph{pattern} $p$ from $u$ to $v$~%
(in instance $\inst I$) is a sequence of relations $R_1,\dots,R_n$ 
and variables $u=w_0,w_1,\dots,w_n=v$ 
such that $R_i\in \constr{R}^{\inst I}_{w_{i-1}w_i}$ for each $i$. 
The pattern $p$ connects $a$ to $b$ if there exist values $a=a_0,a_1,\dots,a_n=b$ 
such that $(a_{i-1},a_i)\in R_i$ for all $i$. 
If $p$ is a pattern from $u$ to $v$ and $q$ is a pattern from $v$ to $w$;
we write $p+q$ for the concatenation~%
(removing one copy of $v$ at the junction)
of $p$ and $q$, which is a pattern from $u$ to $w$.
For an integer $r\geq 1$, the notation $rp$ denotes the concatenation of $r$ copies of $p$.

An arc-consistent instance is \emph{cycle consistent} if, 
for every variable $u$, every pattern $p$ from $u$ to itself, 
and every $a\in A_u^{\inst I}$, the pattern $p$ connects $a$ to itself.

\begin{definition}
A pair $(\inst I,\inst J)$ is an \emph{ac-leq-cycle pair} if $\inst I$ is arc-consistent, $\inst J$ is cycle consistent, and $\inst I\leq \inst J$.
\end{definition}

\noindent
The following theorem augments \Cref{thm:DichotomyIntro}
by stating the condition distinguishing the two cases of the dichotomy in purely combinatorial terms.
\begin{theorem}\label{thm:HardnessMain}
Let $\relstr{A}$ be a finite relational structure admitting a conservative majority polymorphism. Then exactly one of the following holds:
\begin{itemize}
    \item there exists an ac-leq-cycle pair $(\inst I,\inst J)$ such that $\inst I$ has no solution; in this case a fixed-alphabet cyclic Unique Games deletion problem reduces to $\MinCostHom(\relstr A)$;
    \item for every ac-leq-cycle pair $(\inst I,\inst J)$, the instance $\inst I$ has a solution; in this case $\MinCostHom(\relstr A)$ admits a randomized polynomial-time constant-factor approximation algorithm.
\end{itemize}
\end{theorem}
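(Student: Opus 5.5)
The two alternatives in the statement are complementary by definition: either some ac-leq-cycle pair has an unsolvable $\inst I$, or none does. So "exactly one" is automatic on the combinatorial side, and the work is to prove the two implications. Since $\cUGC(k)$ is hard to approximate within any constant factor (under UGC, via the appendix gap construction), the two conclusions cannot both hold. I would therefore prove (a) a gap-preserving reduction from $\cUGC(k)$ in the first case, and (b) an LP-rounding constant-factor algorithm in the second case.

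\emph{Hardness.} Fix an ac-leq-cycle pair $(\inst I,\inst J)$ with $\inst I$ unsolvable. First I would use the conservative majority polymorphism and absorption: $\inst I$ is a subinstance of the cycle-consistent $\inst J$ with smaller potatoes. The goal is to extract, by pp-definitions over $\relstr A$ with unary relations, a variable $u$, a pattern $p$ from $u$ to $u$ in $\inst J$, and sets $B\subseteq A_u^{\inst J}$ on which $p$ restricted through $\inst I$ acts as a cyclic permutation of some length $k\geq 2$. This must hold inside $\inst I$, while $\inst J$'s cycle consistency keeps the identity available on the larger potatoes.

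The key lemma, which I expect to be the main obstacle, is this step: arc-consistency of $\inst I$ together with unsolvability must force a nontrivial cyclic behaviour. I would argue by minimality. Shrink $\inst I$ to a minimal unsolvable arc-consistent subinstance below $\inst J$. Use Baker--Pixley and the majority to reduce to binary patterns. Then show that the relation connecting $A_u^{\inst I}$ to itself along long patterns $rp$ is a permutation whose orbits are nontrivial; otherwise a solution could be built by majority-combining partial solutions.

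Given the cyclic gadget, reduce $\cUGC(k)$ as follows. Each cUG variable becomes a copy of $u$, and each constraint becomes a copy of the pattern $p$. Cost $0$ is placed on values in $\inst I$'s potatoes and cost $1$ on the extra values available only in $\inst J$. Satisfying cyclic constraints then corresponds to cheap assignments, and each violated cUG constraint can be repaired by routing through $\inst J$'s identity-connecting values at cost $O(1)$ (cycle consistency guarantees this). This gives a cost-preserving reduction up to constants.

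\emph{Tractability.} Solve the basic LP relaxation, with variables $\lambda_x(a)$ and pairwise marginals on constraints. Then round using random thresholds drawn from an exponentially decreasing sequence $\tau_1>\tau_2>\dots$: at level $i$, potatoes $A_x^{(i)}=\{a:\lambda_x(a)\geq \tau_i\}$, enforced to arc-consistency. I would show two things. First, the sequence of instances yields ac-leq-cycle pairs; the LP support is cycle consistent after pruning, using the majority polymorphism, and the thresholded arc-consistent closure sits below it. By hypothesis, a solution therefore exists inside the small potatoes, and can be found since a majority implies bounded width. Second, the probability that a value $a$ with small LP weight is kept is proportional to $\lambda_x(a)$ up to a constant, by the geometric choice of thresholds and randomization of the starting level. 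This gives expected cost $O(\mathrm{LP})$.
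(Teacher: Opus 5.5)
\paragraph{Gap in the hardness half.} Your hardness argument rests on a key lemma: a minimal unsolvable arc-consistent $\inst I$ below a cycle-consistent $\inst J$ contains a pattern acting as a permutation with a nontrivial orbit. This lemma is false.

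Take $A=\{0,1,2\}$, $T=\{(a,2),(2,a):a\in A\}$, $R=\{(0,1),(1,0),(1,1)\}\cup T$ and $S=\{(0,0),(0,1),(1,0)\}\cup T$. The operation returning the repeated value on triples with a repetition, and $2$ on triples of pairwise distinct elements, is a conservative majority polymorphism. Put both $R$ and $S$ on every edge of a triangle.
\begin{itemize}
\item With potatoes $A$ the instance is cycle consistent: every closed walk can route through $2$.
\item With potatoes $\{0,1\}$ it is arc-consistent and unsolvable, since $R\cap S$ restricts to $\neq$ on an odd cycle. It is also already minimal.
\end{itemize}
Yet inside $\{0,1\}$ every pattern relation is $\{0,1\}^2$ minus at most one pair, because compositions of OR and NAND stay of this form. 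And $2$ is adjacent to everything. So no single pattern, even after unary restrictions, acts as a cyclic permutation of length $\geq 2$.

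What you are missing is the third alternative of \Cref{thm:AC}. There one has closed patterns $p_{00}$ and $p_{11}$ defining $R_{00}$ and $R_{11}$; here these are OR,NAND,OR and NAND,OR,NAND around the triangle. They are used as two separate gadgets per $Z_2$-constraint, so that together they force $\neq$ at cost $0$ while each is still realizable in $\inst J$. This alternative arises exactly when dual vertices $(0,1;x)$ and $(1,0;x)$ share a strongly connected component of $G(\inst I)$, i.e.\ when no conservative commutative binary instance polymorphism exists. When one does exist, a proper linked pattern relation yields a proper $2$-absorbing subset, hence a proper arc-consistent subinstance.

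More generally, the non-cyclic branch never yields a solution by ``majority-combining partial solutions''; the majority is not used there at all. It yields a proper arc-consistent subinstance contradicting minimality, obtained either by transferring a maximal pattern-definable equivalence or from $2$-absorption. A smaller point: one copy of $p$ per constraint does not suffice for repairs, because cycle consistency only adds $a\mapsto a$; you need $(k+1)p$.

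\paragraph{Gap in the tractability half.} Here the hole is parallel. Thresholding all variables at a common $\tau_i$ and taking the arc-consistent closure gives no guarantee that the closure is nonempty. LP propagation only gives $\lambda_y(a+R)\geq\lambda_x(a)\geq\tau_i$, and that mass may be spread over values each below $\tau_i$ (a threshold conflict). A common threshold will typically hit such a conflict at some variable.

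The paper instead gives each variable its own level $L(v)$, chosen by the witness procedure to be good for $v$. The bound $\Pr[L(v)=i]\leq C2^{-i}$ is what yields $\Pr[a\in A_v^{\inst I}]\leq 2C\lambda_v(a)$. But then the truncated instance is in general not arc-consistent across levels: a value at a variable of larger level may be supported only by values deleted at a neighbour of smaller level. So the hypothesis cannot be applied to it directly.

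The missing idea is the level-by-level induction:
\begin{itemize}
\item level instances $\inst I_i^+$ built from patterns through higher-level variables, which are arc-consistent by mass survival;
\item augmented universal covers whose lower-level leaves are pinned to previously chosen values, with pin sets of size $\geq 3$ handled by the conservative majority;
\item an unfolding of the resulting compatible subinstance together with $\inst J_i^+$ into an ac-leq-cycle pair, to which the hypothesis is applied.
\end{itemize}
Also, the ambient cycle consistency comes from $(2,3)$-consistency; the majority is not needed for it.
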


\subsection{Why the condition is non-algebraic}

The condition in \Cref{thm:HardnessMain} is non-algebraic. 
In fact, no term condition can distinguish the two cases of the dichotomy
even in the presence of majority polymorphisms, as we now show.

The authors of~\cite{DeHaan25:approx} show that the graph $\relstr G$ in~\Cref{fig:counter}
is compatible with a conservative majority operation that returns the top element on every
triple of distinct elements. 
They also prove that $\MinCostHom(\relstr G)$ is not constant-factor
approximable.

\begin{figure}[htb]\begin{center}
\begin{tikzpicture}[vertex/.style={circle, draw, fill=white, minimum size=7mm},edge/.style={thick},scale=1.0]
\node[vertex] (t) at (1,1.6) {};
\node[vertex] (c) at (0,0) {};
\node[vertex] (d) at (2,0) {};
\draw[edge, -] (t) to[out=130, in=50, looseness=7] (t);
\draw[edge] (t) -- (c);
\draw[edge] (t) -- (d);
\draw[edge] (c) -- (d);
\end{tikzpicture}
\end{center}
  \caption{Graph $\relstr{G}$ admitting a conservative majority polymorphism but
  $\MinCostHom(\relstr{G})$ not constant-factor
  approximable.}\label{fig:counter}
\end{figure}

By contrast, a Boolean relational structure defining 2-SAT, $\relstr{A}:=(\{0,1\};R_{00},R_{01},R_{10},R_{11})$~%
(with $R_{ij} = \{0,1\}^2\setminus\{(i,j)\}$)
is compatible with a conservative majority operation, 
and $\MinCostHom(\relstr{A})$ is constant-factor approximable.
However, as observed by Zarathustra Brady, 2-SAT pp-constructs the graph
$\relstr{G}$ from~\Cref{fig:counter}:
let $R_{00}$ be the vertex set of the graph. 
Put an edge between $(x,y)$ and $(x',y')$ whenever $R_{00}(x,x')\wedge R_{00}(y,y')$ holds. 
This defines the same graph, 
with $(1,1)$ as the top element, and closes the pp-construction.

Thus, every term condition satisfied by the polymorphisms $\pol(\relstr{A})$ of 2-SAT is also satisfied by the polymorphisms $\pol(\relstr{G})$ of
the graph $\relstr{G}$. 
Consequently, term conditions, such as being a majority operation or a
near-unanimity operation, cannot distinguish the two cases of our main result.

\section{Hardness}\label{sec:hardness}

The hardness proof uses the following contrapositive principle. If there is an ac-leq-cycle pair $(\inst I,\inst J)$ such that the smaller instance $\inst I$ has no solution, then the excess values in $\inst J\setminus \inst I$ can be used as paid violations. This simulates the deletion version of a fixed-alphabet cyclic Unique Games problem.

\subsection{Proof of hardness in \texorpdfstring{\Cref{thm:HardnessMain}}{the main theorem}}

The hardness will be obtained by a reduction using 
\Cref{thm:AC}
with a proof deferred to the appendix.
The theorem builds on two notions: 
a \emph{directed cycle}
is defined in \Cref{sec:intro} as  $Z_k=\{(a,a+1 \bmod k):a\in\{0,\dots,k-1\}\}$~%
(in this section we allow $k=1$,
which makes $Z_1$ a digraph with one vertex and a self-loop),
and a union of directed cycles is a disjoint union of some directed cycles.
A \emph{grid} is a directed graph on $\{0,\dotsc,k-1\}^2$ with $((a,b),(c,d))$ if and only if $b=c$.
Finally, a \emph{blow-up} of a digraph is obtained by replacing each vertex 
by a non-empty set 
and each arc by a complete directed bipartite graph.

\input{key-structures-figure}

The main tool of the reduction will be the following combinatorial result about arc-consistent instances.

\begin{restatable}{theorem}{AC}\label{thm:AC}
Assume that $\inst I$ is binary, arc-consistent
and has at least one potato with at least two elements. Then:
\begin{enumerate}
    \item\label{it:ac-subinstance} there exists a proper~%
    (i.e., at least one $A_u^{\inst I'}\varsubsetneq A_u^{\inst I}$) 
    arc-consistent subinstance $\inst I' \leq \inst I$, or
    \item\label{it:ac-cycle} there is a variable $x$ and a pattern $p$ from $x$ to $x$,
    such that the ``connected by $p$'' relation on $A_x^\inst I$ 
    defines 
    a blow-up of a disjoint union of directed cycles at least one of length $\geq 2$, or
    a blow-up of a grid of size $\geq 2$, or 
    \item\label{it:ac-linked} there exists a variable $x$, two elements $0,1\in A_x^\inst I$, 
    and patterns $p_{00}$ and $p_{11}$ from $x$ to $x$ such that, 
    after restricting intermediate potatoes to suitable two-element subsets, 
    $p_{00}$ defines $R_{00}$ or $\neq$ and $p_{11}$ defines $R_{11}$ or
    $\neq$.
\end{enumerate}
\end{restatable}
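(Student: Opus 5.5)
Throughout, I use the conservative majority polymorphism $m$ (standing assumption of this section), so every potato and every pp-definable relation is preserved by $m$. I assume item~\ref{it:ac-subinstance} fails: $\inst I$ is \emph{minimal}, i.e., every proper choice of smaller potatoes kills arc-consistency. The goal is then to produce item~\ref{it:ac-cycle} or item~\ref{it:ac-linked}.

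\textbf{Step 1: minimality as a connectivity statement.} For a variable $x$ and a pattern $p$ from $x$ to $x$, write $\to_p$ for the ``connected by $p$'' digraph on $A_x^{\inst I}$. For any $a\in A_x^{\inst I}$, the set reachable from $a$ by patterns starting at $x$ defines sets $B_y$ at every $y$. Restricting all constraints to these $B_y$ yields an arc-consistent subinstance; this is the standard argument, using that patterns compose and can be reversed via the opposite readings. Minimality therefore forces $B_y=A_y^{\inst I}$ for every $y$. So the union over patterns of the relations $\to_p$ is strongly connected on every potato. The same argument applies to any nonempty subset $S\subseteq A_x$ that is closed under a family of patterns. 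Hence no proper nonempty subset of a potato is invariant under all patterns simultaneously.

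\textbf{Step 2: the case of a pattern that is not total.} Fix $x$ with $|A_x|\ge2$.

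First suppose some pattern $p$ gives $\to_p$ that is not the full relation. Using conservativity together with the majority, I expect to use the absorption-type lemma for conservative majority algebras. If some $b$ is reachable from $a$ via $p$ but $a$ is reachable only from itself, then $\{a\}$ binary-absorbs pieces of the relation. Restricting the potato to a two-element subset $\{0,1\}$, the relation defined by $p$ is a majority-closed Boolean binary relation, i.e., a 2-SAT clause, an equality, or $\neq$.

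Next I analyze the two-element restrictions along the intermediate variables. If I find patterns realizing both ``$0$ cannot go to $0$'' and ``$1$ cannot go to $1$'' type clauses, then concatenation and reversal give $R_{00}$ or $\neq$ and $R_{11}$ or $\neq$, which is item~\ref{it:ac-linked}. Otherwise all such restrictions are oriented consistently, for instance always implications towards $0$. In that case the set of ``bottom'' elements is invariant under all patterns, which contradicts Step 1 unless the relation is a permutation-like structure.

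\textbf{Step 3: the permutation-like case.} Here every $\to_p$ is, up to blow-up, functional. Take $p$ to be the concatenation of many cycles through all variables, so that $\to_p$ is the composition of all patterns. Its iterates eventually stabilize to an idempotent relation $\to_{rp}$. A relation preserved by a conservative majority whose relevant composites are rectangular is a blow-up of a functional digraph. Iterating such a relation yields either a disjoint union of directed cycles, or, where rectangularity links two coordinates, the grid shape.

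If every cycle has length $1$ and there is no grid, the relation is an equivalence. The classes then give a proper invariant subset, contradicting Step 1 as long as $|A_x|\ge2$ and there is more than one class. If there is only one class, the relation is full, which is the total case.

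\textbf{Main obstacle.} The main obstacle is the case where all patterns are total on some potato, and more generally showing that the dichotomy ``two opposite clauses or rectangular structure'' is exhaustive. For this I would first try choosing $x$ and a two-element subset minimizing the number of full patterns. From that choice I would apply the majority to three patterns to extract a non-total one. Making this step rigorous, including how intermediate potatoes are restricted to two-element subsets while keeping those restrictions consistent with the same patterns, is where I expect most of the work to go.
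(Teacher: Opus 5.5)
Your Step~1 is essentially right: with connected input, a component of the microstructure is already a proper arc-consistent subinstance. From Step~2 on, however, the proposal does not prove the theorem, and what you call the main obstacle is precisely its content.

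\textbf{Step 2.} The majority gives you no leverage here. Every binary relation on a two-element set is majority-closed, and every singleton is absorbing with respect to a majority. More seriously, the ``oriented consistently'' case does not yield an invariant set of ``bottom'' elements. A consistent choice on two-element pairs is a tournament on each potato, and it may be cyclic. Its smallest dominating set is then the whole potato, so nothing contradicts Step~1.

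The missing idea is the paper's instance-level \emph{binary} polymorphism. Form the digraph $G(\inst I)$ on triples $(a,b;x)$, $a\neq b$, with an arc to $(c,d;y)$ when some $R\in\constr{R}_{xy}$ contains $(a,c),(b,d)$ but not $(a,d)$.
\begin{itemize}
\item If dual vertices $(0,1;x),(1,0;x)$ lie in one strongly connected component, the two connecting paths are directly $p_{00}$ and $p_{11}$. The two-element restrictions are read off the vertices of the paths, which also resolves your worry about intermediate potatoes.
\item Otherwise, processing sink components gives a conservative commutative family $\{f_x\}_x$ compatible with all constraints. Any pattern defining a proper linked relation then yields a central relation, hence a proper 2-absorbing subset. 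Choosing minimal 2-absorbing subsets at every variable gives a proper arc-consistent subinstance.
\end{itemize}
The paper never uses the majority in proving this theorem.

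\textbf{Step 3.} This step fails as written. An idempotent power of a blow-up of directed cycles is a blow-up of loops, i.e.\ an equivalence. So composing everything and iterating destroys exactly the witness Item~2 asks for. Also, a class invariant under one composite pattern does not contradict Step~1, which only excludes sets invariant under all patterns.

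What is needed is a comparison of two \emph{different} patterns into the same variable. The paper proceeds as follows.
\begin{itemize}
\item Fix at $x$ an equivalence $\theta_x\neq (A_x^{\inst I})^2$ maximal among those defined by closed patterns (the identity if there is none).
\item Unless a proper linked relation appears (handled as above), every pattern $r$ from $x$ to $y$ either maps each $\theta_x$-class onto $A_y^{\inst I}$ or transfers $\theta_x$ to an equivalence $\theta_y$.
\item For two transferring patterns $r,r'$, study $\Conn{-r+q+r'}$. Distinct induced equivalences give a blow-up of a grid. Equal equivalences with different permutations of classes give a blow-up of cycles with one of length at least $2$. If neither occurs, a single $\theta_x$-class propagates unambiguously and defines a proper arc-consistent subinstance.
\end{itemize}
Without these two ingredients, your split into ``opposite clauses'' versus ``rectangular structure'' is not shown to be exhaustive.
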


\begin{proof}[Proof of hardness in \Cref{thm:HardnessMain}]
Assume there exists an ac-leq-cycle pair $(\inst I,\inst J)$ such that $\inst I$ has no solution. Choose such a pair with $\sum_u |A_u^\inst I|$ minimal. Since $\inst I$ is arc-consistent, if every potato were a singleton, the singleton assignment would be a solution. 
Hence, some potato has size at least two,
and we apply~\Cref{thm:AC} to $\inst I$.

\Cref{it:ac-subinstance} cannot occur,
as this would give a proper arc-consistent subinstance $\inst I'\leq\inst I$ and then $(\inst I',\inst J)$ would be an ac-leq-cycle pair. 
Moreover, any solution to $\inst I'$ is a solution to $\inst I$ so $\inst I'$ needs to be unsolvable -- 
this contradicts the minimality of $\inst I$. 
We are left with \Cref{it:ac-cycle,it:ac-linked}.
In these two cases we provide a reduction,
using gadgets.

The gadgets are built from patterns in instance $\inst I$ and are called \emph{gadget instances}.
Let $q=R_1,\dots,R_n$ be a pattern along variables $x=w_0,w_1,\dots,w_n=x$. 
Note that variables and constraints in a pattern can repeat~%
(as does $x$ for example). 
To construct the gadget instance we make all the copies distinct;
the gadget instance has exactly $n+1$ variables and $n$ constraints.
Each solution to the gadget corresponds to a valuation witnessing ``connected by $q$'' relation.
Note that we have two instances $\inst I$ and $\inst J$ on the same variable set;
we will use this fact by (usually) making the evaluations in $\inst I$ have cost $0$ and the ones in $\inst J\setminus\inst I$ have cost at least $1$.

Consider the case of \Cref{it:ac-cycle} with a blow-up of a union of directed cycles.
Let $x$ be the variable, let $p$ be the pattern, and let $C=B_0\cup\cdots\cup
  B_{k-1}\subseteq A_x^\inst I$ be the vertices of a blow-up of a directed cycle, with $k\geq 2$. 
Thus, in $\inst I$, the pattern $p$ connects precisely the pairs $B_i\times B_{i+1}$~(for every $i$ with addition modulo $k$),
and no other pairs inside $C$.
Because $\inst I\leq\inst J$, every connection realized by $p$ in $\inst I$ is also realized by $p$ in $\inst J$. 
Moreover, 
since $\inst J$ is cycle consistent, 
every closed pattern from $x$ to $x$ connects every element of $A_x^\inst J$ to itself in $\inst J$.

\begin{namedlemma}[Saturation on cycle blow-ups]\label{lem:saturation}
Let $q=(k+1)p$ be the concatenation of $k+1$ copies of $p$. 
Then, in $\inst I$, the pattern $q$ defines the same relation on $C$ as $p$, and, in $\inst J$, the pattern $q$ connects every element of $C$ to every element of $C$.
\end{namedlemma}

\begin{proof}
For the first claim: $p$ advances by one class modulo $k$, 
so $(k+1)p$ advances by $k+1\equiv 1\pmod k$.

For the second, fix $a\in B_i$ and $b\in B_j$. 
Choose $r\in\{1,\dots,k\}$ such
  that $i+r\equiv j\pmod k$. 
In $\inst I$, and hence in $\inst J$, 
  $r$ copies of
  $p$ can advance from $a$ to any element of $B_j$; 
  choose this element to be $b$. 
On the remaining $k+1-r$ copies of $p$, use cycle consistency of $\inst J$: 
  the closed pattern $p$ connects each current value to itself. 
Hence, $(k+1)p$ connects $a$ to $b$ in $\inst J$.
\end{proof}

To finish this case we reduce from $\cUGC(k)$:
given an instance of $\cUGC(k)$,
for every variable $y$ of the $\cUGC(k)$ instance, 
create a variable $y$ of instance $\inst K$ and set 
$A_y^\inst K=C$, the values $A\setminus C$ are forbidden by conservativity.
For every constraint $Z_k(y,z)$ of the input instance,
 insert, into $\inst K$, a fresh copy of the gadget instance for pattern $q$ 
 from \namedlemcref{lem:saturation}
 and identify its endpoints with $y$ and $z$. 
Inside the gadget give cost $0$ to values of each $w_i$ 
in the smaller potatoes $A_{w_i}^\inst I$ 
and cost $1$ to values in $A_{w_i}^\inst J\setminus A_{w_i}^\inst I$. 
The construction is complete and we now verify its correctness.

The correspondence between evaluations of the variables in an instance of $\cUGC(k)$
 and evaluations of the variables in the constructed instance is clear:
 for every $j\in\{0,\dots,k-1\}$, variable $y$ maps to $j$~%
 (in the input instance)
 if and only if 
 $y$ maps to (any value in) $B_j$~%
 (in $\inst K$).
The values $y,z$ satisfy $Z_k(y,z)$ 
if and only if 
we can assign values, in $\inst K$, to the gadget using only the elements of $\inst I$ 
-- this incurs cost $0$.
For any other pair of values,
we use \namedlemcref{lem:saturation} to evaluate the gadget inside $\inst J$,
which incurs a cost of at least $1$ on some variable as the connection 
needs to exit $\inst I$.
Thus, the reduction is approximation-preserving, up to the fixed gadget constant.
This finishes the reduction in the case of a blow-up of a union of directed cycles.

Now, working in \Cref{it:ac-cycle} with a blow up of a grid,
let $x$ be a variable with $A_x^\inst I$ split into $B_{ij}$ for $i,j\in\{0,\dots,k-1\}$, 
and let $p$ be the pattern from $x$ to $x$ defining a blow-up of a grid with classes $B_{ij}$.
Let $q=3p$; we reduce from $\cUGC(2)$;
take an instance of such a problem.
For every variable $y$ of the $\cUGC(2)$ instance, 
create a variable $y$ of instance $\inst K$ and set 
$A_y^\inst K = B_{01}\cup B_{10}$, other values are forbidden by conservativity.
For every constraint $Z_2(y,z)$ of the input instance,
 insert into $\inst K$ a fresh copy of the gadget instance for pattern $q$ 
 and identify its endpoints with $y$ and $z$. 
Inside the gadget give cost $0$ to values of each $w_i$ 
in the smaller potatoes $A_{w_i}^\inst I$,
except on the copies of a variable $x$ at the junction of $p$ inside $p+p+p$;
there, cost $0$ is assigned to elements in $B_{01}\cup B_{10}$.  
Set cost $1$ to all the remaining values.

The analysis is identical as in the previous case:
any value from $B_{01}\cup B_{10}$ can be connected to any value from $B_{01}\cup B_{10}$ 
in $\inst J$ using the pattern $q$~%
(here we use $q=3p$ just like in the proof of \namedlemcref{lem:saturation}).
The connections of cost $0$ must use only values from $\inst I$ and,
on the junction, elements from $B_{01}\cup B_{10}$.
Thus, only connections of cost $0$ are from $B_{01}$ to $B_{10}$ and from $B_{10}$ to $B_{01}$.
This proves the correctness of the reduction in the case of a blow-up of a grid.

It remains to treat \Cref{it:ac-linked}. 
We have a variable $x$, two values $0,1\in A_x^\inst I$, and two closed patterns $p_{00}$ and $p_{11}$. 
We construct the reduction from $\cUGC(2)$, as in the previous cases, but using both patterns $p_{00}$ and $p_{11}$ to build two gadgets for each constraint.
Moreover, we will be assigning cost $0$ only to the values from the ``restricted two-element subsets'' of the potatoes along the patterns, and cost $1$ to all other values in the potatoes in $\inst I$ as well as in $\inst J$.

Formally, we reduce from $\cUGC(2)$ 
and for every input variable $y$, create a variable forced, using conservativity, to take one of the two values $0,1$. 
For every constraint $Z_2(y,z)$, insert two fresh, disjoint gadgets with endpoints identified with $y$ and $z$: 
one obtained from $p_{00}$ and one from $p_{11}$. 
In both gadgets give cost $0$ to the values in the distinguished two-element assigned to variables of the gadgets. 
Set cost $1$ to all other values of the corresponding larger potatoes,
i.e. the potatoes coming from $\inst J$.

Evaluating the pair of variables $(y,z)$ to values $(0,1)$ or $(1,0)$, 
can be accomplished by realizing $p_{00}$ and $p_{11}$ in
the selected two-element sets,
so both gadgets have cost $0$. 
For endpoint values $(0,0)$, the $p_{00}$ gadget has no all-zero-cost realization,
hence any realization~%
(and such realizations exist by cycle consistency of $\inst J$ exactly as in the previous cases)
must leave the distinguished two-element sets and has cost at least $1$. 
The $p_{11}$ gadget can be realized by cycle consistency as well,
although its cost is not determined.
Similarly, for endpoint values $(1,1)$, the $p_{11}$ gadget has cost at least $1$.

Consequently, the constructed instance has cost $0$ exactly on assignments satisfying the corresponding disequality constraints, 
and every violated disequality constraint contributes cost at least $1$, up to the fixed size of the two gadgets. 
This is an approximation-preserving reduction from the fixed-alphabet deletion problem for $Z_2$.
This proves the hardness part of~\Cref{thm:HardnessMain}.
\end{proof}

\section{Tractability}\label{sec:tractability}

Consider an instance of $\MinCostHom(\relstr A)$ with a variable set $V$ and cost functions
$c_x:A\to\mathbb{Q}_{\geq 0}$, $x\in V$.  The tractability argument has three
steps.  First, we preprocess the instance to obtain a cycle-consistent ambient
instance.  Next, we solve the basic linear programming (BLP) relaxation
(cf.~\Cref{sec:tractability-rounding-cost} for details) 
and use a randomized threshold rounding
to define a subinstance.  Finally, we prove that this truncated instance is always
solvable, while any of its solutions has cost within a constant factor of the
LP value.

Since all costs are unary, all binary constraints between the same ordered pair
of variables may be conjoined.  If there was no original constraint between $x$
and $y$, we put $R_{xy}=A\times A$.  Thus, throughout this section, for every
ordered pair of variables $(x,y)$ there is a single binary relation $R_{xy}$.
We can further assume that $R_{yx} = - R_{xy}$~%
(where $-R$ is the opposite reading of $R$).
For $R_{xx}$ we go even further:
we intersect all the constraints on $(x,x)$ and 
then restrict the values to equal pairs:
in the end we obtain
$
        R_{xx}=\{(a,a):a\in A_x^{\inst I}\}
$
for suitably chosen $A_x^{\inst I}$.
This changes neither feasibility nor the unary objective.

We first enforce \emph{$(2,3)$-consistency}.  Concretely, we repeatedly delete tuples
until, for every triple of variables $x,y,z$ and every $(a,b)\in R_{xy}$, there is
$c\in A_z$ such that
\[ (a,c)\in R_{xz} \quad\text{and}\quad (b,c)\in R_{yz}. \]

The potatoes are the projections of the remaining binary relations.  If some potato
becomes empty, the original instance has no feasible solution;
in this case the algorithm outputs ``infeasible'' and stops.
Otherwise, denote the resulting instance by $\inst J$.
Passing from the original instance to $\inst J$ preserves the set of solutions,
and \Cref{prop:23-cycle-consistency-levels} in the appendix shows that
$\inst J$ is arc-consistent and cycle-consistent~%
(which is well-known, but we include a proof for completeness).

Now solve the BLP relaxation on $\inst J$, with costs inherited from the
original instance.  Write $\lambda_v(a)$ for the LP marginal of the assignment
$v\mapsto a$, and put
\[
        \lambda_v(B):=\sum_{b\in B}\lambda_v(b)
        \qquad (B\subseteq A_v^{\inst J}).
\]
Since preprocessing is solution-preserving and constraints are strict,
$ \mathrm{BLP}(\inst J)\leq \mathrm{OPT}$.

Next we choose one global threshold sequence $(t_i)_{i\geq 1}$ for the whole
instance, independently for each $i\geq 1$:
\[
        t_i\sim \mathrm{Unif}\bigl(2^{-i},2^{-(i-1)}\bigr].
\]
The same threshold $t_i$ is used for every variable.

\begin{definition}[Threshold conflict]
The assignment $v\mapsto a$ conflicts with $t_i$ if there is
$B\subseteq A_v^{\inst J}$, $a\notin B$, such that
\[
        \lambda_v(B)<t_i\leq \lambda_v(B\cup\{a\}).
\]
\end{definition}

\begin{definition}[Good level]
  A level $i$ is \emph{good for $v$} if there is no $a\in A_v^{\inst J}$ such that
\[
        \lambda_v(a)<t_i
\]
and $v\mapsto a$ conflicts with one of $t_1,\ldots,t_i$.
\end{definition}

For every value $a\in A_v^{\inst J}$, define
\[
        \ell(v,a)=i
        \quad\Longleftrightarrow\quad
        \lambda_v(a)\in (2^{-i},2^{-(i-1)}],
\]
with $\ell(v,a)=\infty$ if $\lambda_v(a)=0$.  
Thus~%
(except the $\infty$ case)
\[
        2^{-\ell(v,a)}<\lambda_v(a)\leq 2^{-(\ell(v,a)-1)}.
\]

\noindent 
Using the same global threshold sequence, compute levels of variables, $L(v)$,
which will form the basis for truncating the instance.
For each variable $v$, run the following procedure:
\begin{enumerate}
    \item Set $\mathrm{L}:=1$.
    \item While $\mathrm{L}$ is not good for $v$
        choose $a\in A_v^{\inst J}$
    such that
    \begin{itemize} 
        \item $\lambda_v(a)<t_{\mathrm{L}}$, and 
        \item $v\mapsto a$ conflicts with one of
        $t_{1},\ldots,t_{\mathrm{L}}$, and
        \item it has minimal $\lambda_v(a)$ among such witnesses.
    \end{itemize}
    Then set $\mathrm{L}:=\ell(v,a)+1$.
    \item Set $L(v):=\mathrm{L}$.
\end{enumerate}
Define the truncated potato by the random cutoff
\[
        A_v^{\inst I}:=\{a\in A_v^{\inst J}:\lambda_v(a)\geq t_{L(v)}\}.
\]
Let $\inst I$ be the instance inherited from $\inst J$ by these potatoes.  
The rest of the argument rests on two facts.
\begin{restatable}{proposition}{propRoundedInstanceApproximation}\label{prop:rounded-instance-is-arc-consistent}
If the truncated instance $\inst I$ has a solution, then taking any solution of
$\inst I$ gives a randomized constant-factor approximation in expectation.
\end{restatable}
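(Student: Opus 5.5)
The plan is to reduce the statement to a per-variable, per-value probability bound and then verify that bound from the structure of the level procedure. Since $\mathrm{BLP}(\inst J)\le\OPT$, it suffices to show that for any solution $\alpha$ of $\inst I$,
\[
\mathbb E\bigl[\cost(\alpha)\bigr]\le K\cdot \mathrm{BLP}(\inst J)=K\sum_{v\in V}\sum_{a\in A_v^{\inst J}}\lambda_v(a)c_v(a)
\]
for a constant $K$ depending only on $|A|$. Because $\alpha(v)\in A_v^{\inst I}$, we have $c_v(\alpha(v))\le\sum_{a\in A_v^{\inst I}}c_v(a)$. By linearity of expectation (the rounding does not depend on which solution is picked), it is enough to prove, for every $v$ and every $a\in A_v^{\inst J}$,
\[
\Pr\bigl[a\in A_v^{\inst I}\bigr]\le K\,\lambda_v(a).
\]
For $\lambda_v(a)\ge 2^{-c}$ with $c$ a constant (for instance $c=2|A|$), this is trivial. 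The work is therefore for values of small LP weight, and the bound is a statement about a single variable $v$ and the global thresholds.

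Next I would unfold the event $a\in A_v^{\inst I}$, i.e.\ $t_{L(v)}\le\lambda_v(a)$. Since $t_L\in(2^{-L},2^{-L+1}]$ and $\lambda_v(a)\in(2^{-\ell},2^{-\ell+1}]$ with $\ell=\ell(v,a)$, this forces $L(v)\ge\ell$. There are two cases.

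\emph{Case $L(v)=\ell$.} We need $t_\ell\le\lambda_v(a)$. Using uniformity of $t_\ell$ on an interval of length $2^{-\ell}$, this has probability at most $2^{\ell}(\lambda_v(a)-2^{-\ell})\le 2^{\ell}\lambda_v(a)$. I would refine this, using that level $\ell$ is good, to get an $O(\lambda_v(a))$-type estimate after also accounting for how $L(v)$ reached $\ell$ (that is, a preceding conflict).

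\emph{Case $L(v)>\ell$.} The procedure only increases $L$ by jumping to $\ell(v,b)+1$ for a conflicting witness $b$ of minimal weight. So at some step a value $b$ with $\ell(v,b)\ge\ell$, i.e.\ $\lambda_v(b)\le 2\lambda_v(a)$, conflicted with some $t_j$ while $\lambda_v(b)<t_j$.

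The basic probabilistic estimate is that $v\mapsto b$ conflicts with $t_j$ only if $t_j$ lies in one of at most $2^{|A|}$ half-open windows $(\lambda_v(B),\lambda_v(B)+\lambda_v(b)]$, each of length $\lambda_v(b)$. Since $t_j$ has density $2^j$, this has probability at most $2^{|A|}2^j\lambda_v(b)$.

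The main obstacle is that summing this naively over $j\le\ell(v,b)$ only gives an $O(1)$ bound, not $O(\lambda_v(a))$. The plan for overcoming it is to track the whole jump chain $L=1\to\ell(v,b_1)+1\to\ell(v,b_2)+1\to\cdots$ of the procedure. Because each witness is chosen of minimal weight among values below the current threshold, the chain has length at most $|A|$. The cost of each jump should be paid by the conflict probability of the corresponding $b_i$ at a threshold of level comparable to the current $L$: the witness must satisfy $\lambda_v(b_i)<t_L$ and conflict with some $t_j$, $j\le L$.

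The expectation is that a careful accounting (multiplying the independent window probabilities of distinct thresholds along the chain, and using the exponentially decreasing threshold scales so the sums telescope geometrically) yields $\Pr[L(v)>\ell(v,a)]\le K'\,2^{\ell(v,a)}\lambda_v(a)\cdot 2^{-\ell(v,a)}\cdot O(1)=O(\lambda_v(a))$. If this direct chain argument fails to give the right order, the fallback is to charge $c_v(a)$ not to $\lambda_v(a)c_v(a)$ alone but to the cheapest value $b$ in $A_v^{\inst I}$ whose weight is at least $t_{L(v)}$, comparing costs via the conservative majority structure of solutions. I would first try the pure probability bound above.
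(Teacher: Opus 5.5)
Your reduction is the same as the paper's. Since $\mathrm{BLP}(\inst{J})\le\OPT$ and every solution has $\alpha(v)\in A_v^{\inst{I}}$, it suffices to prove $\Pr[a\in A_v^{\inst{I}}]\le K\lambda_v(a)$, and $a\in A_v^{\inst{I}}$ forces $L(v)\ge\ell(v,a)$. But the proposal stops where the actual work begins: the bound on $\Pr[L(v)\ge\ell(v,a)]$ is only announced as an expectation. Your fallback cannot work, because the unary costs are arbitrary and nothing lets you charge $c_v(a)$ to the cost of another value. Your case split is also unnecessary. What you need is the tail bound $\Pr[L(v)=i]\le C2^{-i}$, which concerns $v$ alone. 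Then $\Pr[L(v)\ge\ell]\le 2C2^{-\ell}\le 2C\lambda_v(a)$ covers $L(v)=\ell$ and $L(v)>\ell$ at once, and goodness of level $\ell$ plays no role.

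The missing idea is the one that justifies ``multiplying the independent window probabilities of distinct thresholds along the chain''. Nothing in your sketch prevents a later witness from conflicting with a threshold that already caused an earlier jump (say $t_1$). In that case the events are dependent and the product bound is unjustified. The paper rules this out with its fresh-threshold invariant, and this is where the minimal-weight rule is really used, not merely to bound the chain length.

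Let $\mathrm{P}<\mathrm{L}$ be consecutive levels of the procedure and $a_{\mathrm P}$ the witness chosen at $\mathrm P$. Suppose a value $b$ has $\lambda_v(b)<t_{\mathrm L}<t_{\mathrm P}$ and conflicts with one of $t_1,\dots,t_{\mathrm P}$. Then $b$ was already a candidate at level $\mathrm P$, so minimality gives $\lambda_v(b)\ge\lambda_v(a_{\mathrm P})>2^{-(\mathrm L-1)}\ge t_{\mathrm L}$, a contradiction. Hence every new witness conflicts with one of $t_{\mathrm P+1},\dots,t_{\mathrm L}$.

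Now take a witness sequence $a_1,\dots,a_m$ with $r_s=\ell(v,a_s)$ and $r_0=0$.
\begin{itemize}
\item The events depend on the disjoint blocks $\{t_1\},\{t_2,\dots,t_{r_1+1}\},\dots,\{t_{r_{m-2}+2},\dots,t_{r_{m-1}+1}\}$ of independent thresholds.
\item The one-level conflict bound gives a factor of at most $2^{|A|+3}2^{r_{s-1}-r_s}$ for step $s$.
\item The product telescopes to $K^m2^{-r_m}$, where $L(v)=r_m+1$.
\item Summing over the boundedly many sequences of distinct values gives $\Pr[L(v)=i]\le C2^{-i}$.
\end{itemize}
Without this step, your argument does not yet establish the proposition.
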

\begin{restatable}{proposition}{propRoundedInstanceSolvable}\label{prop:rounded-instance-has-solution}
The truncated instance $\inst I$  has a solution.
\end{restatable}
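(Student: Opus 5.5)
The plan is to reduce \Cref{prop:rounded-instance-has-solution} to the tractability hypothesis of \Cref{thm:HardnessMain}. In the tractable case, every ac-leq-cycle pair $(\inst I,\inst J)$ has $\inst I$ solvable. By \Cref{prop:23-cycle-consistency-levels}, $\inst J$ is arc-consistent and cycle-consistent. By construction, $\inst I$ is inherited from $\inst J$ on the potatoes $A_v^{\inst I}\subseteq A_v^{\inst J}$, so $\inst I\leq\inst J$. It therefore suffices to show that the truncated instance $\inst I$ is arc-consistent with nonempty potatoes. Then $(\inst I,\inst J)$ is an ac-leq-cycle pair, and $\inst I$ has a solution.

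\emph{Nonempty potatoes.} Since $\lambda_v$ is a probability distribution on $A_v^{\inst J}$, some value has $\lambda_v(a)\geq 1/|A|$. The procedure computing $L(v)$ only increases $\mathrm{L}$ by jumping past the level $\ell(v,a)$ of some value $a$. Hence it terminates after at most $|A|$ jumps. I will check that the value of largest marginal is never the one that forces a jump. The point is that a conflict requires $\lambda_v(a)<t_{\mathrm{L}}$, while the heaviest value always stays above the current threshold. Hence $A_v^{\inst I}\neq\emptyset$.

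\emph{Arc-consistency.} Fix $x,y$ and $a\in A_x^{\inst I}$, so $\lambda_x(a)\geq t_{L(x)}$. Let $N=\{b:(a,b)\in R_{xy}\}$. The BLP constraint for $R_{xy}$ gives a joint distribution supported on $R_{xy}$ with marginals $\lambda_x,\lambda_y$, hence $\lambda_y(N)\geq\lambda_x(a)\geq t_{L(x)}$. Suppose, for contradiction, that every $b\in N$ has $\lambda_y(b)<t_{L(y)}$. I split into two cases.
\begin{itemize}
\item If $L(x)\leq L(y)$, then $t_{L(x)}\geq t_{L(y)}$ up to the dyadic ranges, with care when the levels are equal. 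Adding the elements of $N$ one by one in increasing order of $\lambda_y$, the partial sums start below $t_{L(y)}$ and end at $\lambda_y(N)\geq t_{L(y)}$. So some $b\in N$ with $\lambda_y(b)<t_{L(y)}$ makes $y\mapsto b$ conflict with $t_{L(y)}$. This contradicts that $L(y)$ is good for $y$.
\item If $L(x)>L(y)$, I exchange the roles of $x$ and $y$. The final level $L(x)$ was reached because at an earlier level $\mathrm{L}'\leq L(y)$ the procedure at $x$ found a conflicting light value. The jump rule (minimal witness, $\mathrm{L}:=\ell(x,a')+1$) guarantees that thresholds $t_1,\dots,t_{L(x)}$ avoid the partial sums of light values at $x$. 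Running the same partial-sum argument on the pair $(x,y)$ in the reverse direction should place a conflict at $y$ at a level $\leq L(y)$.
\end{itemize}
Where the pure counting is insufficient, I would use that $R_{xy}$ is preserved by a conservative majority operation. This forces $R_{xy}$, restricted to the values of comparable weight, to be a union of ``rectangle-like'' pieces, so that heavy mass of $a$ cannot be spread entirely over light neighbours.

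The main obstacle is the second case, $L(x)>L(y)$, where the threshold at $x$ is finer than at $y$. Here the naive partial-sum argument gives a conflict at level $L(x)$ for $y$, which goodness of $L(y)$ does not rule out. I expect to resolve it by a monotonicity statement for the levelling procedure: any value whose marginal lies in dyadic range $i$ and has a neighbour chain forcing mass across $t_i$ pushes both endpoints' levels beyond $i$. Proving this requires relating the procedures at $x$ and $y$ through the consistent joint distribution on $R_{xy}$. I would first attempt it by induction on the number of jumps made by the procedure.
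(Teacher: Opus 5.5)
\textbf{Gap: the truncated instance $\inst I$ need not be arc-consistent.} Your reduction rests on showing that $\inst I$ is arc-consistent, so that $(\inst I,\inst J)$ is an ac-leq-cycle pair. That claim is false, and it fails exactly in your second case $L(x)>L(y)$. Each level is computed from that variable's own marginals and the global thresholds, so nothing ties $L(x)$ to $L(y)$, and no monotonicity statement of the kind you hope for can hold.

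Here is a concrete counterexample. Let $A_x^{\inst J}=\{a_1,a_2,a_3\}$, $A_y^{\inst J}=\{b_1,b_2\}$ and $R_{xy}=\{(a_1,b_1),(a_2,b_1),(a_3,b_2)\}$. This is the graph of a monotone map between chains, so it is preserved by $\min$, $\max$ and the median. Since $\min$ is a semilattice polymorphism, every arc-consistent instance over this template is solvable, so the template lies on the tractable side.

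Take the feasible (e.g.\ zero-cost optimal) marginals $\lambda_x=(0.6,0.3,0.1)$ and $\lambda_y=(0.9,0.1)$, with thresholds $t_1\in(0.7,0.9]$ and $t_2\in(0.3,0.4]$; this event has positive probability.
\begin{itemize}
\item At $y$, the only light value $b_2$ does not conflict with $t_1$. Hence $L(y)=1$ and $A_y^{\inst I}=\{b_1\}$.
\item At $x$, $a_2$ conflicts with $t_1$ via $B=\{a_1\}$ and is the minimal witness, so the level jumps to $3$.
\item Then $a_3$ conflicts with $t_2$ via $B=\{a_2\}$, so the level jumps to $5$. Level $5$ is good, and $A_x^{\inst I}=\{a_1,a_2,a_3\}$.
\end{itemize}
Now $a_3$ has no support in $A_y^{\inst I}$. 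The majority polymorphism cannot repair this, because $R_{xy}$ is median-closed. In this example $\inst I$ is still solvable, so the proposition survives; only your route to it fails.

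\textbf{What is true, and what the paper does instead.} Your first case is correct: LP mass at least $t_i$ pushed to a variable of level at least $i$ survives truncation. This is the paper's lemma ``Mass survives higher-level truncation''. The paper therefore never claims arc-consistency of $\inst I$ and never applies the tractable-side hypothesis to $(\inst I,\inst J)$. Instead it solves $\inst I$ one level at a time, in increasing order.
\begin{itemize}
\item At level $i$ it works with $\inst I_i^+$ on $V_i=\{v:L(v)=i\}$, whose constraints are the relations of patterns through variables of higher level. This instance is arc-consistent by upward propagation.
\item It maintains the invariant that already fixed lower-level values are connected inside $\inst I$ along every path through higher-level variables.
\item To make the level-$i$ choice compatible with the fixed lower levels, it shrinks $\inst I_i^+$ to $\inst K_i$. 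It keeps only values occurring in pinned solutions of a universal cover augmented with paths to lower-level variables.
\item Existence of these pinned solutions uses the invariant for one or two pins, the conservative majority to merge three partial realizations for more pins, and compactness.
\item Then $\inst K_i$ is arc-consistent. Unfolding its pattern constraints alongside those of $\inst J_i^+$ gives an ac-leq-cycle pair, to which the hypothesis is applied.
\end{itemize}
This downward-compatibility mechanism is the idea missing from your proposal.

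A minor point: your nonemptiness argument is also imprecise, since at the start even the heaviest value may lie below $t_1$. The clean reason is that if all values were below $t_{L(v)}$, partial sums of the total mass $1$ would cross $t_{L(v)}$, contradicting goodness of $L(v)$.
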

 With these two facts in hand, we can run any algorithm for finding a solution to
$\inst I$.  Since $\inst I$ is a CSP instance with a majority polymorphism,
it has bounded width and thus local consistency algorithms solve $\inst I$ in
polynomial time, e.g.~\cite{Feder98:sicomp,Kozik21:sicomp}. Returning any solution gives a randomized
constant-factor approximation in expectation.

\section*{Acknowledgements}
This work was supported by the European Research Council (ERC) under the
European Union’s Horizon 2020 research, 
innovation programme (grant agreement
No 714532), 
by UKRI EP/X024431/1,
by the Polish National Agency for Academic Exchange and
by the National Science Centre, Poland 
under the  Weave-UNISONO  call  in  the  Weave  programme 2021/03/Y/ST6/00171. 
For the purpose of open access, the authors
have applied a CC BY public copyright licence to any Author Accepted Manuscript
version arising from this submission. All data is provided in full in the
results section of this paper. 

The second author thanks (and vice versa!) the first for visiting him in 2022, 
when this collaboration started.

ChatGPT 5.5 was used to generate LaTeX source of some of the proofs, and to
suggest possible issues with arguments.

\begin{appendices}
    \input{ACproof}
    \input{tractability-rounding-and-cost}
    \input{tractability-solvability}
    \input{cycles-are-hard}

\end{appendices}

{\small
\bibliographystyle{plainurl}
\bibliography{bib}
}

\end{document}

%% file: key-structures-figure.tex
\begin{figure}[ht]
\centering
\tikzset{
  gv/.style={circle, draw, thick, fill=white, minimum size=9mm, inner sep=1pt},
  ge/.style={->, thick, >=Stealth}
}
\begin{subfigure}[b]{0.45\textwidth}
\centering
\begin{tikzpicture}[scale=0.85]
\useasboundingbox (-0.5,-1.5) rectangle (3.0,3.5);
\node[gv] (a0) at (0.25,2.6) {$0$};
\node[gv] (a1) at (2.25,2.6) {$1$};
\draw[ge] (a0) to[bend left=30] (a1);
\draw[ge] (a1) to[bend left=30] (a0);
\node[gv] (b0) at (1.25,0.73) {$0$};
\node[gv] (b1) at (2.25,-1.0) {$1$};
\node[gv] (b2) at (0.25,-1.0) {$2$};
\draw[ge] (b0) -- (b1);
\draw[ge] (b1) -- (b2);
\draw[ge] (b2) -- (b0);
\end{tikzpicture}
\caption{Disjoint union $C_2 \sqcup C_3$}
\label{fig:cycles}
\end{subfigure}
\hspace{-0.10\textwidth}
\begin{subfigure}[b]{0.45\textwidth}
\centering
\begin{tikzpicture}[scale=0.85]
\useasboundingbox (-0.5,-1.5) rectangle (3.0,3.5);
\node[gv] (00) at (0.2,2.0) {$00$};
\node[gv] (01) at (2.3,2.0) {$01$};
\node[gv] (10) at (0.2,0.1) {$10$};
\node[gv] (11) at (2.3,0.1) {$11$};
\draw[ge] (00) to[out=120,in=60,looseness=5] (00);
\draw[ge] (11) to[out=300,in=240,looseness=5] (11);
\draw[ge] (00) -- (01);
\draw[ge] (01) -- (11);
\draw[ge] (11) -- (10);
\draw[ge] (10) -- (00);
\draw[ge] (01) to[bend left=20] (10);
\draw[ge] (10) to[bend left=20] (01);
\end{tikzpicture}
\caption{Grid on $\{0,1\}^2$}
\label{fig:grid}
\end{subfigure}
\caption{Examples of the two key structures used in \Cref{thm:AC}: a disjoint union of directed cycles (left) and the grid digraph on $\{0,1\}^2$ (right).}
\label{fig:key-structures}
\end{figure}

%% file: ACproof.tex
\section{Proof of~\texorpdfstring{\cref{thm:AC}}{AC theorem}}
\label{sec:AC}

\noindent We recall the statement of the theorem we are proving. It might be
worth noting that, in the statement of the theorem below, the instance $\inst
I'$ and each blown-up vertex of the grid or union of directed cycles are
pp-definable from $\inst I$ and constants. Recall that a relation $R\subseteq
A^m$ is primitive positive definable (or pp-definable, for short) from a set of relations $\mathcal{R}$ on $A$ if $R$ can be
defined by a first-order formula that uses only relations from $\mathcal{R}$,
the equality relation, conjunction, and existential quantification. 
\AC*
\noindent The first observation allows us to deal only with \emph{connected inputs}, i.e.,
instances where every two variables are connected by a pattern.
\begin{observation}
    If~\Cref{thm:AC} holds for instances with connected input,
    then it holds for all instances.
\end{observation}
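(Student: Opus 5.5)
The plan is to reduce to a single connected component. Let $\inst I$ be binary, arc-consistent, and have some potato of size at least two. Partition $V$ into the classes of the relation ``$u$ and $v$ are connected by a pattern''. This relation is an equivalence relation: concatenation gives transitivity, reversal of patterns gives symmetry (we may use the opposite readings, since $\constr{R}_{vu}$ contains them), and each variable is connected to itself. For reflexivity, either use $\constr{R}_{uu}$, or a relation in $\constr{R}_{uv}$ followed by its opposite reading; an isolated variable forms its own class.

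For a class $W$, let $\inst I_W$ be the restriction of $\inst I$ to the variables in $W$, keeping the same potatoes and the constraints $\constr{R}_{uv}$ for $u,v\in W$. I will check three facts.
\begin{enumerate}
\item $\inst I_W$ is arc-consistent with the same potatoes $A_u^{\inst I}$. This holds because arc-consistency is a condition on each constraint separately.
\item $\inst I_W$ has connected input.
\item Patterns in $\inst I$ never leave a class, by definition of the classes. So patterns of $\inst I$ starting at $x\in W$ are exactly the patterns of $\inst I_W$ starting at $x$.
\end{enumerate}

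Now pick a class $W$ containing a variable whose potato has at least two elements, and apply the connected version of \Cref{thm:AC} to $\inst I_W$. If the conclusion is \Cref{it:ac-cycle} or \Cref{it:ac-linked}, the witnessing variable $x\in W$ and the patterns are patterns of $\inst I$ by the third fact. The ``connected by $p$'' relations, including those after restricting intermediate potatoes, depend only on the constraints along the pattern. So the same witnesses work for $\inst I$.

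The remaining case is \Cref{it:ac-subinstance}: we obtain a proper arc-consistent $\inst I'_W\leq \inst I_W$. Extend it to $\inst I'$ on all of $V$ by taking $A_u^{\inst I'}=A_u^{\inst I'_W}$ for $u\in W$ and $A_u^{\inst I'}=A_u^{\inst I}$ for $u\notin W$, with inherited constraints. We must check that $\inst I'$ is arc-consistent. Constraints inside $W$ are handled by $\inst I'_W$, and constraints inside other classes are unchanged.

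The main obstacle is the constraints between $u\in W$ and $v\notin W$. Such constraints exist whenever $\constr{R}_{uv}$ is nonempty, and in that case $u$ and $v$ lie in the same class. So there are no nonempty constraint sets between different classes, and nothing needs to be checked there. I would state this explicitly: the only constraints to verify are within classes. A minor point is that a constraint set being empty is compatible with the definition of arc-consistency. Hence $\inst I'$ is a proper arc-consistent subinstance of $\inst I$, which is \Cref{it:ac-subinstance} for $\inst I$.
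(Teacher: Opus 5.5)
Your proposal is correct and is essentially the paper's own proof. You restrict to the pattern-connected component of a variable with an at-least-two-element potato and apply the connected case. In the cycle/grid and linked cases the witnesses are unchanged, since pattern evaluations only involve constraints inside the component. A proper arc-consistent subinstance is extended by keeping the original potatoes outside the component.

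Your remark that no nonempty constraint set joins different components supplies the arc-consistency check for the extended subinstance, which the paper leaves implicit.
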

\begin{proof}
    If $\inst I$ has disconnected input we choose a variable $x$ with at least two-element potato,
    and build the instance $\inst J$ by removing from $\inst I$ all variables not connected to $x$ by a pattern,
    as well as all constraints on them.
    Then $\inst J$ is connected and has at least one at least two-element potato and we can apply \Cref{thm:AC} to $\inst J$.
    In \Cref{it:ac-cycle,it:ac-linked} all the pattern evaluations in $\inst J$ are identical to those in $\inst I$ and we are done.
    In \Cref{it:ac-subinstance} we get a proper subinstance $\inst J'$ of $\inst J$ and we can extend it to a proper subinstance $\inst I'$ of $\inst I$ by taking $A_y^{\inst I'} = A_y^{\inst I}$ for all variables $y$ not in $\inst J$. 
\end{proof}
In this section we,
most often,
work within a fixed instance,
usually $\inst I$ from the statement of \Cref{thm:AC}.
Recall that,
for two patterns $p$ and $q$ within the same instance
we write $p+q$ for concatenation of patterns and $-p$ for the reverse pattern.
If $p$ is a pattern from $x$ to $y$,
then
\[
    \Conn{p} = \{(a,b): \text{$a$ is connected to $b$ by pattern $p$}\},
\]
\noindent 
further if $B\subseteq A_x^{\inst I}$ then
\[
B+p= B+\Conn{p} = \{b\in A_y^{\inst I}:\text{ some }a\in B\text{ is connected to }b\text{ by }p\}.
\]
\noindent We need two standard definitions:
a relation $R\subseteq B\times C$ is:
\begin{itemize}
    \item \emph{proper} if $\emptyset \neq R \neq B\times C$,
    \item \emph{subdirect} if $\proj_1(R)=B$ while $\proj_2(R) =C$,
    and
    \item \emph{linked} if it is subdirect and the transitive closure of $R'$ is $B^2$ where 
    \[
            R'(x,y) = \exists v\  R(x,v)\wedge R(y,v).
    \] 
\end{itemize}
The following theorem establishes a part of the claim:
\Cref{it:binary-subinstance,it:binary-cycle} are identical to \Cref{it:ac-subinstance,it:ac-cycle}.
\begin{restatable}{theorem}{BinaryCycleOrSubinstance}\label{thm:binary-cycle-or-subinstance}
Assume that $\inst I$ is binary,
arc-consistent and has at least one two-element potato.
Moreover, assume that $\inst I$ has connected input.
Then at least one of the following holds:
\begin{enumerate}
    \item\label{it:binary-subinstance} there exists a proper~%
    (i.e., at least one $A_u^{\inst I'}\varsubsetneq A_u^{\inst I}$) 
    arc-consistent subinstance $\inst I' \leq \inst I$,
    \item\label{it:binary-cycle} there is a variable $x$ and a pattern $p$ from $x$ to $x$,
    such that the ``connected by $p$'' relation on $A_x^\inst I$ 
    defines a blow-up of a grid,
    or a blow-up of a disjoint union of directed cycles at least one of length $\geq 2$,
    \item\label{it:binary-linked} there is a pattern defining a proper linked subrelation of a product of two potatoes.
\end{enumerate}
\end{restatable}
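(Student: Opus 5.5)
We argue by contradiction: assume neither \Cref{it:binary-subinstance} nor \Cref{it:binary-linked} holds, and construct the pattern required in \Cref{it:binary-cycle}. Since $\inst I$ has no proper arc-consistent subinstance, any nonempty set $B\subsetneq A_x^{\inst I}$ that we can ``propagate'' consistently through the instance would yield a contradiction. The standard way to exploit this is the following. For a pattern $p$ from $x$ to $y$ and $B\subseteq A_x^{\inst I}$, the sets $B+p$ are pp-definable from $\inst I$ and constants. Fix $x$ with $|A_x^{\inst I}|\geq 2$ and consider the family of sets $\{a\}+p$ over closed patterns $p$ at $x$. If some such family, closed under adding patterns to all variables $y$, produces sets $B_y\subsetneq A_y^{\inst I}$ with $B_y+R=B_z$ for all $R\in\constr R_{yz}$, then restricting to $B_y$ gives a proper arc-consistent subinstance. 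The first step is therefore: using connectedness and the finiteness of $A$, pick a minimal (by inclusion) set of the form $S=\{a\}+p$ that is still proper. Then show that the system $y\mapsto S+q$ (over patterns $q$ from $x$ to $y$) is well defined up to a ``period''.

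Second, we use the failure of \Cref{it:binary-linked}. Every pattern relation $\Conn{q}$ between two potatoes is subdirect by arc-consistency. If it is proper, it is not linked, so its linkedness classes give a nontrivial partition of both potatoes, and $\Conn{q}$ is a disjoint union of products of blocks. In other words, it is a blow-up of a bijection between block partitions. Applying this to closed patterns at $x$, each $\Conn{p}$ with $p$ closed at $x$ is either full or a blow-up of a permutation $\pi_p$ of the blocks. Since patterns compose, the non-full ones generate a finite semigroup of such relations. Taking $p$ so that the block partition is finest possible (minimal blocks), and replacing $p$ by a suitable multiple $rp$, I would aim to show that $\Conn{p}$ restricted to $A_x$ is a blow-up of a disjoint union of directed cycles given by the permutation. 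If every cycle has length $1$, i.e. $p$ fixes all blocks, then the blocks themselves propagate consistently to all variables (via the linkedness partitions on connecting patterns), and we obtain a proper arc-consistent subinstance. That is a contradiction. Hence some cycle has length $\geq 2$, which gives \Cref{it:binary-cycle}.

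The grid case arises because the relation $\Conn{p}$ need not be a blow-up of a bijection between identical partitions. The first-coordinate partition (linkedness classes on the left) and the second-coordinate partition of the same closed pattern may differ when $\Conn{p}$ is non-linked and its ``rectangular'' structure is genuinely two-dimensional. This is the main obstacle. I expect to analyze the pair of partitions $\theta_1,\theta_2$ of $A_x$ induced by $\Conn{p}$ and $\Conn{-p}$. When they do not coincide, composing $p$ with $-p$ and with itself should produce a relation whose connection pattern is $(a,b)\to(b,c)$ on classes $B_{ab}$, indexed by $\theta_1$-class and $\theta_2$-class, which is exactly a blow-up of a grid. Here I would first try to use the conservative majority polymorphism to show that the intersection classes $\theta_1\cap\theta_2$ are all nonempty (a rectangularity argument), so that the grid is complete. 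Making this case split rigorous, i.e. showing that either the partitions coincide (cycle case) or form a full grid, is where the bulk of the work lies.
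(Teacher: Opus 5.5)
There is a genuine gap. The places you leave open or only assert are exactly where the argument lives, and some of your intermediate claims are false. A proper, subdirect, non-linked $\Conn{q}$ is \emph{not} in general a disjoint union of products of blocks. Non-linkedness only gives $\Conn{q}\subseteq\bigcup_i C_i\times D_i$ with linked but possibly non-full pieces, e.g.\ $\{(0,0),(0,1),(1,1),(2,2)\}$. So ``every closed $\Conn{p}$ is full or a blow-up of a permutation of blocks'' does not follow.

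The missing idea is the paper's choice of $\theta_x$. It fixes a closed pattern $q$ at $x$ whose relation $\theta_x$ is a \emph{maximal} proper equivalence among those defined by closed patterns at $x$, and it always works with the left-saturated relation $\Conn{q+r}$. Maximality gives, for every pattern $r$ from $x$ to $y$, one of two outcomes (unless item~3 holds). Either $a/\theta_x+r=A_y^{\inst I}$ for all $a$. Or the linkedness closure $\Conn{k(q+r-r-q)}$ is a proper equivalence containing $\theta_x$, hence equal to it; then the sets $a/\theta_x+r$ are the classes of an equivalence $\theta_y$ matched bijectively with the $\theta_x$-classes.

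Taking the \emph{finest} partition, as you propose, destroys this. A connecting pattern may have coarser linkedness classes, so distinct blocks get equal or overlapping images, and pushing a block along $r$ and back along $-r$ can return a strictly larger set. Your claim that the blocks ``propagate consistently to all variables'' once one closed $p$ fixes all blocks is also unsupported, since a single $p$ says nothing about other patterns. What must be proved is that any two transferring patterns $r,r'$ from $x$ to the same $y$ induce the same equivalence and the same matching of classes. The paper does this by examining the closed pattern $-r+q+r'$ at $y$:
\begin{itemize}
\item if $\theta_y\neq\theta'_y$, it yields a blow-up of a grid;
\item if the equivalences agree but the matchings differ, it yields a blow-up of cycles with a cycle of length $\geq 2$;
\item only otherwise does it fix one $\theta_x$-class and propagate it to a proper arc-consistent subinstance, with a separate check for constraints touching variables not reached by transferring patterns.
\end{itemize}

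This also shows why your grid plan is misdirected. The grid lives on $A_y^{\inst I}$, built from the two crossing equivalences $\theta_y\neq\theta'_y$. It need not be visible in a closed pattern at your fixed $x$: pulled back, it collapses, since then $\Conn{q+r'-r+q}$ is full. The rectangularity you want (every $\theta_y$-class meets every $\theta'_y$-class) comes in the paper from maximality together with the failure of item~3. The linkedness closure of $S=\Conn{2(-r+q+r')}$ is an equivalence strictly coarser than $\theta_y$, hence full. So $S$ is linked and therefore full, which is precisely complete crossing.

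A conservative majority is not a hypothesis of the theorem, and it could not supply this anyway. The relation $\{0,1\}\times\{1,2\}\cup\{(2,0)\}$ is subdirect and non-linked, and it is preserved by the conservative majority returning $1$ on all triples of distinct elements. Yet its left and right partitions $\{\{0,1\},\{2\}\}$ and $\{\{1,2\},\{0\}\}$ do not cross completely. Such configurations are excluded only because the square of this relation is proper and linked, i.e.\ by item~3.
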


\noindent The linked-pattern case in \Cref{thm:binary-cycle-or-subinstance} does not immediately give the desired conclusion.
We handle it using a binary polymorphism of the instance.
\begin{definition}
Let $\inst I$ be a binary instance.
A \emph{binary instance polymorphism},
or simply a \emph{binary polymorphism},
is a family of operations
\[
\{f_x:A_x^2\to A_x\}_x
\]
such that,
for every $x$ and $y$ and every  $R\in\constr{R}_{xy}$,
whenever
\[
(a,b),(a',b')\in R,
\]
one has
\[
(f_x(a,a'),f_y(b,b'))\in R.
\]
The polymorphism is \emph{conservative} if every $f_x$ is conservative,
and \emph{commutative} if every $f_x$ is commutative.

Moreover, we say that a subset $B\subseteq A_x$ \emph{2-absorbs} $A_x$ if
\[
f_x(B,A_x)\cup f_x(A_x,B)\subseteq B.
\]
\end{definition}
This concept is very useful; it allows us to turn the linked-case
into what we need in the proof of~\Cref{thm:binary-cycle-or-subinstance}. 
Given a binary instance with a commutative conservative binary polymorphism,
we can turn \Cref{it:binary-linked} of~\Cref{thm:binary-cycle-or-subinstance}
into \Cref{it:ac-subinstance} of~\Cref{thm:binary-cycle-or-subinstance} by the following two propositions.

\begin{restatable}{proposition}{propLinkedGivesAbsorption}\label{prop:linked-gives-absorption-short}
    Let $\inst I$ be a binary instance with a conservative commutative binary polymorphism $\{f_x\}_x$.
    If some pattern defines a proper linked subrelation of $A_x^{\inst I}\times A_y^{\inst I}$,
    then either $A_x^{\inst I}$ or $A_y^{\inst I}$ has a proper 2-absorbing subset.
\end{restatable}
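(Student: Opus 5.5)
The plan is to argue by contradiction, translating 2-absorption into a statement about tournaments. First, note that $R:=\Conn{p}$ is preserved by the instance polymorphism. This follows by induction along the pattern: if $(a_0,\dots,a_n)$ and $(a'_0,\dots,a'_n)$ witness two connections, then $(f_{w_0}(a_0,a'_0),\dots,f_{w_n}(a_n,a'_n))$ witnesses a connection, because each $R_i$ is preserved. So $(a,b),(a',b')\in R$ implies $(f_x(a,a'),f_y(b,b'))\in R$.

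Since $f_x$ is conservative and commutative, it is described by a tournament $T_x$ on $A_x$: put an arc $a\to a'$ when $f_x(a,a')=a$, i.e.\ $a$ beats $a'$, for $a\neq a'$. A subset $B$ 2-absorbs $A_x$ exactly when every element of $B$ beats every element outside $B$. If $T_x$ is not strongly connected, its top strong component is a proper 2-absorbing subset. So I assume, for contradiction, that $T_x$ and $T_y$ are both strongly connected (in particular $|A_x|,|A_y|\ge 2$ or the relation could not be proper and linked), and I aim to derive $R=A_x\times A_y$.

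The key propagation step uses the following observation. If $a$ beats $a'$, then for $b\in a+R$ and $b'\in a'+R$ we get $f_y(b,b')\in a+R$. Symmetrically, if $b$ beats $b'$ in $T_y$, then for $a\in R^{-1}(b)$ and $a'\in R^{-1}(b')$ we get $f_x(a,a')\in R^{-1}(b)$.

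I would pick $a\in A_x$ with $S:=a+R$ of maximal size and show that $S$ is 2-absorbing in $A_y$ or equal to $A_y$. Take $b\in S$ and $c\notin S$, and choose $a'$ with $c\in a'+R$. If $a$ beats $a'$, the observation gives $f_y(b,c)\in S$. If $a'$ beats $a$, the observation gives $f_y(b,c)\in a'+R$. I would then use linkedness, via a chain $a=a_0,a_1,\dots$ with consecutive $a_i+R$ overlapping, together with the strong connectivity of $T_x$ (a Hamiltonian cycle through $a$), to route through elements that $a$ beats. Combined with maximality of $|S|$, this should force $f_y(b,c)\in S$. Since $T_y$ has no proper absorbing subset, we conclude $S=A_y$. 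Then the set $U=\{a : a+R=A_y\}$ is nonempty, and the dual observation shows that $U$ is either 2-absorbing or all of $A_x$. Hence $U=A_x$, so $R$ is full, contradicting properness.

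The main obstacle is the middle step: controlling $f_y(b,c)$ in the case where $a'$ beats $a$. This is exactly where linkedness must be used, since for non-linked relations (e.g.\ a graph of a bijection) the claim fails. If the maximal-fibre argument does not close, I would fall back on induction on $|A_x|+|A_y|$. In that approach, for a 2-element subset $\{a,a'\}$ with $a$ beating $a'$, the relation restricted there is again preserved, and I would look for a $2$-absorbing subset either inside it or of the form $\{b : f_y(b,\cdot)\text{ stays in a fibre}\}$.
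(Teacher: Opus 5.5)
There is a genuine gap, and it is exactly the step you flag. The claim that a maximum-size fibre $S=a+R$ is $2$-absorbing in $A_y$ or equal to $A_y$ does not follow from linkedness, maximality and strong connectivity of $T_x$. In fact it is false under those hypotheses.

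Take $A_x=\{0,1,2\}$ with $0\to 1\to 2\to 0$ in $T_x$, and $A_y=\{p,q,r,s\}$ with $p$ beating everything and $q\to r\to s\to q$ in $T_y$. Let $R=\{(0,p),(0,q),(1,p),(1,r),(2,p),(2,s)\}$. Conservative commutative $f_x,f_y$ preserve $R$ iff, whenever $a$ beats $a'$, every element of $a+R$ beats every element of $(a'+R)\setminus(a+R)$. Here that means $p,q\to r$, $p,r\to s$ and $p,s\to q$, all of which hold. So $R$ is preserved, subdirect, proper and linked (every fibre contains $p$), and $T_x$ is strongly connected. Yet all fibres have size $2$ and none is $2$-absorbing: $s$ beats $q$, $q$ beats $r$, and $r$ beats $s$.

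The absorbing set that does exist is $\{p\}$, the center of $-R$, on the side your argument never inspects. So inside your proof by contradiction, the middle step would have to use strong connectivity of $T_y$ as well. At that point it is equivalent to the whole problem of producing a central element, and no routing along a Hamiltonian cycle of $T_x$ can supply it. The fallback induction is too unspecified to evaluate.

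The missing idea, which is the paper's, is to stop asking $R$ itself to be central and instead pp-define a central relation.
\begin{itemize}
\item Using linkedness, replace $R$ by $R$ itself or an iterate of $R-R$ that is still proper but whose composition with its inverse is full.
\item Choose $A\subseteq A_x$ maximal such that $C=\bigcap_{a\in A}(a+R)$ still satisfies $C-R=A_x$; singletons qualify.
\item If $A=A_x$, every $c\in C$ has $(a,c)\in R$ for all $a\in A_x$, so $-R$ is central.
\item Otherwise $\{(a,b):\exists c\in C,\ (a,c),(b,c)\in R\}$ is a central relation on $A_x$ with center exactly $A$.
\end{itemize}
Both relations are preserved, because conservativity allows unary relations in pp-definitions.

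From there your last step is correct and is exactly how the paper concludes. The center is $2$-absorbing, unless some central $u$ and non-central $a'$ satisfy $f(u,a')=a'$. In that case the fibre of $a'$ is a proper $2$-absorbing subset of the other coordinate. Your tournament reformulation and the preservation of the pattern relation are also fine.
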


\begin{restatable}{proposition}{propAbsorptionGivesSubinstance}\label{prop:absorption-gives-subinstance-short}
Let $\inst I$ be a binary arc-consistent instance with a conservative commutative binary polymorphism $\{f_x\}_x$.
If there are $x$ and $B\varsubsetneq A_x$ such that $B$ 2-absorbs $A_x$,
then $\inst I$ has a proper arc-consistent subinstance.
\end{restatable}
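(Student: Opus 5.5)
The plan is to run arc-consistency propagation starting from the potato $B$ at $x$. The invariant we maintain is that every current potato is a nonempty set that 2-absorbs the corresponding original potato. For $u\neq x$ we start with $C_u:=A_u$, and we set $C_x:=B$. Then, for every ordered pair $(v,u)$ and every $R\in\constr R_{vu}$ (including the case $v=u$), we repeatedly replace
\[
C_u \;:=\; C_u\cap (C_v+R), \qquad\text{where } C_v+R=\{b: (a,b)\in R\text{ for some }a\in C_v\}.
\]
Potatoes only shrink, so this stabilises. At the fixed point, each restricted relation $R\cap(C_v\times C_u)$ has projections exactly $C_v$ and $C_u$, since the rule is applied in both directions ($\constr R_{uv}$ contains the opposite readings). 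So the resulting subinstance $\inst I'\le\inst I$, with potatoes $C_u$ and inherited constraints, is arc-consistent as long as all the $C_u$ are nonempty. It is also proper, because $C_x\subseteq B\varsubsetneq A_x$.

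We need three facts.

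\emph{(i) 2-absorption transfers along constraints.} Suppose $C\subseteq A_v$ 2-absorbs $A_v$ and $R\in\constr R_{vu}$. Then $C+R$ 2-absorbs $A_u$. To see this, take $b\in C+R$, witnessed by $a\in C$, and take any $b'\in A_u$. By arc-consistency of $\inst I$ there is $a'\in A_v$ with $(a',b')\in R$. Since $\{f_w\}_w$ is a polymorphism, $(f_v(a,a'),f_u(b,b'))\in R$. By absorption $f_v(a,a')\in C$, so $f_u(b,b')\in C+R$. Commutativity of $f_u$ handles the other argument order. It is essential here that we use the original relation $R$ and the full original potatoes, not the restricted ones. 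This is legitimate because $C_u\cap(C_v+R)$ coincides with the restriction step regardless of which relation we use.

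\emph{(ii) Intersections.} The intersection of two sets that 2-absorb $A_u$ again 2-absorbs $A_u$; this is immediate from the definition.

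\emph{(iii) Nonemptiness.} If $C,D$ are nonempty and both 2-absorb $A_u$, then $C\cap D\neq\emptyset$. Indeed, pick $c\in C$ and $d\in D$. Then $f_u(c,d)\in C$ by absorption of $C$ (since $d\in A_u$), and $f_u(c,d)\in D$ by absorption of $D$ together with commutativity.

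Putting these together by induction on the propagation steps: each $C_u$ stays nonempty and 2-absorbing. Indeed, $C_v+R$ is nonempty because $C_v\neq\emptyset$ and $R$ is subdirect in $A_v\times A_u$, it is absorbing by (i), and then (ii) and (iii) give that $C_u\cap(C_v+R)$ is nonempty and absorbing.

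The only delicate point is getting the invariant right, namely absorption relative to the original potatoes $A_u$, so that (i) can use the subdirectness of the original constraints of $\inst I$. Once this is set up, the argument is routine and does not even use conservativity.
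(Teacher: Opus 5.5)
Your proof is correct, but it builds the subinstance differently from the paper. Both arguments rest on the same facts: 2-absorption transfers along a constraint using arc-consistency of $\inst I$, nonempty 2-absorbing sets pairwise intersect by commutativity, and 2-absorbing sets are closed under intersection.

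The paper does no propagation. For each variable $y$ it takes a minimal nonempty 2-absorbing subset $A'_y\subseteq A_y$. For $R\in\constr R_{yz}$, the set $A'_z\cap(A'_y+R)$ is then nonempty and 2-absorbing, so minimality forces $A'_z\subseteq A'_y+R$. Properness at $x$ holds because $B$ prevents $A_x$ itself from being minimal. Since nonempty 2-absorbing subsets of $A_y$ pairwise intersect and are closed under intersection, this minimal set is unique. So the paper's subinstance is canonical, independent of $x$ and $B$, and needs no fixed-point or termination argument.

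Your version is the algorithmic one. It shows that ordinary arc-consistency propagation seeded with $B$ never empties a potato. The result is the largest arc-consistent subinstance whose $x$-potato lies in $B$. By your invariant each of its potatoes is nonempty and 2-absorbing, so each contains the paper's minimal set.

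You are right that conservativity is never used: the paper's intersection fact cites it, but absorption and commutativity suffice. Like the paper, you tacitly need $B\neq\emptyset$. The statement must assume this, since $\emptyset$ always 2-absorbs.
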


It remains to provide the commutative conservative binary polymorphism.
We associate with every binary instance a directed graph which controls the existence of conservative commutative binary polymorphisms.
\begin{definition}
Let $\inst I$ be a binary instance.
The directed graph $G(\inst I)$ is defined as follows:
\begin{itemize}
    \item the vertices are triples $(a,b;x)$ with $a\neq b\in A_x$;
    \item there is an arc
    \[
    (a,b;x)\longrightarrow (c,d;y)
    \]
    if and only if there exists $R\in\constr{R}_{xy}$ such that
    \[
    (a,c),(b,d)\in R
    \qquad\text{and}\qquad
    (a,d)\notin R.
    \]
\end{itemize}
The vertices $(a,b;x)$ and $(b,a;x)$ are called \emph{dual}.
\end{definition}
\noindent The following theorem ties the two concepts together.

\begin{restatable}{theorem}{thmBinaryPolymorphismObstruction}\label{thm:binary-polymorphism-obstruction}
A binary instance $\inst I$ has a conservative commutative binary polymorphism 
if and only if no dual vertices belong to the same strongly connected component of $G(\inst I)$.
\end{restatable}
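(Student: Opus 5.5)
The plan is to recognise the existence of a conservative commutative binary polymorphism as a 2-SAT problem whose implication graph is exactly $G(\inst I)$, and then apply the classical characterisation of 2-SAT satisfiability (Aspvall--Plass--Tarjan): a 2-CNF is satisfiable if and only if no literal lies in the same strongly connected component of the implication graph as its negation.

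\emph{Encoding.} A conservative commutative family $\{f_x\}_x$ is determined by choosing, for every variable $x$ and every unordered pair $\{a,b\}\subseteq A_x$ with $a\neq b$, which of $a,b$ equals $f_x(a,b)=f_x(b,a)$. On the diagonal, $f_x(a,a)=a$ is forced. I will read the vertex $(a,b;x)$ as the literal ``$f_x(a,b)=a$''. Its dual $(b,a;x)$ then reads ``$f_x(a,b)=b$'', which is precisely the negation. So assignments of truth values to the vertices of $G(\inst I)$ in which dual vertices receive opposite values correspond bijectively to conservative commutative families.

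\emph{Polymorphism condition.} Fix $R\in\constr{R}_{xy}$ and $(a,c),(b,d)\in R$. I need $(f_x(a,b),f_y(c,d))\in R$, and I split into cases.
\begin{itemize}
\item If $a=b$ and $c=d$, the image is $(a,c)\in R$.
\item If $a=b$ and $c\neq d$, the image is $(a,c)$ or $(a,d)$, and both lie in $R$. The case $c=d$, $a\neq b$ is symmetric.
\item If $a\neq b$ and $c\neq d$, the image is one of $(a,c),(b,d),(a,d),(b,c)$. The first two are in $R$. The image $(a,d)$ occurs exactly when $(a,b;x)$ and $(d,c;y)$ are both true, so if $(a,d)\notin R$ we need the clause ``$(a,b;x)\Rightarrow(c,d;y)$''. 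This is exactly the arc $(a,b;x)\to(c,d;y)$ of $G(\inst I)$. By symmetry, $(b,c)\notin R$ gives the arc $(b,a;x)\to(d,c;y)$, which is the same definition applied with the roles of the two tuples swapped.
\end{itemize}
Hence $\{f_x\}$ is a polymorphism if and only if the assignment satisfies every implication given by an arc of $G(\inst I)$. This holds also when $x=y$ or when the two literals coincide.

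\emph{Skew-symmetry.} Each arc must yield a genuine 2-clause, so I check that the contrapositive of every arc is also an arc. Given the arc $(a,b;x)\to(c,d;y)$ witnessed by $R$, take $-R\in\constr{R}_{yx}$. It contains $(d,b)$ and $(c,a)$ and omits $(d,a)$, so it witnesses the arc $(d,c;y)\to(b,a;x)$, which is the contrapositive. Therefore $G(\inst I)$ is precisely the implication graph of the 2-CNF $\bigwedge(\neg(a,b;x)\vee(c,d;y))$ over the arcs, with dual vertices playing the role of negated literals.

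\emph{Conclusion.} The Aspvall--Plass--Tarjan theorem now gives the equivalence. If a literal and its negation lie in a common strongly connected component, the implications force $\ell\Rightarrow\neg\ell\Rightarrow\ell$, so no consistent choice exists. Otherwise, the standard assignment works: set a literal true when its component comes later in a topological order than the component of its negation. This assignment is consistent on duals and respects every arc.

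The only step needing real care is the case analysis in the polymorphism condition: confirming that all obstructions, including degenerate pairs and constraints in $\constr{R}_{xx}$, are captured by arcs, and that skew-symmetry holds using the assumption that $\constr{R}_{yx}$ contains the reverses of the relations in $\constr{R}_{xy}$.
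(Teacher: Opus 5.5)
Your proof is correct and follows essentially the same route as the paper. The paper also reads $G(\inst I)$ as a 2-SAT implication graph with dual vertices as negated literals. It proves the forward direction by propagating $f_x(a,b)=a\Rightarrow f_y(c,d)=c$ along arcs, and the converse by assigning sink strongly connected components first, i.e.\ it re-derives the Aspvall--Plass--Tarjan criterion inline rather than citing it. Your explicit check that $-R$ witnesses the contrapositive arc $(d,c;y)\to(b,a;x)$ makes precise the paper's unproved remark that duality reverses arcs.
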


\noindent It remains to turn two dual vertices of $G(\inst I)$ into \Cref{it:ac-linked}.
Let the vertices be $(0,1;x)$ and $(1,0;x)$;
the two paths in $G(\inst I)$ connecting them give two patterns $p$ and $q$ from $x$ to $x$ 
and the two-element restrictions.
It follows that, restricting patterns at each step, $(0,1),(1,0)\in\Conn{p}$ while $(0,0)\notin\Conn{p}$
and similarly $(0,1),(1,0)\in\Conn{q}$ while $(1,1)\notin\Conn{q}$.
We obtain relations $R_{00}$ or $\neq$ in the first case and $R_{11}$ or $\neq$ in the second case.
This proves \Cref{it:ac-linked}.

\subsection{Proof of \texorpdfstring{\Cref{thm:binary-cycle-or-subinstance}}{the previous theorem}}
    Let us recall the theorem we are proving.
    \BinaryCycleOrSubinstance*

    The remainder of this section is devoted to the proof.
    First we consider the case of $\inst I$ not connected~%
    (as a microstructure,
    i.e. the graph whose vertex set is the disjoint union $\bigsqcup_x A_x^{\inst I}$ 
    and edges defined by the relations in constraints of $\inst I$).
    In such a case every $A_x^{\inst I}$ has at least two elements.

    Fix any $x$ and consider all reflexive relations on $A_x$ defined by patterns from $x$ to $x$,
    let $q$ be the sum of all these patterns.%
    \footnote{The sum is only potentially infinite,
    as for each relation realized by a pattern we choose 
    a single representative.}
    Note the $\Conn{q}$ is reflexive and $\Conn{q}\supseteq \Conn{p}$ for every $p$ defining a reflexive relation on $A_x$.
    Note that $-\Conn{q}$ is reflexive and represented by some pattern~%
    (pattern $-q$ for example);
    consequently $-\Conn{q}=\Conn{-q}\subseteq \Conn{q}$,
    i.e. $\Conn{q}$ is symmetric.
    By the same token $\Conn{q}+\Conn{q} = \Conn{q+q}\subseteq \Conn{q}$,
    i.e. $\Conn{q}$ is transitive.
    We have concluded that $\Conn{q}$ is,
    in fact,
    an equivalence.
   
    If at least one class of $\Conn{q}$ corresponds to a connected component of $\inst I$ we have a reduction:
    the class is pp-definable from $\inst I$ and constants,
    and for every $y$ we can define $A_y^{\inst I'}$ by connecting it via a pattern to this class.
    We obtain a subinstance and \Cref{it:binary-subinstance} holds.
   
    If no class of $\Conn{q}$ coincides with a connected component, then
    we fix two $a,b\in A_x^{\inst I}$ such that $a$ and $b$ are in the same connected component of $\inst I$ but are not $\Conn{q}$ related.
    By connectivity one can find a pattern $q'$ such that $a$ is connected to $b$ by $q'$.
    We will show that the pattern $q+q'$ can be taken for $p$ to satisfy \Cref{it:binary-cycle},
    with the $\Conn{q}$ classes as the blow-up blocks.
    It suffices to show that,
    for any 
    $c\in A_x^{\inst I}$ the set $\{c\}+q+q'$ is the $\Conn{q}$ class,
    and that if $c$ and $c'$ are in the same $\Conn{q}$ class 
    then $\{c\}+q+q' = \{c'\}+q+q'$.
    
    Let $C= \{c\}+q+q'$; 
    the set $C$ is a subset of a $\Conn{q}$ class
    as $\Conn{-q'-q+q+q'}$ is reflexive and connects every element of $C$ 
    to every element of $C$. 
    Note that if the class were bigger, say $C+q = D\varsupsetneq C$, 
    then $\{c\}+q \varsubsetneq \{c\}+q+q'+q-q' = C+q-q'= D-q'$ which shows~%
    (using arc consistency) that 
    $\{c\}+q$ is not a $\Conn{q}$ class, a contradiction.
    As $\{c'\} +q = \{c\}+q$ for any $c'$ in the same $\Conn{q}$ class as $c$,
    the second condition follows.
    Thus, we have a blow-up of a disjoint union of cycles,
    and the cycle including $a$ and $b$ has length at least $2$.

    We are now in the case that $\inst I$ has one connected component.
    Fix $x$ such that $|A^{\inst I}_x|>1$, 
    and let $q$ be a closed pattern from $x$ to $x$ such that $\Conn{q}\neq A_x^{\inst I}$  is an equivalence relation.
    If no such pattern exists, 
    we put $q$ to be the empty pattern let the equivalence $\Conn{q}$ be the identity.
    If more than one equivalence exists, we require that this equivalence is maximal,
    under inclusion,
    among equivalence relations defined by closed patterns at $x$.
    Call this equivalence $\theta_x$.

    Let $r$ be a pattern from $x$ to a variable $y$.
    We claim~%
    (and prove in the next paragraph), 
    that unless \Cref{it:binary-linked} holds, exactly one of the following happens:
    \begin{itemize}
        \item for every $a\in A^{\inst I}_x$,
        one has
        \[
        a/\theta_x+r=A^{\inst I}_y;
        \]
        \item the sets $a/\theta_x+r$,
        for $a\in A^{\inst I}_x$,
        form classes of an equivalence relation;
        in such case we call $r$ a \emph{transferring pattern} 
        and call the equivalence it defines by $\theta_y$.
    \end{itemize}
    
    Indeed,
    consider $\Conn{q+r}$;
    if it is full, then the first case below holds.
    If it is not full and linked, then we are in \Cref{it:binary-linked}.
    Hence,
    for some $k$, the relation
    $\Conn{k(q+r-r-q)}$
    is a proper equivalence on $A^{\inst I}_x$ and it includes $\theta_x$.
    By maximality of $\theta_x$,
    $\Conn{k(q+r-r-q)}$ is equal to $\theta_x$.
    It follows that the sets $a/\theta_x+r$ are precisely the classes 
    of an equivalence relation on $A^{\inst I}_y$,
    as claimed.

    We next show that two different patterns from $x$ to the same variable $y$
    cannot define two different equivalence relations, 
    unless \Cref{it:binary-cycle} or \Cref{it:binary-linked} holds.
    Suppose that $r,r'$ are patterns from $x$ to $y$ 
    defining distinct equivalence relations $\theta_y$ and $\theta'_y$ 
    on $A^{\inst I}_y$.
    Note that $\Conn{r}$ establishes a bijection 
    between the $\theta_x$-classes and the $\theta_y$-classes,
    similarly $\Conn{r '}$ establishes a bijection 
    between the $\theta_x$-classes and the $\theta'_y$-classes.
    Consider pattern $-r+q+r'$ from $y$ to $y$,
    the relation $R=\Conn{-r+q+r'}$ establishes a bijection between the $\theta_y$-classes and the $\theta'_y$-classes
    and, moreover is full within each pair of classes.

    Consider the relation $S=R+R$~%
    (i.e. the relation defined by the pattern $-r+q+r'-r+q+r'$),
    and then $T=S-S+S+\dotsb -S$ sufficiently many times
    to make it transitive.
    The relation $T$ is an equivalence relation on $A^{\inst I}_y$,
    and it cannot be $\theta_y$ because 
    there exists a class of $\theta_y'$ which intersects at least two classes of $\theta_y$.
    It follows that for some $a\in A^{\inst I}_y$ we have $a+S$ contain 
    at least two classes of $\theta'_y$
    and consequently $a+S-S$ contains at least two classes of $\theta_y$.
    Thus, by maximality of $\theta_y$~%
    (which follows from maximality of $\theta_x$ as the transfer by $-r$ 
    would provide a bigger equivalence on $A^{\inst I}_x$ from a bigger equivalence on $A^{\inst I}_y$),
    the equivalence is full which implies that either we have~\Cref{it:binary-linked}~%
    (and we are done with the proof)
    or $S$ is full.

    Since $S$ is full, we conclude that each class of $\theta_y$ 
    intersects each class of $\theta'_y$:
    indeed, for $a\in A^{\inst I}_y$,
    $B=a+R$ is a $\theta'_y$ corresponding to the $\theta_y$-class of $a$,
    and $a+R+R=B+R$ is the union of $\theta'_y$-classes 
    corresponding to the $\theta_y$ classes which intersect $B$.

    Let $k$ be the number of $\theta_y$-classes,
    which we enumerate by $0,\dotsc,k-1$.
    We define $B_{ij}$ to be the intersection of $i$-th $\theta_y$ class
    with the class of $\theta'_y$ corresponding to the $j$-th $\theta_y$-class.
    It follows that in $(a,b)\in R$, 
    if and only if $a\in B_{ij}$ and $b\in B_{li}$ for some $i,j,l$.
    Thus, $R$ defines a blow-up of a grid,
    and we are in \Cref{it:binary-cycle}.
    We are left with the case of unique $\theta_y$ for every $y$~%
    (which can be full, in case there is no transferring pattern from $x$ to $y$).

    We know that,
    for every variable $y$,
    all patterns from $x$ to $y$ which transfer $\theta_x$ 
    define the same equivalence relation $\theta_y$.
    We now rule out the possibility that two such patterns 
    induce different permutations of the same equivalence classes~%
    (unless \Cref{it:binary-cycle} holds).
    Let $r,r'$ be patterns from $x$ to $y$ defining the same equivalence relation $\theta_y$,
    but suppose that,
    for some $a\in A^{\inst I}_x$,
    \[
    a/\theta_x+r\neq a/\theta_x+r'.
    \]
    Let $R=\Conn{-r+q+r'}$, a binary relation on $A_y^{\inst I}$;
    for any $a\in A^{\inst I}_y$ the set $\{a\}-r$ is a subset of a $\theta_x$ class 
    associated with $a/\theta_y$ via pattern $r$, and $\{a\}-r+q$ is this whole class.
    Finally, $\{a\}-r+q+r'$ is a full $\theta_y$ class.

    Therefore, $R$ viewed as a digraph, is a blow-up of a disjoint union of directed cycles,
    with the blow-up blocks being the $\theta_y$-classes.
    Moreover, the cycle corresponding to the block $a/\theta_x +q$ has length at least $2$.
    This shows that \Cref{it:binary-cycle} holds.

    In the remaining case if two patterns from $x$ to $y$ transfer $\theta_x$,
    then they define the same equivalence relation $\theta_y$ and the same permutation of $\theta_y$-classes.
    In this case we construct a proper subinstance and conclude the proof in the case~\Cref{it:binary-subinstance}.
    
    Fix one $\theta_x$-class and call it $A^{\inst I'}_x$.
    For every variable $y$ reachable from $x$ by a transferring pattern $r$
    define
    \[
    A^{\inst I'}_y=A^{\inst I'}_x+r,
    \]
    By the preceding paragraphs, this definition is independent of the choice of $r$.
    For variables not reachable in this way,
    put $A^{\inst I'}_y=A^{\inst I}_y$.
    For every pair of variables $yz$ restrict $\constr{R}_{yz}$,
    by intersecting each relation with $A^{\inst I'}_y\times A^{\inst I'}_z$.
    It remains to prove that $\inst I'$ is arc-consistent.
    
    Take any $y$ and $z$;
    if $A_y^{\inst I'}=A_y^{\inst I}$ and $A_z^{\inst I'}=A_z^{\inst I}$
    the relations in $\constr{R}_{yz}$ stay the same so the condition holds.
    If $A_y^{\inst I'}\neq A_y^{\inst I}$ and $A_z^{\inst I'}=A_z^{\inst I}$,
    take any $R\in\constr{R}_{yz}$ and a transferring pattern $r$ from $x$ to $y$.
    Extend the pattern $r$ by a step to $z$ using $R$ and call the resulting pattern $r'$.
    We must have $a/\theta_x+r' = A^{\inst I}_z$ as no transferring patterns reach $z$ from $x$. 
    This implies that $A^{\inst I'}_y + R= A^{\inst I}_z$ 
    and the arc-consistency condition holds.

    In the last case $A^{\inst I'}_y$ is a proper subset of $A^{\inst I}_y$,
    and the same for $z$.
    Fix $R\in\constr{R}_{yz}$; and a transferring pattern $r$ from $x$ to $y$.
    As in the previous case we extend $r$ by a step to $z$ using $R$ and call the resulting pattern $r'$.
    If $r'$ is transferring, then $R\cap(A^{\inst I'}_y\times A^{\inst I'}_z)$ is subdirect 
    by the transferring property.
    If $r'$ is not transferring, then $a/\theta_x+r' = A^{\inst I}_z$ 
    for every $a\in A^{\inst I}_x$ which means that $b/\theta_y+R=A^{\inst I}_z$ for every $b\in A^{\inst I}_y$.
    This implies that, for a transferring path $r''$ from $x$ to $z$
    and its extension by $r'''$ using  $-R$ we have $a/\theta_x+r''' = A^{\inst I}_y$ for every $a\in A^{\inst I}_x$. 
    Thus, for every $c\in A^{\inst I}_z$, we have $c/\theta_z-R=A^{\inst I}_y$.
    This gives subdirectness in the last case 
    and proves \Cref{it:binary-subinstance}.

\subsection{Proof of \texorpdfstring{\Cref{prop:linked-gives-absorption-short}}{the proposition}}

This proposition is a version of standard argument exploiting a linked relation to provide absorption. 
The difference is that, in our case, we have a binary polymorphism which allows us to obtain 2-absorption instead of absorption.
Let us restate the proposition for convenience.

\propLinkedGivesAbsorption*

The remaining part of this section is devoted to the proof of~\Cref{prop:linked-gives-absorption-short}.
Our first goal is to define a \emph{central relation} on $A_x^{\inst I}$ or on $A_y^{\inst I}$: 
a non-trivial binary relation $R\varsubsetneq A\times B$ is \emph{central} if there is $a\in A$  such that $\{a\}\times B\subseteq R$;
the set of all such $a$ is called the \emph{center} of $R$.

Let $p$ be a pattern from $x$ to $y$ such that
$\Conn{p}$ is a linked subrelation of $A_x^{\inst I}\times A_y^{\inst I}$.
Replacing $p$,
if necessary,
by a sufficiently long pattern of the form $k(p-p)$,
we can assume that without loss of generality that 
$\Conn{p}$ is not full, 
while $\Conn{p-p}=\Conn{p}-\Conn{p}$
is full~%
(note that if we do need such replacement, 
then the new pattern is 
from $x$ to $x$, not to $y$; 
this is not a problem).

For every  $A\subseteq A_x^{\inst I}$
define $\Cen{A}=\bigcap_{a\in A}(\{a\}+p)$.
Note that $\Cen{\{a\}}-p=A_x^{\inst I}$ for every $a\in A_x^{\inst I}$.
This allows us to fix $A$ which is maximal with respect to the property that $\Cen{A}-p=A_x^{\inst I}$.
If $A=A_x^{\inst I}$, then $(-\Conn{p})$ is a central relation~%
(as any element in $\Cen{A_x^{\inst I}}\neq \emptyset$
can play the role of $a$ in the definition of central relation).
Therefore, we assume that $A$ is a proper subset of $A_x^{\inst I}$.
Define a binary relation $R\subseteq A_x^{\inst I}\times A_x^{\inst I}$
by
\[
(a,b)\in R
\quad\Longleftrightarrow\quad
\text{there exists }c\in \Cen{A}\text{ such that }(a,c),(b,c)\in\Conn{p}.
\]
For every $a\in A$ one has $\{a\}\times A_x^{\inst I}\subseteq R$,
as $a+p\supseteq\Cen{A}$.
Therefore, $A$ is contained in the center of $R$.
By maximality of $A$,
for every $a'\in A_x^{\inst I}\setminus A$ there is some $b'\in A_x$ such that $(a',b')\notin R$.
Thus, $R$ is a proper central relation with center $A$.

Let $R$ be a central relation with center $A$.
Note that $R$ can be a relation between $A_y^{\inst I}$ and $A_x^{\inst I}$
if it was given by $(-\Conn{p})$ or between $A_x^{\inst I}$ and $A_x^{\inst I}$
if it was give by the argument above;
we focus on the first case -- the second case is analogous.
The relation $R$ is compatible with $f_y/f_x$,
since it is defined from the pattern relation by primitive-positive operations.
If
\[
  f_y(A,A_y^{\inst I})\subseteq A,
\]
then,
by commutativity,
$A$ is a proper 2-absorbing subset of $A_y^{\inst I}$.

Otherwise, choose $a\in A$ and $a'\in A_y^{\inst I}\setminus A$ such that
\[
f_y(a,a')=f_y(a',a)\notin A.
\]
By conservativity,
this value must be $a'$.
We claim that
the set $a'+R$
2-absorbs $A_x^{\inst I}$~%
(the case when the RHS of $R$ is $A_x^{\inst I}$ is analogous).
Indeed,
take $b\in a'+R$ and $c\in A_x^{\inst I}$.
Since $a$ is central,
both $(a',b)$ and $(a,c)$ lie in $R$.
Compatibility of $R$ with $f_y/f_x$ gives
\[
(f_y(a',a),f_x(b,c))=(a',f_y(b,c))\in R,
\]
which means that $f_x(b,c)\in a'+R$.
By commutativity the same holds for $f_x(c,b)$.
Therefore, $a'+R$ 2-absorbs $A_x^{\inst I}$ 
and, since $a'$ is not central,
$a'+R$ is a proper subset of $A_x^{\inst I}$; we are done.

\subsection{Proof of \texorpdfstring{\Cref{prop:absorption-gives-subinstance-short}}{the proposition}}

This is standard and known -- in the presence of binary absorption,
arc-consistency is enough to guarantee the existence of a proper subinstance~%
(even though, in our case, the absorption is not by means of the usual
polymorphisms).
We state the proposition for convenience.

\propAbsorptionGivesSubinstance*

\noindent We begin with a simple lemma:
\begin{namedlemma}[Basic 2-absorption facts]\label{lem:absorption-basic}
Let $\inst I$ be arc-consistent and let $\{f_x\}_x$ be a conservative commutative binary polymorphism of $\inst I$.
\begin{itemize}
    \item If $B,C\subseteq A_x$ both 2-absorb $A_x$,
    then $B\cap C\neq\emptyset$.
    \item If $B\subseteq A_x$ 2-absorbs $A_x$ and $p$ is a pattern from $x$ to $y$,
    then $B+p$ 2-absorbs $A_y$.
\end{itemize}
\end{namedlemma}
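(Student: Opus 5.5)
The proof is a direct element-chasing argument from the definition of 2-absorption, with arc-consistency supplying the witnesses needed in the second item. Throughout, absorbing subsets are taken to be nonempty; this is implicit in the intended use, and nonemptiness of $B+p$ will follow from arc-consistency.

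For the first item, I take arbitrary $b\in B$ and $c\in C$ and consider $d=f_x(b,c)$. Since $B$ 2-absorbs $A_x$, we have $f_x(B,A_x)\subseteq B$, so $d\in B$. Since $C$ 2-absorbs $A_x$, we have $f_x(A_x,C)\subseteq C$, so $d\in C$. Hence $d\in B\cap C$, and the intersection is nonempty. Neither conservativity nor commutativity is needed here.

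For the second item, I argue by induction on the length of $p$. Since $B+(p+q)=(B+p)+q$, it suffices to treat a pattern consisting of a single relation $R\in\constr{R}_{xy}$, provided the one-step claim is proved for an arbitrary 2-absorbing set. The pattern of length zero is trivial.

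So let $B$ 2-absorb $A_x$, and take $d\in B+R$ and $e\in A_y$. Choose $b\in B$ with $(b,d)\in R$. By arc-consistency $\proj_2(R)=A_y$, so there is $a\in A_x$ with $(a,e)\in R$. Applying the instance polymorphism to the pairs $(b,d),(a,e)\in R$ gives
\[
(f_x(b,a),f_y(d,e))\in R .
\]
Here $f_x(b,a)\in f_x(B,A_x)\subseteq B$, so $f_y(d,e)\in B+R$. The same argument with the pairs in the other order gives $f_y(e,d)\in B+R$ (alternatively, use commutativity). Thus $f_y(B+R,A_y)\cup f_y(A_y,B+R)\subseteq B+R$.

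Finally, $B+R\neq\emptyset$ because $\proj_1(R)=A_x\supseteq B\neq\emptyset$. This completes the induction step.

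The only point needing care is exactly this use of arc-consistency: the potatoes must coincide with the projections of every constraint, so that each $e\in A_y$ has an $R$-partner in $A_x$, and so that $B+R$ is nonempty. I do not expect any genuine obstacle.
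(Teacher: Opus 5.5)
Your proof is correct and follows the paper's argument. The first item is the inclusion $f_x(B,C)\subseteq B\cap C$; as you note, this follows from two-sided absorption alone, so the paper's appeal to conservativity and commutativity is superfluous, and nonemptiness of $B$ and $C$ is implicitly required. The second item is exactly the paper's ``propagate along the constraints using compatibility,'' which you carry out one constraint at a time, with arc-consistency supplying an $R$-partner for each $e\in A_y$ and the nonemptiness of $B+R$.
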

\begin{proof}
For the first item,
conservativity and commutativity give
\[
f_x(B,C)\subseteq B\cap C,
\]
and the left-hand side is nonempty.
For the second item,
propagate the absorption condition along the constraints of the pattern,
using compatibility of the polymorphism with every constraint.
\end{proof}

We now begin the proof of \Cref{prop:absorption-gives-subinstance-short}:
For every variable $y$,
choose a minimal 2-absorbing subset $A'_y\subseteq A_y$.
Such sets exist because each potato is finite and each full potato 2-absorbs itself.
Moreover, at least for the variables $x$ fixed in the assumption,
the set is proper.
We claim that the sets $A'_y$ form the potatoes of an arc-consistent subinstance.
Let $R\in\constr{R}_{yz}$ be any binary constraint.
Since $\inst I$ is arc-consistent,
$A'_y+R$ is nonempty.
By \namedlemcref{lem:absorption-basic},
it 2-absorbs $A_z$.
Hence,
$
A'_z\cap (A'_y+R) \neq \emptyset
$
and it $2$-absorbs $A_z$ by \namedlemcref{lem:absorption-basic} again.
By the minimality of $A'_z$,
this intersection must be equal to $A'_z$.
Therefore,
\[
A'_z\subseteq A'_y+R_{yz}.
\]
The same argument for the converse constraint gives the reverse support condition.
Hence, the restricted instance is proper~%
(again, due to the fixed variable $x$) 
and arc-consistent.

\subsection{Proof of \texorpdfstring{\Cref{thm:binary-polymorphism-obstruction}}{the theorem}}

\noindent We restate the theorem for convenience.

\thmBinaryPolymorphismObstruction*

\noindent 
We begin with the forward implication, 
which we prove by contrapositive.
Suppose, 
there are two dual vertices $(a,b;x)$ and $(b,a;x)$ 
in the same strongly connected component of $G(\inst I)$.
Let $\{f_x\}_x$ be a conservative binary polymorphism.
If
\[
(a,b;x)\longrightarrow(c,d;y)
\]
is an arc in $G(\inst I)$,
then
\[
f_x(a,b)=a
\quad\Longrightarrow\quad
f_y(c,d)=c.
\]
Otherwise, conservativity gives $f_y(c,d)=d$,
and compatibility with $R\in\constr{R}_{xy}$ would force $(a,d)\in R$,
contrary to the definition of the arc.
This implication extends along directed paths.
Since $(a,b;x)$ and $(b,a;x)$ are strongly connected,
we get
\[
f_x(a,b)=a\quad \Longleftrightarrow\quad f_x(b,a)=b.
\]
This is incompatible with commutativity.
Hence, no conservative commutative binary polymorphism exists.

Conversely,
assume that no two dual vertices lie in the same strongly connected component of $G(\inst I)$.
We construct a conservative commutative binary polymorphism.
The duality operation reverses arcs: 
if there is a directed path from $(a,b;x)$ to $(c,d;y)$,
then there is a directed path from $(d,c;y)$ to $(b,a;x)$.
Hence, the duals of the vertices in any strongly connected component again form
a strongly connected  component.
Moreover, the duality reverses the reachability partial order on the strongly
connected components of $G(\inst I)$.

We now assign values to the operations $f_x$ component by component.
Start with all values undefined.
At each step,
choose a sink strongly connected component in the remaining digraph.
For every vertex $(a,b;x)$ in this component,
set
\[
f_x(a,b)=a.
\]
For every dual vertex $(b,a;x)$,
set
\[
f_x(b,a)=a.
\]
This is well-defined because no strongly connected component contains two dual vertices.
Then remove both the chosen component and its dual component,
and continue.

By construction, each element of $\{f_x\}_x$ is conservative and commutative.
It remains to check compatibility.
Note that in order to verify the compatibility condition for a constraint $R\in\constr{R}_{xy}$,
it suffices to check it for pairs of tuples $(a,c),(b,d)\in R$ such that $(a,d)\notin R$,
as the condition is trivially satisfied otherwise.
This means that we only need to check the condition for arcs of $G(\inst I)$.

Suppose
\[
(a,b;x)\longrightarrow(c,d;y)
\]
is an arc and $f_x(a,b)=a$~%
(i.e. we are assigning in the sink component, not its dual).
If $(c,d;y)$ was still unassigned when the component of $(a,b;x)$ was chosen,
then the sink property forces $(c,d;y)$ to lie in the same strongly connected component,
and hence $f_y(c,d)=c$.
If it had already been assigned,
it must've been assigned in an earlier step as a sink,
since the component of $(a,b;x)$ arrows to the component of $(c,d;y)$.
Thus, $f_y(c,d)=c$ by the construction.

If $f_x(a,b)=b$,
i.e. the value was assigned in the dual/source component,
there are two options based on the relation $R\in\constr{R}_{xy}$ which witnesses the arc.
Either $(b,c)\in R$, 
and then $f_x(a,b)=b$ is compatible with both choices for $f_y(c,d)$;
or $(b,c)\notin R$,
and the same $R$ witnesses
\[
(c,d;y)\longrightarrow(a,b;x),
\]
which means~%
(since the component of $(a,b;x)$ is a source component)
that $(c,d;y)$ is in the same strongly connected component and $f_y(c,d)=d$.
In either case, the compatibility condition is satisfied.

%% file: tractability-rounding-and-cost.tex
\section{Tractability: rounding and cost analysis}
\label{sec:tractability-rounding-cost}

This appendix supplies the proofs for the preprocessing, level construction, and
cost analysis from \Cref{sec:tractability}.  We keep all notation and
definitions introduced there, in particular the LP marginals $\lambda_v(a)$ (cf.
the definition of BLP below),
the threshold-conflict and good-level notions, the level procedure defining
$L(v)$, and the truncated instance $\inst I\leq \inst J$.

\subsection{The ambient cycle-consistent instance}
\label{subsec:ambient-cycle-consistent}

\begin{proposition}[$(2,3)$-consistency gives the ambient instance]
\label{prop:23-cycle-consistency-levels}
The $(2,3)$-consistency preprocessing is solution-preserving.  
Moreover, if no potato is emptied, then
$\inst J$ is arc-consistent and cycle consistent.
\end{proposition}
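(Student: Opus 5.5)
We prove the three claims in turn: solution preservation, arc-consistency, and cycle consistency, which is the only part needing an argument.

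\emph{Solution preservation.} Every tuple deleted from some $R_{xy}$ is a pair $(a,b)$ for which no $c\in A_z$ satisfies $(a,c)\in R_{xz}$ and $(b,c)\in R_{yz}$. A solution $\alpha$ with $(\alpha(x),\alpha(y))=(a,b)$ would give $c=\alpha(z)$ as such a witness. Hence no solution ever uses a deleted tuple, and induction over the deletion steps shows the solution set is unchanged.

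\emph{Arc-consistency.} We use the normalisation $R_{xx}=\{(a,a):a\in A_x\}$, which we keep intersected throughout, so that $A_x$ is the set of $a$ with $(a,a)\in R_{xx}$. Fix $(a,b)\in R_{xy}$ and apply the $(2,3)$-condition with $z=x$. This gives $c$ with $(a,c)\in R_{xx}$ and $(b,c)\in R_{yx}$, so $c=a$ and $(a,a)\in R_{xx}$, i.e.\ $a\in A_x$. Conversely, for $a\in A_x$ apply the condition to $(a,a)\in R_{xx}$ with third variable $y$. This gives $c$ with $(a,c)\in R_{xy}$, so $a\in\proj_1(R_{xy})$. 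The second projection is symmetric, using $R_{yx}=-R_{xy}$. Hence $\proj_1(R_{xy})=A_x$ and $\proj_2(R_{xy})=A_y$ for all $x,y$. Non-emptiness of the potatoes is assumed.

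\emph{Cycle consistency.} Let $w_0=u,w_1,\dots,w_n=u$ be a closed pattern and $a\in A_u$. Since each $\constr R_{vw}$ is the single relation $R_{vw}$, we must find $a=a_0,a_1,\dots,a_n=a$ with $(a_{i-1},a_i)\in R_{w_{i-1}w_i}$. We prove by induction on $n$ the stronger statement: for any pattern $w_0,\dots,w_n$ and any $(a,b)\in R_{w_0w_n}$ there is a walk from $a$ to $b$ along the pattern. The base case $n=1$ is immediate. For the inductive step, take $(a,b)\in R_{w_0w_n}$ and apply $(2,3)$-consistency with third variable $w_{n-1}$. This gives $c$ with $(a,c)\in R_{w_0w_{n-1}}$ and $(b,c)\in R_{w_nw_{n-1}}$, i.e.\ $(c,b)\in R_{w_{n-1}w_n}$. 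By induction there is a walk from $a$ to $c$ along $w_0,\dots,w_{n-1}$, and appending the step $(c,b)$ finishes it. Applying this with $w_0=w_n=u$ and the pair $(a,a)\in R_{uu}$ yields cycle consistency.

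The main obstacle is only bookkeeping. First, the preprocessing must keep $R_{yx}=-R_{xy}$: we delete tuples symmetrically, and $R_{xx}$ stays diagonal since we only ever delete from it. Second, the degenerate cases in which $z$ coincides with $x$ or $y$ need to be allowed in the condition; they are exactly what the arc-consistency argument above uses.
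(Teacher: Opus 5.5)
Your proof is correct and follows the same route as the paper. Soundness holds because any solution supplies the missing witness. Cycle consistency comes from repeatedly applying $(2,3)$-consistency to triples containing the base variable of the pattern and closing up via the diagonal $R_{uu}$. The differences are cosmetic. The paper extends forward from $a$, maintaining $(a,a_i)\in R_{x_0x_i}$, and uses the diagonal at the last step; your backward induction fixes both endpoints in advance and yields the slightly stronger $R_{w_0w_n}\subseteq\Conn{p}$. You also spell out the arc-consistency argument (via the degenerate triple $z=x$), which the paper merely asserts.
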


\begin{proof}
The preprocessing is sound: a deleted tuple has no local extension to some third
variable, whereas every global feasible solution would provide such an
extension.  Therefore no value or tuple used by a feasible solution is ever
deleted.  Hence an empty potato certifies infeasibility.

Arc-consistency is part of the enforced local consistency.  For cycle consistency,
let
\[
        p=x=x_0,x_1,\ldots,x_n=x
\]
be a closed pattern, and fix $a\in A_x^{\inst J}$.  By arc-consistency choose
$a_1\in A_{x_1}^{\inst J}$ with $(a,a_1)\in R_{x_0x_1}^{\inst J}$.  Suppose
$a_i$ has been chosen and $(a,a_i)\in R_{x_0x_i}^{\inst J}$.  Applying
$(2,3)$-consistency to $x_0,x_i,x_{i+1}$ gives $a_{i+1}$ such that
\[
        (a_i,a_{i+1})\in R_{x_ix_{i+1}}^{\inst J}
        \quad\text{and}\quad
        (a,a_{i+1})\in R_{x_0x_{i+1}}^{\inst J}.
\]
At the end $x_n=x_0=x$, and the diagonal relation is equality, so $a_n=a$.  Thus
$p$ connects $a$ to itself.
\end{proof}

\subsection{Global thresholds and variable levels}
\label{subsec:global-thresholds-conflicts-levels}

The \emph{basic linear programming} (BLP)
relaxation~\cite{Kumar11:soda,Kun-BLP} of a $\MinCostHom(\relstr A)$ instance $\inst J$ with
variables $V$, constraints $R_{xy}$ for $x,y\in V$, and cost functions
$c_x:A\to\mathbb{Q}_{\geq 0}$ is given in~\Cref{fig:lp}.

\begin{figure}[ht]
    \centering
\begin{alignat}{4}
  &\text{minimize} &\qquad& \sum_{x \in V}\sum_{a \in A} \lambda_x(a) \cdot c_x(a)\notag \\ 
  &\text{subject to} &\qquad& \sum_{a \in A} \lambda_x(a) = 1, \qquad &&\forall x \in V,\notag\\
  & &\qquad& \sum_{(a,b)\in R_{xy}} \lambda_{xy}(a,b) = 1, \qquad &&\forall x,y\in V,\notag\\
  & &\qquad& \sum_{(a,b)\in R_{xy}} \lambda_{xy}(a,b) = \lambda_x(a), \qquad &&\forall x,y\in V,\,\forall a \in A,\notag \\
  & &\qquad& \sum_{(a,b)\in R_{xy}} \lambda_{xy}(a,b) = \lambda_y(b), \qquad &&\forall x,y\in V,\,\forall b \in A,\notag \\
  & &\qquad& \lambda_x(a), \, \lambda_{xy}(a,b) \geq 0, \qquad &&\forall x,y \in V,\,\forall a,b \in A.\notag
\end{alignat}
    \caption{BLP relaxation for an instance of $\MinCostHom(\relstr A)$.}
    \label{fig:lp}
\end{figure}

\begin{namedlemma}[One-level conflict bound]
\label{lem:one-level-conflict-bound}
For every fixed $v$, $a\in A_v^{\inst J}$, and $i\geq 1$,
\[
        \Pr[v\mapsto a\text{ conflicts with }t_i]
        \leq 2^{|A|+i}\lambda_v(a).
\]
\end{namedlemma}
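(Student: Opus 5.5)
The plan is a direct union bound over the finitely many ``windows'' in which the threshold $t_i$ would have to fall to produce a conflict. Fix $v$, $a\in A_v^{\inst J}$ and $i\geq 1$. By the definition of threshold conflict, $v\mapsto a$ conflicts with $t_i$ exactly when there is some $B\subseteq A_v^{\inst J}\setminus\{a\}$ with
\[
\lambda_v(B)<t_i\leq \lambda_v(B\cup\{a\})=\lambda_v(B)+\lambda_v(a).
\]
This uses $a\notin B$ and the additivity of $\lambda_v$ over disjoint sets. So the event is the union, over such $B$, of the events $t_i\in J_B$, where $J_B:=\bigl(\lambda_v(B),\lambda_v(B)+\lambda_v(a)\bigr]$ is an interval of length exactly $\lambda_v(a)$.

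Next, bound each event $t_i\in J_B$ using the distribution of $t_i$. Since $t_i$ is uniform on $\bigl(2^{-i},2^{-(i-1)}\bigr]$, an interval of length $2^{-i}$, its density is $2^{i}$. Hence for any interval $J$,
\[
\Pr[t_i\in J]\leq 2^{i}\,|J|,
\]
and intersecting with the support can only decrease this. Therefore $\Pr[t_i\in J_B]\leq 2^{i}\lambda_v(a)$ for every $B$.

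Finally, apply the union bound. The number of admissible sets $B$ is at most $2^{|A_v^{\inst J}|-1}\leq 2^{|A|}$, which gives
\[
\Pr[v\mapsto a\text{ conflicts with }t_i]\leq 2^{|A|}\cdot 2^{i}\lambda_v(a)=2^{|A|+i}\lambda_v(a).
\]
The degenerate case $\lambda_v(a)=0$ is consistent with this: every $J_B$ is then empty and the probability is $0$.

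There is no real obstacle here. The only point needing care is to treat $\lambda_v(B\cup\{a\})$ as $\lambda_v(B)+\lambda_v(a)$, so that every window has length exactly $\lambda_v(a)$ rather than something depending on $B$.
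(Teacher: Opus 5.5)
Your proof is correct and follows the paper's own argument: for each admissible $B$ the bad thresholds form an interval of length $\lambda_v(a)$, the uniform density of $t_i$ on an interval of length $2^{-i}$ gives probability at most $2^{i}\lambda_v(a)$, and a union bound over at most $2^{|A|}$ choices of $B$ finishes. Your proof writes out these steps more explicitly than the paper does, but the argument is the same.
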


\begin{proof}
For a fixed set $B\subseteq A$, the set of bad thresholds is an interval of length at most
$\lambda_v(a)$.  The sampling interval for $t_i$ has length $2^{-i}$, so the
probability is at most $2^i\lambda_v(a)$.  
There are at most $2^{|A|}$ choices of
$B$.
\end{proof}

\subsection{Computing variable levels}
\label{subsec:computing-variable-level}

\begin{namedlemma}[Fresh-threshold invariant]
\label{lem:fresh-threshold-invariant-levels}
At every non-terminal iteration, if $\mathrm{L}$ is not good for $v$, then the
minimal-marginal witness selected in Step~2 conflicts with one of
$t_{\mathrm{P}+1},\ldots,t_{\mathrm{L}}$, where
$\mathrm{P}$ is the value of $\mathrm{L}$ at the previous
iteration (or $0$ at the first iteration).
\end{namedlemma}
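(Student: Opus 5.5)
The plan is induction-free: compare the current witness with the witness selected at the previous iteration and use the minimality of the latter. At the first iteration $\mathrm{P}=0$, so the claim is just that the selected witness conflicts with one of $t_1,\ldots,t_{\mathrm{L}}$, which holds by the definition of Step~2. So assume a previous iteration exists. At that iteration $\mathrm{L}$ had value $\mathrm{P}$, and a witness $a'$ was selected with $\lambda_v(a')<t_{\mathrm{P}}$, with $v\mapsto a'$ conflicting with one of $t_1,\ldots,t_{\mathrm{P}}$, and with $\lambda_v(a')$ minimal among all such values. The current value is $\mathrm{L}=\ell(v,a')+1$.

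First I record two preliminary facts. The value $\lambda_v(a')$ is positive: a conflict requires $\lambda_v(B)<t\leq\lambda_v(B)+\lambda_v(a')$, which is impossible if $\lambda_v(a')=0$. Hence $\ell(v,a')$ is finite and $\mathrm{L}$ is a genuine integer. Next, since $\lambda_v(a')<t_{\mathrm{P}}\leq 2^{-(\mathrm{P}-1)}$, we get $\ell(v,a')\geq \mathrm{P}$ and so $\mathrm{L}\geq \mathrm{P}+1$. The sampling intervals $(2^{-i},2^{-(i-1)}]$ are disjoint and decreasing in $i$, so
\[
t_{\mathrm{L}}\leq 2^{-(\mathrm{L}-1)}=2^{-\ell(v,a')}<\lambda_v(a').
\]
Here the last inequality is the definition of $\ell$. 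Also $2^{-(\mathrm{L}-1)}\leq 2^{-\mathrm{P}}<t_{\mathrm{P}}$, so $t_{\mathrm{L}}<t_{\mathrm{P}}$.

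Now let $a$ be the witness selected at the current iteration, so $\lambda_v(a)<t_{\mathrm{L}}$ and $v\mapsto a$ conflicts with some $t_j$, $j\leq \mathrm{L}$. Suppose, for contradiction, that $j\leq \mathrm{P}$ for every such conflict. Then $\lambda_v(a)<t_{\mathrm{L}}<t_{\mathrm{P}}$ and $v\mapsto a$ conflicts with one of $t_1,\ldots,t_{\mathrm{P}}$, so $a$ was an admissible candidate at the previous iteration. But $\lambda_v(a)<t_{\mathrm{L}}<\lambda_v(a')$, which contradicts the minimality of $\lambda_v(a')$. Therefore $a$ conflicts with some $t_j$ with $\mathrm{P}+1\leq j\leq \mathrm{L}$, as claimed.

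The only delicate step is the chain of inequalities showing that the new threshold $t_{\mathrm{L}}$ lies strictly below both $\lambda_v(a')$ and $t_{\mathrm{P}}$. This relies on the dyadic placement of the sampling intervals and on the bound $\ell(v,a')\geq\mathrm{P}$. I would check the boundary case $\lambda_v(a')=2^{-(\mathrm{P}-1)}$ explicitly: it is excluded by the strict inequality $\lambda_v(a')<t_{\mathrm{P}}$. Everything else is a direct unwinding of the definitions.
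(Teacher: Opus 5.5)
Your proof is correct and follows the same route as the paper: compare the current witness with the previous minimal-marginal witness $a'$, use $t_{\mathrm{L}}\leq 2^{-\ell(v,a')}<\lambda_v(a')$, and conclude by minimality. You also make explicit two points the paper leaves implicit, namely $\lambda_v(a')>0$ (so $\ell(v,a')$ is finite) and $t_{\mathrm{L}}<t_{\mathrm{P}}$, which is exactly what ensures that a current witness conflicting with an old threshold would have been a candidate at the previous iteration.
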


\begin{proof}
At a non-terminal iteration, $\mathrm{L}$ is not good for $v$, so there exists
at least one witness with
\[
\lambda_v(a)<t_{\mathrm{L}}
\quad\text{and}\quad
v\mapsto a\text{ conflicts with one of }t_1,\ldots,t_{\mathrm{L}}.
\]
Step~2 selects among these witnesses one with minimal LP marginal.

Initially $\mathrm{P}=0$, so the claim is immediate.  
For later
iterations: 
let $a_{\mathrm{P}}$ be the witness at the previous level;
clearly $\ell(v,a_{\mathrm{P}})=\mathrm{L}-1$. 
By minimality of the previous selection, every
witness from
the previous iteration had LP marginal at least 
$\lambda_v(a_{\mathrm{P}})>2^{-(\mathrm{L}-1)}\geq t_{\mathrm{L}}$, 
so it cannot satisfy
$\lambda_v(a)<t_{\mathrm{L}}$ at the current iteration.

Hence, the currently selected witness cannot conflict with any of the old
thresholds $t_1,\ldots,t_{\mathrm{P}}$; it must conflict with at least one
newly exposed threshold in
$t_{\mathrm{P}+1},\ldots,t_{\mathrm{L}}$.
\end{proof}

\begin{namedlemma}[Termination]
\label{lem:level-procedure-termination-levels}
For every variable $v$, the level procedure terminates after at most
$|A_v^{\inst J}|+1$ iterations.
\end{namedlemma}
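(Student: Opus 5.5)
The plan is to track the sequence of values $\mathrm{L}_0=1<\mathrm{L}_1<\mathrm{L}_2<\cdots$ taken by $\mathrm{L}$ in the procedure, together with the witnesses $a_1,a_2,\ldots$ selected in Step~2. For each witness I will show that its marginal is strictly smaller than the marginal of the previous witness. Because all witnesses lie in $A_v^{\inst J}$, at most $|A_v^{\inst J}|$ distinct witnesses can occur. This gives at most $|A_v^{\inst J}|$ executions of Step~2, plus one final test of the loop condition, for at most $|A_v^{\inst J}|+1$ iterations in total.

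The first step is to check that a chosen witness $a$ has $\lambda_v(a)>0$ and that $\mathrm{L}$ strictly increases. If $\lambda_v(a)=0$, then $\lambda_v(B)=\lambda_v(B\cup\{a\})$ for every $B$, so $v\mapsto a$ cannot conflict with any threshold. Hence $\ell(v,a)<\infty$. Since $\lambda_v(a)<t_{\mathrm{L}}\leq 2^{-(\mathrm{L}-1)}$, we have $\ell(v,a)\geq \mathrm{L}$, so the new value $\ell(v,a)+1$ is strictly larger than $\mathrm{L}$ and is finite.

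The second step is the monotonicity of the witnesses. Suppose the previous witness was $a_{\mathrm P}$, so that the current level is $\mathrm{L}=\ell(v,a_{\mathrm P})+1$. The next witness $a$ must satisfy
\[
\lambda_v(a)<t_{\mathrm L}\leq 2^{-(\mathrm L-1)}=2^{-\ell(v,a_{\mathrm P})}<\lambda_v(a_{\mathrm P}).
\]
The last inequality is the defining bound $2^{-\ell(v,a)}<\lambda_v(a)$. So each new witness has strictly smaller marginal than the previous one, and in particular the witnesses are pairwise distinct elements of $A_v^{\inst J}$. This is essentially the same computation as in the proof of \namedlemcref{lem:fresh-threshold-invariant-levels}, and that lemma can be cited instead of repeating the argument. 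Consequently Step~2 is executed at most $|A_v^{\inst J}|$ times, after which the loop condition is checked once more and the loop exits.

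I expect the main obstacle to be bookkeeping rather than mathematics. One issue is pinning down what counts as an iteration, namely loop-condition checks versus executions of Step~2, so that the bound $|A_v^{\inst J}|+1$ matches exactly. The other is to exclude the degenerate witnesses with $\lambda_v(a)=0$, for which $\ell=\infty$; the observation in the first step handles this.
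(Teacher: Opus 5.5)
Your proof is correct and takes essentially the paper's approach. The paper shows that once $a$ is selected, every later threshold satisfies $t_{\mathrm{L}}\le 2^{-\ell(v,a)}<\lambda_v(a)$, so $a$ can never be selected again; this is the same computation as your strictly decreasing marginals. Your explicit exclusion of witnesses with $\lambda_v(a)=0$ (which ensures $\ell(v,a)+1$ is finite) is a small detail the paper leaves implicit.
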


\begin{proof}
If $a$ is selected when the current value is $\mathrm{L}$, then
$\lambda_v(a)<t_{\mathrm{L}}\leq 2^{-(\mathrm{L}-1)}$, so
$\ell(v,a)\geq\mathrm{L}$.  
The next $\mathrm{L}$ is $\ell(v,a)+1$,
the threshold in the next iteration is
\[
        t_{\ell(v,a)+1}\leq 2^{-\ell(v,a)}< \lambda_v(a),
\]
so $a$ cannot be used. 
The thresholds in future iterations are even smaller. 
Thus, every non-terminal iteration selects a new value.
\end{proof}

\subsection{Tail bound and cost analysis}
\label{subsec:level-tail-bound}

\begin{namedlemma}[Tail bound]
\label{lem:level-tail-bound}
There is a constant $C=C(\relstr A)$ such that, for every variable $v$ and every
$i\geq 1$,
\[
        \Pr[L(v)=i]\leq C2^{-i}.
\]
\end{namedlemma}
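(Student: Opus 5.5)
The plan is to bound $\Pr[L(v)=i]$ by a union bound over the possible runs of the level procedure for $v$. The saving observation is that a run is described by \emph{which} values get selected, and there are only boundedly many such descriptions. For $i=1$ the bound is trivial once $C\geq 2$, so assume $i\geq 2$. Then the procedure performed at least one non-terminal iteration. Write the run as $1=\mathrm{L}_0<\mathrm{L}_1<\dots<\mathrm{L}_m=i$. At iteration $k$ (for $k=0,\dots,m-1$) a value $a_k$ is selected, and $\mathrm{L}_{k+1}=\ell(v,a_k)+1$.

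The key point is that $\ell(v,a)$ depends only on the LP marginals, not on the thresholds. By \namedlemcref{lem:level-procedure-termination-levels} the selected values $a_0,\dots,a_{m-1}$ are pairwise distinct, and $m\leq |A|+1$. Hence the whole sequence of levels is a deterministic function of the ordered tuple $(a_0,\dots,a_{m-1})$. The number of such tuples is at most $\sum_{m\leq|A|+1}|A|^m=:N(\relstr A)$, a constant. So it suffices to fix one tuple $(a_0,\dots,a_{m-1})$ whose induced final level is $i$, and to show that the probability that the procedure selects exactly these values is at most $C'2^{-i}$.

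For a fixed tuple I will use \namedlemcref{lem:fresh-threshold-invariant-levels}. If $a_k$ is selected, then $v\mapsto a_k$ conflicts with some threshold $t_{j_k}$ with $j_k\in(\mathrm{L}_{k-1},\mathrm{L}_k]$, where $\mathrm{L}_{-1}:=0$. These index intervals are pairwise disjoint, and the $t_j$ are independent. So after a further union bound over the choices of $(j_0,\dots,j_{m-1})$, the probability factorizes. \namedlemcref{lem:one-level-conflict-bound} together with $\lambda_v(a_k)\leq 2^{-(\ell(v,a_k)-1)}=2^{-(\mathrm{L}_{k+1}-2)}$ then gives
\[
\Pr[\text{tuple selected}]\leq \prod_{k=0}^{m-1}\ \sum_{j\leq \mathrm{L}_k}2^{|A|+j}\,2^{-(\mathrm{L}_{k+1}-2)}
\leq \prod_{k=0}^{m-1}2^{|A|+3}\,2^{\mathrm{L}_k-\mathrm{L}_{k+1}}
= 2^{(|A|+3)m}\,2^{\mathrm{L}_0-\mathrm{L}_m}.
\]
Since $\mathrm{L}_0=1$ and $\mathrm{L}_m=i$, this is $2^{(|A|+3)m}\,2^{1-i}\leq 2^{(|A|+3)(|A|+1)+1}\,2^{-i}$. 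Multiplying by $N(\relstr A)$ yields the claim with $C=2N(\relstr A)\,2^{(|A|+3)(|A|+1)}$.

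The main obstacle is avoiding a polynomial-in-$i$ loss. If one sums over all increasing level sequences $\mathrm{L}_1<\dots<\mathrm{L}_{m-1}$, the telescoping product gives $2^{-i}$ per sequence, but there are about $i^{|A|}$ sequences. The fix is exactly the remark above: the level sequence is determined by the distinct selected values through the deterministic map $\ell$, so only constantly many sequences can occur. A second point needs care, namely the independence used in the factorization. The events ``$a_k$ conflicts with $t_{j_k}$'' involve distinct, independent thresholds, and each event depends only on its own $t_{j_k}$ and on the fixed LP marginals. Therefore bounding the probability of their intersection by the product is legitimate. We only need an upper bound, so we never condition on the rest of the run.
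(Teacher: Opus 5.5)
Your proof is correct and follows the paper's argument essentially step for step. Both take a union bound over the boundedly many sequences of distinct selected witnesses, use the fresh-threshold invariant to obtain disjoint index windows $(\mathrm{L}_{k-1},\mathrm{L}_k]$, combine the one-level conflict bound with $\lambda_v(a_k)\leq 2^{-(\mathrm{L}_{k+1}-2)}$, and telescope to $2^{-i}$; the only difference is cosmetic, since you replace the paper's ``condition on the previous history'' step with an explicit union bound over threshold indices and a direct appeal to independence, which is if anything slightly cleaner.
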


\begin{proof}
Fix $v$.  Let $a_1,\ldots,a_m$ be a possible witness sequence and put
$r_s=\ell(v,a_s)$.  
The case $m=0$~%
(i.e. no conflicts) 
only gives $L(v)=1$
and in this case we satisfy the condition simply by choosing large enough $C$.
We will bound the probability that this sequence led to the final level $L(v)=r_m+1$.
The update rule gives $r_1<r_2<\cdots<r_m$~%
(we put $r_0=0$ for convenience of notation)
 and the levels tried in the assignment are 
$r_1+1<r_2+1<\cdots<r_m+1$.

For the first witness, by~\namedlemcref{lem:one-level-conflict-bound},
\[
        \Pr[a_1\text{ is chosen}]\leq
        \Pr[v\mapsto a_1\text{ conflicts with }t_1]
        \leq 2^{|A|+1}\lambda_v(a_1)
        \leq 2^{|A|+1}2^{-(r_1-1)}.
\]
For $s\geq 2$, condition on the previous history.  
By~\namedlemcref{lem:fresh-threshold-invariant-levels},
$a_s$ must conflict with a threshold of index between $r_{s-2}+2$ and $r_{s-1}+1$.  
Hence,
\begin{align*}
\Pr[a_s\text{ is chosen}&\mid\text{previous history}]\leq 
    \Pr[v\mapsto a_s\text{ conflicts with }t_{r_{s-2}+2},\ldots,t_{r_{s-1}+1}]\leq \\
    &\leq\ \sum_{j\leq r_{s-1}+1}2^{|A|+j}\lambda_v(a_s)
    \leq\ 2^{|A|+3}2^{r_{s-1}-r_s}.
\end{align*}
As the conflicting conditions involve disjoint intervals of thresholds, 
the events are independent.
By multiplying the probabilities we derive that 
each sequence ending in $r_m = L(v)-1$ has probability at most
\[
        K^m2^{-r_m}
\]
for a constant $K=K(|A|)$.  
Since witnesses are distinct in each sequence, and they come from $A$,
the number of possible witness sequences is bounded by a constant depending only on
$|A|$.  Summing over all sequences gives the claim.
\end{proof}

\begin{namedlemma}[Cost of any solution of the truncated instance]
\label{lem:any-solution-cost}
Assume that $\inst I$ has a solution.  Let $\alpha$ be any solution of $\inst I$.  Then
\[
        \mathbb E\left[\sum_v c_v(\alpha(v))\right]
        \leq C'\sum_v\sum_{a\in A_v^{\inst J}} c_v(a)\lambda_v(a),
\]
where $C'=C'(\relstr A)$.
\end{namedlemma}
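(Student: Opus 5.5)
Since $\alpha$ is a solution of $\inst I$, we have $\alpha(v)\in A_v^{\inst I}$ for every $v$. By definition of the truncated potato this means $\lambda_v(\alpha(v))\geq t_{L(v)}$. The plan is to bound, for each variable separately, the random quantity $c_v(\alpha(v))$ by something whose expectation is at most a constant times $\sum_a c_v(a)\lambda_v(a)$, and then sum over $v$ using linearity of expectation. The solution $\alpha$ may depend on the thresholds in an adversarial way, so we do not try to control which value is chosen. Instead we bound the cost by the worst value in the potato:
\[
c_v(\alpha(v))\leq \sum_{a\in A_v^{\inst I}} c_v(a)=\sum_{a\in A_v^{\inst J}} c_v(a)\,\mathbf 1[\lambda_v(a)\geq t_{L(v)}].
\]
It then suffices to show, for each fixed $v$ and each fixed $a\in A_v^{\inst J}$ with $\lambda_v(a)>0$, that
\[
\Pr[\lambda_v(a)\geq t_{L(v)}]\leq C''\lambda_v(a).
\]
Values with $\lambda_v(a)=0$ never satisfy $\lambda_v(a)\geq t_{L(v)}>0$, so they contribute nothing.

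To prove this inequality, fix $a$ and put $i=\ell(v,a)$, so that $2^{-i}<\lambda_v(a)\leq 2^{-(i-1)}$. Since $t_j\leq 2^{-(j-1)}$ always, the event $\lambda_v(a)\geq t_{L(v)}$ is implied by $L(v)\geq i+1$, and conversely it forces roughly $L(v)\geq i$, because $t_j>2^{-j}$ and $2^{-j}\geq\lambda_v(a)$ whenever $j\leq i-1$. Precisely, $\lambda_v(a)\geq t_{L(v)}$ forces $t_{L(v)}\leq 2^{-(i-1)}$, hence $L(v)\geq i$. Therefore
\[
\Pr[\lambda_v(a)\geq t_{L(v)}]\leq \Pr[L(v)\geq i]=\sum_{j\geq i}\Pr[L(v)=j]\leq \sum_{j\geq i} C2^{-j}=2C2^{-i}<2C\lambda_v(a),
\]
where the tail bound is \namedlemcref{lem:level-tail-bound}. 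Summing over at most $|A|$ values $a$ and then over all variables $v$ gives the claim with $C'=2C$.

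The only delicate point is the boundary case $L(v)=1$ together with values of level $1$. For these the estimate $\Pr[L(v)\geq 1]=1\leq 2\lambda_v(a)$ holds trivially, because $\lambda_v(a)>1/2$; the general computation above already covers this case. We should also note that the randomness lies only in the thresholds, and $\lambda$ is fixed once the LP is solved, so the expectation over the $t_i$ is well defined. I do not expect a real obstacle: once the cost is bounded by the sum over the truncated potato, all the work has already been done by the tail bound. The one thing to check carefully is the direction of the inequality between thresholds and marginals at the interval endpoints, namely the half-open intervals in the definitions of $t_i$ and $\ell$.
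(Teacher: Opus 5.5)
Your proof is correct and follows essentially the same route as the paper. Both arguments show $\Pr[a\in A_v^{\inst I}]\leq \Pr[L(v)\geq \ell(v,a)]\leq 2C\lambda_v(a)$ via the tail bound and then sum against the unary costs; your pointwise bound of $c_v(\alpha(v))$ by the sum over the truncated potato plays the role of the paper's step $\Pr[\alpha(v)=a]\leq\Pr[a\in A_v^{\inst I}]$, and your check that $\lambda_v(a)\geq t_{L(v)}$ forces $L(v)\geq\ell(v,a)$ supplies detail the paper leaves implicit.
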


\begin{proof}
Fix $v,a$;  
if $a$ survives, then
$\lambda_v(a)\geq t_{L(v)}$,
and further $L(v)\geq \ell(v,a)$.  
By \namedlemcref{lem:level-tail-bound},
\[
        \Pr[a\in A_v^{\inst I}]
        \leq \Pr[L(v)\geq \ell(v,a)]
        \leq \sum_{i\geq \ell(v,a)}C2^{-i}
        \leq 2C2^{-\ell(v,a)}
        \leq 2C\lambda_v(a).
\]
Since $\alpha(v)=a$ implies $a\in A_v^{\inst I}$,
\[
        \Pr[\alpha(v)=a]\leq 2C\lambda_v(a).
\]
Summing over all variables and values against the unary costs gives the result.
\end{proof}

\propRoundedInstanceApproximation*

\begin{proof}[Proof of \Cref{prop:rounded-instance-is-arc-consistent}]
Use \namedlemcref{lem:any-solution-cost} and $\mathrm{BLP}(\inst J)\leq\mathrm{OPT}$.
\end{proof}

%% file: tractability-solvability.tex
\section{Tractability: solvability of the truncated instance}
\label{sec:tractability-solvability}

We prove that the truncated instance $\inst I$ has a solution.  The proof is
existential: we use universal covers as proof objects and compactness.  The
polynomial-time construction of such a solution is omitted here, as it was
discussed at the end of~\Cref{sec:tractability}.
Recall that in this proof we work with a simplified instance.

The proof in this section has parts similar to the reasoning used to establish
bounded path-width dualities in~\cite{DalmauKrokhinMajorityPathwidth}.

\subsection{Two basic propagation facts and cover trees}
\label{subsec:solvability-propagation}

\begin{namedlemma}[Mass survives higher-level truncation]
\label{lem:mass-survives-higher-level-truncation}
Let $z$ be a variable with $L(z)\geq i$, and let
$S\subseteq A_z^{\inst J}$ satisfy $\lambda_z(S)\geq t_i$.  Then
\[
        \lambda_z(S\cap A_z^{\inst I})\geq t_i.
\]
\end{namedlemma}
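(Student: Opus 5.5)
The plan is to argue by contradiction, adding the discarded elements of $S$ back one at a time. The truncation $A_z^{\inst I}$ only removes values of small marginal. If removing them lowers the mass of $S$ across the threshold $t_i$, then one removed value must sit exactly at the crossing. Such a value conflicts with $t_i$, which goodness of $L(z)$ forbids.

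First I record what the level procedure guarantees. By \namedlemcref{lem:level-procedure-termination-levels} the procedure for $z$ terminates, and it exits Step~2 only when the current level is good for $z$. Hence $L(z)$ is good for $z$: no $a\in A_z^{\inst J}$ has both $\lambda_z(a)<t_{L(z)}$ and a conflict with one of $t_1,\dots,t_{L(z)}$. Since $1\le i\le L(z)$, this applies in particular to conflicts with $t_i$. Also, by definition of the truncated potato, every $a\in D:=S\setminus A_z^{\inst I}$ satisfies $a\in A_z^{\inst J}$ and $\lambda_z(a)<t_{L(z)}$.

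Now suppose, for contradiction, that $\lambda_z(S\cap A_z^{\inst I})<t_i$. Enumerate $D=\{d_1,\dots,d_m\}$ and set
\[
B_0:=S\cap A_z^{\inst I},\qquad B_j:=B_{j-1}\cup\{d_j\}\quad (1\le j\le m),
\]
so that $B_m=S$. The sequence $\lambda_z(B_0)\le\lambda_z(B_1)\le\dots\le\lambda_z(B_m)$ starts below $t_i$ and ends at $\lambda_z(S)\ge t_i$. So there is a first index $j\ge 1$ with
\[
\lambda_z(B_{j-1})<t_i\le\lambda_z(B_{j-1}\cup\{d_j\}).
\]
Here $B_{j-1}\subseteq A_z^{\inst J}$ and $d_j\notin B_{j-1}$. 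By the definition of threshold conflict, $z\mapsto d_j$ conflicts with $t_i$. Moreover $\lambda_z(d_j)<t_{L(z)}$ and $i\le L(z)$, so $d_j$ witnesses that $L(z)$ is not good for $z$. This contradicts the previous paragraph and proves the claim.

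I expect no real obstacle here. The only point needing care is that the final level really is good, i.e.\ that the loop exits exactly at a good level. This follows from the loop condition together with termination.
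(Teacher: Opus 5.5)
Your proposal is correct and follows essentially the same argument as the paper. Both argue by contradiction, add the discarded elements $D=S\setminus A_z^{\inst I}$ back one at a time, and use the value at which the mass crosses $t_i$ as a threshold conflict with $\lambda_z<t_{L(z)}$, contradicting that $L(z)$ is good. You additionally make explicit that the loop terminates at a good level, which the paper leaves implicit.
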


\begin{proof}
Suppose not, and put $D=S\setminus A_z^{\inst I}$.  Then every $b\in D$ satisfies
$\lambda_z(b)<t_{L(z)}$.  Since $L(z)\geq i$, the threshold $t_i$ is one of
$t_1,\ldots,t_{L(z)}$.  Adding the elements of $D$ one by one to
$S\cap A_z^{\inst I}$ crosses $t_i$, so some $b\in D$ conflicts with $t_i$.  This
contradicts that level $L(z)$ is good for $z$.
\end{proof}

\begin{namedlemma}[LP support propagation]
\label{lem:lp-support-propagation}
Let $R\subseteq A_u^{\inst J}\times A_v^{\inst J}$ be a binary constraint and let
$S\subseteq A_u^{\inst J}$. Then $\lambda_v(S+R)\geq \lambda_u(S)$.
\end{namedlemma}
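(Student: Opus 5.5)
The plan is to argue directly from the BLP constraints linking $u$ and $v$ through the pair marginal $\lambda_{uv}$. Write $S'=S+R=\{b\in A_v^{\inst J}:\exists a\in S,\ (a,b)\in R\}$. The BLP for $\inst J$ has, for the single relation $R_{uv}$ between $u$ and $v$, a distribution $\lambda_{uv}$ supported on $R_{uv}$ whose first and second marginals are $\lambda_u$ and $\lambda_v$. Here I take $R=R_{uv}$, or more generally any $R\supseteq R_{uv}$. If $R$ is a constraint coming from a pattern or a coarser relation, I would reduce to this case by noting that $S+R_{uv}\subseteq S+R$ and that $\lambda_v$ is monotone.

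The computation itself is short. First,
\[
\lambda_u(S)=\sum_{a\in S}\lambda_u(a)=\sum_{a\in S}\sum_{b:(a,b)\in R_{uv}}\lambda_{uv}(a,b)
\]
by the marginal equation for $x=u$, $y=v$. Every pair $(a,b)$ in this double sum has $a\in S$ and $(a,b)\in R_{uv}\subseteq R$, so $b\in S'$. The double sum is therefore at most
\[
\sum_{b\in S'}\sum_{a:(a,b)\in R_{uv}}\lambda_{uv}(a,b)=\sum_{b\in S'}\lambda_v(b)=\lambda_v(S'),
\]
where the last step uses the second marginal equation and the nonnegativity of $\lambda_{uv}$. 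This gives $\lambda_v(S+R)\geq\lambda_u(S)$.

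The only subtlety I expect is bookkeeping, not mathematics. The LP in \Cref{fig:lp} is written with sums over $A$, whereas the lemma is stated on the potatoes $A^{\inst J}$. I need the LP variables $\lambda_{uv}$ to be supported on the preprocessed relation $R_{uv}^{\inst J}$, which holds because the LP is solved on $\inst J$. I would also check that the constraint $R$ in the lemma is one of the $R_{uv}^{\inst J}$. After the simplification in \Cref{sec:tractability} there is exactly one such relation per ordered pair, so this holds. For a pattern, the statement would then follow by iterating the one-step inequality along the constraints.
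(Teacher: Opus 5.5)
Your proof is correct and is the paper's argument written out in full: the BLP pair distribution on $R$ puts mass $\lambda_u(S)$ on pairs with first coordinate in $S$, and all such pairs have second coordinate in $S+R$, so that mass is at most $\lambda_v(S+R)$. Your side remarks about pattern-defined relations go beyond the lemma; the paper only applies it one constraint of $\inst J$ at a time, exactly as in your iteration remark.
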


\begin{proof}
Use the LP distribution on $R$.  The mass of pairs whose first coordinate lies in
$S$ is $\lambda_u(S)$, and all such pairs have second coordinate in $S+R$.
\end{proof}

A \emph{tree pattern} in a binary instance $\inst K$ is a finite tree whose
vertices are labeled by variables of $\inst K$ and whose edges are labeled by
binary constraints of $\inst K$ in a way consistent with the endpoint labels.
A \emph{universal covering pattern for $\inst K$} is an infinite tree pattern that covers all finite tree patterns in $\inst K$.
A realization of a tree pattern assigns to every vertex a value from the potato
of its label so that every edge is satisfied by the corresponding binary
constraint.

Sometimes, for convenience, we will treat a tree pattern over $\inst K$ as an instance over the same template as $\inst K$.
Using this viewpoint, each variable of a tree pattern is a copy of a variable of $\inst K$, and each constraint of the tree pattern is a copy of a constraint of $\inst K$.
A realization of the tree pattern is then a solution of the corresponding instance.

\begin{namedlemma}[Tree extension and arc-consistency]
\label{lem:tree-extension-from-ac}
Let $\inst K$ be a binary instance. The following are equivalent:
\begin{enumerate}
        \item there exists $\inst I\leq \inst K$ such that $\inst I$ is arc-consistent and nonempty;
        \item every finite tree pattern in $\inst K$ has a realization;
        \item universal covering pattern for $\inst K$ has a realization.
\end{enumerate}
Moreover, if $\inst I\leq \inst K$ is arc-consistent and nonempty, then for every
finite tree pattern $t$ in $\inst K$, every vertex $x$ of $t$, and every
$a\in A_{\ell(x)}^{\inst I}$ (where $\ell(x)$ is the variable-label of $x$), there is a
realization of $t$ in $\inst I$ sending $x$ to $a$.
The same holds when $t$ is a universal covering pattern for $\inst K$.
\end{namedlemma}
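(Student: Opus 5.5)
The plan is to prove the cycle $(1)\Rightarrow(2)\Rightarrow(3)\Rightarrow(1)$, and then derive the ``moreover'' part from the argument for $(1)\Rightarrow(2)$.

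\emph{$(1)\Rightarrow(2)$, together with the moreover part.} Fix a finite tree pattern $t$ in $\inst K$, a vertex $x$ of $t$ and a value $a\in A_{\ell(x)}^{\inst I}$. Root $t$ at $x$ and assign $a$ to $x$. We then process the tree top-down, in breadth-first order. Suppose a vertex $u$ has already received a value $b\in A_{\ell(u)}^{\inst I}$, and let $w$ be a child of $u$ whose edge carries a constraint $R\in\constr R^{\inst K}_{\ell(u)\ell(w)}$. By the definition of $\inst I\leq\inst K$, the restriction $R\cap(A_{\ell(u)}^{\inst I}\times A_{\ell(w)}^{\inst I})$ is a constraint of $\inst I$. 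By arc-consistency of $\inst I$ its first projection is $A_{\ell(u)}^{\inst I}$, so there is $c\in A_{\ell(w)}^{\inst I}$ with $(b,c)\in R$. Assign $c$ to $w$. Because $t$ is a tree, every vertex other than the root is constrained only through the edge to its parent, so no conflicts arise. The same greedy procedure works for the infinite universal covering pattern: it is countable and rooted, so we may define the realization level by level, using countable dependent choice. This gives the ``moreover'' statement for universal covering patterns as well. Nonemptiness of $\inst I$ (and arc-consistency) guarantees that the starting value exists whenever we want one.

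\emph{$(2)\Rightarrow(3)$.} This is a compactness step. Truncate the universal covering pattern to its balls of radius $n$ around a root; these are finite tree patterns, so each has a realization by (2). Since every potato is finite, we can apply K\H{o}nig's lemma to the finitely branching tree of partial realizations, ordered by restriction to smaller balls. This yields a consistent realization of the whole infinite pattern. The thing to check here is that the universal covering pattern exists and is countable; I would construct it explicitly as the tree of non-backtracking walks (patterns) from a fixed variable. Then every finite tree pattern maps into it label-preservingly, provided the input is connected. Disconnected components are handled by taking a disjoint union of covers.

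\emph{$(3)\Rightarrow(1)$.} Fix a realization $\rho$ of the universal covering pattern $T$. For each variable $u$, define $A_u^{\inst I}$ as the set of values $\rho(y)$ over all vertices $y$ of $T$ labeled $u$. Take $\inst I$ to be the subinstance inherited from $\inst K$ with these potatoes. The key point, which I expect to be the main obstacle, is verifying that $\inst I$ is arc-consistent. Take $b=\rho(y)$ with $\ell(y)=u$, and a constraint $R\in\constr R^{\inst K}_{uv}$. The covering pattern must contain a neighbour $w$ of $y$ whose edge is labeled by $R$; this is exactly the defining ``universality'' property, namely that every vertex has a child for every constraint leaving its label. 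Then $\rho(w)\in A_v^{\inst I}$ and $(b,\rho(w))\in R$. This gives $A_u^{\inst I}\subseteq\proj_1$ of the restricted relation, and symmetrically for the second projection, using the opposite reading $-R\in\constr R_{vu}$. Nonemptiness is immediate from $T\neq\emptyset$. So the real work is to fix a definition of the universal covering pattern in which every vertex has an edge for every constraint incident to its label; with that definition both $(2)\Rightarrow(3)$ and $(3)\Rightarrow(1)$ go through.
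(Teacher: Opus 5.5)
Your proof is correct and follows essentially the paper's route: greedy root-to-leaf extension via arc-consistency for (1)$\Rightarrow$(2) and the ``moreover'' part, compactness over finite pieces of the cover for (2)$\Rightarrow$(3), and restricting potatoes to the values used by a realization for (3)$\Rightarrow$(1). You fix explicitly that every vertex of the cover has a neighbour for each constraint at its label, which the paper only uses implicitly (e.g.\ in the proof that $\inst K_i$ is arc-consistent). Your level-by-level extension on the infinite cover is a harmless substitute for the paper's compactness argument there.
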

\begin{proof}
        The last two items are equivalent;
        one implication is trivial, and the other follows by compactness of the tree pattern and the finiteness of the instance $\inst K$.
        The first item implies the second by induction on the number of vertices in the tree pattern,
        using arc-consistency to extend a realization of a smaller tree pattern to a larger one.
        Finally, the third item implies the first by restricting each potato to the set of values that occur in some realization of the universal covering pattern.

        For the moreover part, root $t$ at $x$ and orient all edges away from $x$.
        Assign $x\mapsto a$.  Then process vertices in any order extending this orientation:
        when an edge from a processed parent $u$ to an unprocessed child $v$ is labeled
        by $R$, arc-consistency of $\inst I$ gives
        \[
                (A_u^{\inst I}\times A_v^{\inst I})\cap R
        \]
        with full projection on $A_u^{\inst I}$, so from the already chosen value of $u$
        we choose a value of $v$ in $A_v^{\inst I}$ supported by this edge.  Because $t$
        is a tree, each vertex has a unique parent, so these choices are independent and
        yield a realization of all edges.  Thus we get a realization of $t$ in $\inst I$
        with $x$ mapped to $a$.

        If $t$ is a universal covering pattern, fix a root vertex $x$ and
        a value $a\in A_{\ell(x)}^{\inst I}$.  Every finite rooted subtree
        $s \subseteq t$ containing $x$ has a realization in $\inst I$ sending
        $x$ to $a$ by the finite case above.  By compactness, these compatible
        finite rooted realizations extend to a realization of all $t$ with
        $x$ mapped to $a$.
\end{proof}

\subsection{Level instances}
\label{subsec:level-instances}

Fix a level $i$ and write
\[
        V_i:=\{v:L(v)=i\}.
\]
For $x,y\in V_i$, let $\mathcal P_i(x,y)$ be the set of patterns
\[
        p=x=x_0,x_1,\ldots,x_n=y
\]
in $\inst J$ whose internal variables have level $>i$.
Length-one patterns are allowed.  
Let $R_p^{\inst J}$ be the relation of pairs
connected by $p$ in $\inst J$, 
and let $R_p^{\inst I}$ be the relation of pairs connected by $p$ in $\inst I$.

Define the unrestricted level-$i$ ambient instance $\inst J_i^+$ as follows.  Its
variables are $V_i$, its potatoes are $A_v^{\inst J}$, and its constraint set $\constr{R}^{\inst J_i^+}$ consists of the
relations $R_p^{\inst J}$ for $p\in\mathcal P_i(x,y)$.  Define $\inst I_i^+$ similarly
using potatoes $A_v^{\inst I}$ and relations $R_p^{\inst I}$.
These two instances have identical ``shape'': same variables and identically  
constrained pairs of variables, only the constraint relations differ.

\begin{namedlemma}[Path-induced extension and arc-consistency]
\label{lem:path-induced-ac}
Let
\[
        q=x_0,x_1,\ldots,x_n
\]
be a path in $\inst I$ such that $L(x_j)\geq i$ for every $j>0$, and let
$b\in A_{x_n}^{\inst I}$.  
Then $q$ has a realization~%
(formally in $\inst J$, but almost totally in $\inst I$)
ending in $b$ such that every
$x_j$, $j>0$, 
is assigned a value in its truncated potato.  
If also $L(x_0)\geq i$,
then $x_0$ can be chosen in $A_{x_0}^{\inst I}$ as well.

Consequently, for every $p\in\mathcal P_i(x,y)$, the truncated relation
$R_p^{\inst I}$ is subdirect on $A_x^{\inst I}\times A_y^{\inst I}$.
\end{namedlemma}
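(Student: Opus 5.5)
We build the realization backwards from $x_n$ to $x_0$, tracking a set of candidate values rather than a single value, and use LP mass to guarantee that each set meets the truncated potato. Put $S_n=\{b\}$. Since $b\in A_{x_n}^{\inst I}$ and $L(x_n)\geq i$, we have $\lambda_{x_n}(b)\geq t_{L(x_n)}$. Here the thresholds are decreasing in the index, since $t_j\in(2^{-j},2^{-(j-1)}]$. It follows that $t_{L(x_n)}$ is a lower bound, though it may be smaller than $t_i$. For this reason the invariant is relative to the current level. Going backwards, for $j=n-1,\dots,0$ we define $S_j=(S_{j+1}-R_{x_jx_{j+1}})\cap A_{x_j}^{\inst I}$, and we aim for the invariant
\[
  \lambda_{x_j}(S_j)\geq t_{m_j},\qquad m_j:=\max\{L(x_k):j\le k\le n\},
\]
or at least $\lambda_{x_j}(S_j)\geq t_{L'}$ for some index $L'\le L(x_j)$, with the truncation applied only when the current variable is at level $\geq i$.

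Each step combines two earlier lemmas. First, \namedlemcref{lem:lp-support-propagation} applied to the reversed constraint gives $\lambda_{x_j}(S_{j+1}-R)\geq \lambda_{x_{j+1}}(S_{j+1})\geq t_{m}$ for the current index $m$. Second, to intersect with the truncated potato we invoke \namedlemcref{lem:mass-survives-higher-level-truncation}. That lemma requires $L(x_j)\geq m$, where $m$ is the index of the threshold being preserved. If $L(x_j)\geq m$, it gives $\lambda_{x_j}(S_j)\geq t_m$. If $L(x_j)<m$, we fall back on the weaker bound $t_{L(x_j)}$: every value outside $A_{x_j}^{\inst I}$ has marginal below $t_{L(x_j)}$, while the untruncated set has mass at least $t_m$. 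This is not automatically enough, so the cleaner approach is to track the index $\min$ instead. We set $m_n=L(x_n)$ and $m_j=\min(m_{j+1},L(x_j))$, and use $t_{m_j}\le t_{m_{j+1}}$? No: a smaller index gives a larger threshold, so the monotonicity goes the wrong way. Instead I would carry the invariant $\lambda_{x_j}(S_j)\ge t_{i}$ directly, using $t_i\geq t_{L(\cdot)}$ since every level involved is at least $i$. The base case is then the delicate point: $\lambda_{x_n}(b)\geq t_{L(x_n)}$ need not be $\geq t_i$. I would resolve this by noting that the statement only needs nonemptiness. So I would first step from $b$ with the trivial bound $t_{L(x_n)}$, apply the mass lemma at index $L(x_n)$ whenever the next variable has $L\geq L(x_n)$, and otherwise lower the index to that variable's level. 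The mass lemma is usable at the current index exactly when that index is at most the level of the variable being truncated, and lowering the index to $L(x_j)\geq i$ keeps it usable.

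Nonemptiness of every $S_j$ follows from positive mass, with all thresholds positive. Choosing $a_0\in S_0$ (or in $S_0$ before intersection, if $L(x_0)<i$) and reading values forward, which is possible because each $S_j$ was obtained as a preimage, yields the realization ending in $b$ with all $x_j$, $j>0$, in truncated potatoes. The consequence about subdirectness then follows by applying the statement to $p$ and to its reversal $-p$. Both endpoints have level exactly $i$ and internal vertices have level $>i$, so any $b\in A_y^{\inst I}$ has a preimage in $A_x^{\inst I}$, and symmetrically. The main obstacle is the index bookkeeping in the mass-preservation step, namely making sure the threshold whose mass we carry always has index at most the level of the variable being truncated. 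I expect the invariant ``mass $\geq t_{\mu_j}$ with $\mu_j\le L(x_k)$ for all $k\ge j$'' to be the one that works.
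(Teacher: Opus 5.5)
You have the right skeleton, and it is the paper's: pull the sets $S_j$ back along the path, keep their LP mass with \namedlemcref{lem:lp-support-propagation}, intersect with truncated potatoes using \namedlemcref{lem:mass-survives-higher-level-truncation}, and read values forward. Your worry about the base case is also legitimate. The paper's proof asserts that the initial mass is at least $t_i$ because $b\in A_{x_n}^{\inst I}$ and $L(x_n)\geq i$, but this only follows when $L(x_n)=i$.

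Your repair, however, does not work. ``Lowering the index to that variable's level'' when $L(x_j)<m$ means upgrading the available bound $\lambda_{x_j}(S_{j+1}-R)\geq t_m$ to $\lambda_{x_j}(S_j)\geq t_{L(x_j)}$. Since $t_{L(x_j)}>t_m$, this is exactly the wrong-way monotonicity you noticed a sentence earlier. So your final invariant (mass $\geq t_{\mu_j}$ with $\mu_j\leq L(x_k)$ for all $k\geq j$) cannot be carried across any step where the level drops below the current index. The mass-survival lemma simply does not apply at such a step.

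No index bookkeeping can rescue this, because the first claim is false when $L(x_n)>i$. Here is a counterexample:
\begin{itemize}
\item Take two variables $u,v$ with $A_u^{\inst J}=\{a_0,a_1\}$, $A_v^{\inst J}=\{b_0,b_1,b_2\}$ and $R_{uv}=\{(a_0,b_0),(a_0,b_2),(a_1,b_1)\}$. This instance is $(2,3)$-consistent, and $R_{uv}$ is preserved by a conservative majority.
\item Use the feasible LP point (optimal for zero costs) with $\lambda_u(a_0)=0.7$, $\lambda_u(a_1)=0.3$, $\lambda_v(b_0)=0.4$, $\lambda_v(b_1)=\lambda_v(b_2)=0.3$.
\item With probability $2/5$ we have $t_1\leq 0.7$. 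Then level $1$ is good for $u$, so $A_u^{\inst I}=\{a_0\}$.
\item At $v$, all three values conflict with $t_1$, the minimal witness has $\ell=2$, and level $3$ is good. So $L(v)=3$ and $A_v^{\inst I}=A_v^{\inst J}$.
\item For $i=1$ and the path $v,u,v$, every realization ending in $b_1$ must put $a_1\notin A_u^{\inst I}$ on $u$.
\end{itemize}

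The right fix is to strengthen the hypothesis to $L(x_n)=i$. Then $b\in A_{x_n}^{\inst I}$ gives $\lambda_{x_n}(b)\geq t_i$ directly. Your invariant $\lambda_{x_j}(S_j)\geq t_i$ then propagates through the two lemmas with no index changes, since $L(x_j)\geq i$ for all $j\geq 1$ (and for $j=0$ when that is assumed). This is the only case the paper uses: paths in $\mathcal P_i(x,y)$ and their reversals, and the attached paths $u,\dots,w$ in the augmented covers, all end at a variable of level exactly $i$. In particular, your derivation of subdirectness is correct once you note that its base case is of this form.
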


\begin{proof}
Run the propagation argument backwards from $\{b\}$ in $\inst J$.
Since $b\in A_{x_n}^{\inst I}$
and $L(x_n)\geq i$, the initial mass is at least $t_i$.  By
\namedlemcref{lem:lp-support-propagation}, LP mass cannot decrease while moving
backwards along the path in $\inst J$.  
At every vertex $x_j$ with $j>0$, we have $L(x_j)\geq i$,
so \namedlemcref{lem:mass-survives-higher-level-truncation} leaves a nonempty
surviving set in $A_{x_j}^{\inst I}$.  This gives the first claim.  If also
$L(x_0)\geq i$, apply the same truncation at $x_0$ to get a truncated value there
and if not a value from $A_{x_0}^{\inst J}$ can be chosen.

For the consequence, fix $p\in\mathcal P_i(x,y)$.  Applying the first claim to
$p$ gives fullness of the second projection of $R_p^{\inst I}$, and applying it to
the reversed path gives fullness of the first projection.  
Hence,
$R_p^{\inst I}$ is subdirect.
\end{proof}

\noindent Note $(\inst I_i^+,\inst J_i^+)$ is \emph{not} always an ac-leq-pair:
it is quite possible that 
\[
        R_p^{\inst I} \varsubsetneq R_p^{\inst J}\cap (A_x^{\inst I}\times A_y^{\inst I}),
\]
for a pattern $p$ from $x$ to $y$ in $\mathcal P_i(x,y)$, because some internal variable of $p$ may be truncated in $\inst I$.
As these relations are corresponding constraints of $\inst I_i^+$ and $\inst J_i^+$, we do not have $\inst I_i^+\leq \inst J_i^+$ in general.

\begin{namedlemma}[Unrestricted level instances]
\label{lem:level-instance-ac-leq-cycle}
For every level $i$, the instance $\inst I_i^+$ is arc-consistent and
$\inst J_i^+$ is cycle-consistent.
\end{namedlemma}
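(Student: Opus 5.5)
The lemma has two claims, and I will treat them separately since they use different earlier facts. For arc-consistency of $\inst I_i^+$, the potatoes are $A_v^{\inst I}$ for $v\in V_i$, and every constraint has the form $R_p^{\inst I}$ with $p\in\mathcal P_i(x,y)$. Since both endpoints lie in $V_i$ and all internal variables have level $>i$, such a pattern satisfies the hypotheses of \namedlemcref{lem:path-induced-ac}: every variable has level $\geq i$, including $x_0$. Its consequence gives directly that $R_p^{\inst I}$ is subdirect on $A_x^{\inst I}\times A_y^{\inst I}$. I also need these potatoes to be nonempty. This holds because $\lambda_v(A_v^{\inst J})=1\geq t_{L(v)}$, so \namedlemcref{lem:mass-survives-higher-level-truncation} with $S=A_v^{\inst J}$ gives $\lambda_v(A_v^{\inst I})\geq t_{L(v)}>0$. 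Hence the same sets $A_v^{\inst I}$ witness arc-consistency for all constraints. The reverse reading $-R_p^{\inst I}=R_{-p}^{\inst I}$ with $-p\in\mathcal P_i(y,x)$ is also a constraint, so the opposite-reading convention is respected. Diagonal constraints $\constr{R}_{xx}$, coming from closed patterns through higher levels, need no special treatment.

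For cycle-consistency of $\inst J_i^+$, I will first establish arc-consistency of $\inst J_i^+$ with potatoes $A_v^{\inst J}$. This holds because each $R_p^{\inst J}$ is a composition of arc-consistent constraints of $\inst J$, and composition of subdirect relations is subdirect. Next, take a closed pattern in $\inst J_i^+$ at $x$, namely $R_{p_1}^{\inst J},\dots,R_{p_m}^{\inst J}$ along $x=y_0,\dots,y_m=x$. I will unfold it into the concatenation $p_1+\dots+p_m$, which is a closed pattern in $\inst J$ from $x$ to $x$. Connectivity by the concatenated pattern in $\inst J$ is exactly connectivity by the sequence of relations $R_{p_j}^{\inst J}$, because $\Conn{p+q}=\Conn{p}+\Conn{q}$. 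By \Cref{prop:23-cycle-consistency-levels}, $\inst J$ is cycle consistent, so every $a\in A_x^{\inst J}$ is connected to itself by $p_1+\dots+p_m$, and therefore by the closed pattern in $\inst J_i^+$.

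The only delicate point is the first claim: \namedlemcref{lem:path-induced-ac} must apply with both endpoints truncated, which requires $L(x_0)\geq i$. This holds since $x_0\in V_i$. The rest of the proof is bookkeeping about how patterns in $\inst J_i^+$ unfold into patterns of $\inst J$, together with the observation that the definition of arc-consistency only asks for one common family of potatoes.
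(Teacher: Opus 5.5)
Your proposal is correct and follows the same route as the paper. Arc-consistency of $\inst I_i^+$ comes from the subdirectness consequence of \namedlemcref{lem:path-induced-ac}, and cycle consistency of $\inst J_i^+$ comes from unfolding a closed pattern of $\inst J_i^+$ into a closed pattern of the cycle-consistent instance $\inst J$. Your additional checks are all valid and make explicit what the paper leaves implicit: nonemptiness of the truncated potatoes, arc-consistency of $\inst J_i^+$ via composition of subdirect relations (needed since cycle consistency is only defined for arc-consistent instances), and the opposite-reading convention.
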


\begin{proof}
Arc-consistency of $\inst I_i^+$ follows from
\namedlemcref{lem:path-induced-ac}: every constraint $R_p^{\inst I}$ is subdirect
on its endpoint potatoes.

For cycle consistency of $\inst J_i^+$, each constraint of $\inst J_i^+$ is
defined by some pattern in $\inst J$; therefore every closed pattern in
$\inst J_i^+$ expands to a closed pattern in the cycle-consistent instance
$\inst J$, which connects every value to itself.
\end{proof}

\subsection{Induction invariant and a plan for the endgame}
\label{subsec:induction-invariant}

We proceed to solve the instance $\inst I$; 
we solve levels increasingly.  
After all levels $<i$ have been solved, we maintain
the following invariant.  If $u,w$ are already solved and
\[
        L(u)\leq L(w)<i,
\]
including the case $u=w$, then for every path
\[
        q=u=x_0,x_1,\ldots,x_n=w
\]
whose internal variables have level at least $i$, the fixed values $s(u)$ and
$s(w)$ are connected by $q$ inside the truncated instance $\inst I$.  
For $u=w$, this says that
$s(u)$ is connected to itself along every such closed path.

The general plan is to solve, 
at level $i$,
the instance $\inst I^+_i$ 
with some additional constraints.
Note that, due to the way constraints of $\inst I^+_i$ are defined,
solving $\inst I^+_i$ already guarantees that:
\begin{itemize}
        \item we obtain a proper solution of $\inst I$ on $V_i$,
        as the paths of length one are included in $\mathcal P_i(x,y)$, and thus they copy the original constraints of $\inst I$;
        \item the invariant will automatically hold if $u$ and $w$ have $L(u)=L(w)=i$, 
        this is because of the constraints defined by the paths in $\mathcal P_i(x,y)$;
        \item the invariant holds for $u$ and $w$ with $L(u)<i$ and $L(w)<i$, by the induction hypothesis.
\end{itemize} 
What is missing is the invariant for $u$ and $w$ with $L(u)<i$ and $L(w)=i$, 
which in particular implies that the solution of $\inst I^+_i$ 
is compatible with the already solved lower levels~%
(again, due to paths of length one). 

In the end the proof will use the tractable-side assumption of \Cref{thm:HardnessMain},
and for this we need an ac-leq-cycle pair.
The $(\inst I^+_i,\inst J^+_i)$ pair is not an ac-leq-cycle pair, 
even though arc-consistency of $\inst I^+_i$ is already established by \namedlemcref{lem:level-instance-ac-leq-cycle}.
The missing condition for an ac-leq-cycle pair is 
not very problematic either;
our problem is that solving $\inst I^+_i$ does not prove the full invariant.
To address the issue, 
we establish arc-consistency of a yet smaller instance.
We will do it by showing solvability of the universal cover of $\inst I^+_i$ with additional constraints. 
Then \namedlemcref{lem:tree-extension-from-ac} will give us arc-consistency of the smaller instance.
We will overcome the ``not an ac-leq-cycle pair'' issue 
and use the tractable-side assumption of \Cref{thm:HardnessMain} to solve the smaller instance.
This will finish the induction step and the proof of \Cref{prop:rounded-instance-has-solution}.

\subsection{Augmented universal covers}
\label{subsec:augmented-covers}

Take the universal cover of $\inst I_i^+$ and call it $\inst T$.
For every node $w'$ of this cover, 
which is a copy of $w\in V_i$,
attach fresh copies of all paths in $\inst I$ defined as
\[
        u=x_0,x_1,\ldots,x_n=w
\]
with $L(u)<i=L(w)$, and all the other vertices of level $>i$;
and with  endpoint $w$
identified with $w'$.  
Distinct attachments use distinct fresh copies, even if they have the same
label.
Call the augmented tree \emph{$\inst T'$}.

The goal is to find a solution to $\inst T'$ that 
makes each $u$ from above \emph{pinned}, 
i.e. assigned the value $s(u)$ chosen in the previous steps of the procedure.

\begin{namedlemma}[Finite augmented covers are realizable]
\label{lem:finite-augmented-covers-realizable}
        The instance $\inst T'$ has a pinned solution.
\end{namedlemma}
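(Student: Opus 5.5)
The plan is to reduce to finite pieces by compactness and then use the majority polymorphism to reduce the pinning requirement to pairs of pinned vertices, where the induction invariant applies directly.

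\textbf{Compactness.} The augmented cover $\inst T'$ is an infinite tree. Every potato is finite, and every constraint copies either a relation $R_p^{\inst I}$ of $\inst I_i^+$ or a constraint of $\inst I$ along an attached path. So, as in the last paragraph of the proof of \namedlemcref{lem:tree-extension-from-ac}, it suffices to show that every finite subtree $\inst F\subseteq\inst T'$ has a pinned solution. I take $\inst F$ to be closed under the attached paths it meets: an attached path is either entirely in $\inst F$ or entirely outside. All variables of $\inst F$ receive values in their truncated potatoes $A^{\inst I}$.

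\textbf{Reduction to pairs via Baker--Pixley.} Let $U$ be the set of pinned vertices of $\inst F$. Consider the relation
\[
S=\{(\beta(u))_{u\in U}:\beta\text{ a realization of }\inst F\text{ in }\inst I\}.
\]
It is pp-definable from relations of $\inst I$, which are preserved by the conservative majority polymorphism (conservativity takes care of the potato restrictions), so $S$ is preserved by it. By Baker--Pixley, $S$ is determined by its binary projections. Hence it suffices to show two things.
\begin{itemize}
\item For every $u\in U$, the value $s(u)$ is in $\proj_u(S)$.
\item For every pair $u_1\neq u_2$ in $U$, the pair $(s(u_1),s(u_2))$ is in $\proj_{u_1u_2}(S)$.
\end{itemize}

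\textbf{Realizing a single pair.} Fix $u_1,u_2\in U$. The unique path in the tree $\inst F$ between them has the form: the attached path from $u_1$ to a cover node $w_1'$, a path in the cover from $w_1'$ to $w_2'$, then the attached path from $w_2'$ back to $u_2$. Each cover edge is labelled by some $R_p^{\inst I}$ with $p\in\mathcal P_i$. Expanding every cover edge into its pattern $p$ gives a path $q$ in $\inst I$ from $u_1$ to $u_2$. Its endpoints have level $<i$, and all its internal variables have level $\geq i$: cover nodes have level exactly $i$, and the internal vertices of both the patterns and the attached paths have level $>i$.

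Pinned vertices are copies of already-solved variables, so we may order the labels so that $L(u_1)\leq L(u_2)<i$. The induction invariant of \Cref{subsec:induction-invariant} then says that $s(u_1)$ and $s(u_2)$ are connected by $q$ inside $\inst I$. This also covers the case where $u_1,u_2$ are distinct copies of the same variable, via the clause $u=w$ of the invariant. A witnessing realization of $q$ realizes the path between $u_1$ and $u_2$ in $\inst F$ with the right pinned values.

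It remains to extend this path realization to all of $\inst F$. The branches hanging off the path are finite trees, and each is of one of two kinds.
\begin{itemize}
\item A branch in the cover part of $\inst I_i^+$. It extends by the ``moreover'' part of \namedlemcref{lem:tree-extension-from-ac}, using arc-consistency of $\inst I_i^+$ (\namedlemcref{lem:level-instance-ac-leq-cycle}).
\item An attached path to some other pin $u$. It extends from its value at $w'$ by \namedlemcref{lem:path-induced-ac}, run in the reverse direction, which keeps all intermediate values in the truncated potatoes. The value obtained at $u$ is unconstrained here, which is allowed because this only concerns the binary projection onto $\{u_1,u_2\}$.
\end{itemize}
The unary condition $s(u)\in\proj_u(S)$ follows the same way: take any path from $u$ to another pin, or, if $u$ is the only pin, use its closed-path invariant.

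\textbf{Main obstacle.} The delicate step is the Baker--Pixley reduction. I must check that the realizations of $\inst F$ forming $S$ are taken with all values in truncated potatoes, so that they are compatible with the extension lemmas, and that the majority operation preserves all relations involved, including the $R_p^{\inst I}$, which are pp-definable from $\inst I$. I also need that the path-in-$\inst F$ expansion really matches the path shape required by the invariant: internal levels $\geq i$ and endpoints of level $<i$. If the reduction to pairs goes through, compactness gives a pinned solution of $\inst T'$.
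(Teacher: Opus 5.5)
Your architecture is the paper's. The closed-path invariant handles one pin, the path invariant along the expanded path $p+q'+p'$ handles two pins, majority handles three or more, and compactness finishes. Citing Baker--Pixley just packages the paper's induction on $|P|$ with its two-out-of-three step. Doing compactness over finite subtrees instead of finite pin sets is an equally valid ordering.

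However, one step fails as written. You define $S$ using realizations of $\inst F$ in which \emph{every} vertex, including the lower-level leaves, lies in its truncated potato. When you build the witness for a pair $(u_1,u_2)$, you fill each other attached path $u=x_0,\dots,x_n=w$ using \namedlemcref{lem:path-induced-ac}. That lemma places $x_j$ in $A^{\inst I}_{x_j}$ only for vertices of level $\geq i$.

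Since $L(u)<i$, the cutoff at $u$ is $t_{L(u)}>t_i$. Backward propagation only guarantees that the reachable set at $u$ has LP mass at least $t_i$. Goodness of level $L(u)$ speaks only about $t_1,\dots,t_{L(u)}$, so this set may consist entirely of values in $A^{\inst J}_u\setminus A^{\inst I}_u$. Indeed, from an arbitrary truncated value at $w$ one generally cannot return to a truncated value at $u$, which is exactly why the invariant is needed at all. So your pair witness need not belong to $S$, and the projection check is not established. Your ``main obstacle'' paragraph points the wrong way: insisting on truncated values at the leaves is precisely what the extension lemmas cannot deliver.

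The repair is the paper's notion of an \emph{admissible} realization. Let lower-level leaves take ambient values in $A^{\inst J}_u$, reading the first edge of each attached path in $\inst J$, while cover nodes and internal path vertices stay truncated. The relaxed $S$ is still pp-definable from relations preserved by the conservative majority, and your singleton and pair witnesses now do lie in it. The target tuple $(s(u))_{u\in U}$ consists of truncated values, so the realization Baker--Pixley produces lies entirely in $\inst I$, as a pinned solution must.
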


\begin{proof}
        Call a realization of $\inst T'$ \emph{admissible} if every vertex of
        $\inst T'$ except lower-level leaves is assigned a truncated value (so all
        those constraints are realized inside $\inst I$); lower-level leaves may
        be ambient unless pinned.

        For a finite set $P$ of lower-level leaves, let $\mathsf{S}(P)$ be the
        statement: ``$\inst T'$ has an admissible realization that pins every
        $u\in P$ to $s(u)$.''  We prove $\mathsf{S}(P)$ by induction on $|P|$.

        Base case $P=\varnothing$.  Realize the cover part $\inst T$ in
        $\inst I_i^+$ by \namedlemcref{lem:tree-extension-from-ac}:  
        then for each
        attached path
        \[
                u=x_0,x_1,\ldots,x_n=w,
        \]
        use \namedlemcref{lem:path-induced-ac} from the already fixed value at
        $w$: this realizes the whole path with $x_j$ ($j>0$) truncated and allows
        the leaf $u=x_0$ to be ambient.  Doing this for all attachments yields an
        admissible realization of $\inst T'$, so $\mathsf{S}(\varnothing)$ holds.

        Assume $\mathsf{S}(Q)$ for all finite $Q$ with $|Q|<m$, and fix $P$ with
        $|P|=m$.
        If $m=1$, write $P=\{u\}$ and let
        \[
                p:u=x_0,x_1,\ldots,x_n=w
        \]
        be the attached path of $u$.  By the induction invariant in the closed
        case ($u=u$), $p-p$ connects $s(u)$ to itself in $\inst I$;
        hence $p$ is realizable in $\inst I$ from $u=s(u)$ to some
        $b\in A_w^{\inst I}$.  Keep this realization on $p$, realize the cover
        part of $\inst T$ from root value $b$ at $w$ using the rooted extension
        part of \namedlemcref{lem:tree-extension-from-ac}, and fill all other
        attachments as in the base case.  This gives $\mathsf{S}(P)$.

        If $m=2$, say $P=\{u,u'\}$ with attached paths $p$ (to $w$) and $p'$ (to
        $w'$).  Let $q$ be the unique path in $\inst T$ from $w$ to $w'$, and let
        $q'$ be its unfolding to $\inst I$.  The path
        $p+q'+p'$ has all internal variables at level at least $i$,
        so by the induction invariant it connects $s(u)$ to $s(u')$ in $\inst I$.
        Hence, we can fix compatible values on $p$, $q$~%
        (which is the ``folded'' version of $q'$), 
        and $p'$ with both leaves
        pinned, then extend to all remaining parts exactly as above by rooted
        extensions in $\inst T$ and path-induced fillings on attachments.  So
        $\mathsf{S}(P)$ holds.

        Now let $m\ge 3$.  Choose distinct $u_1,u_2,u_3\in P$.  For each
        $r\in\{1,2,3\}$, by the induction hypothesis for
        $P\setminus\{u_r\}$ obtain an admissible realization $\psi_r$ pinning
        every leaf in $P\setminus\{u_r\}$.  
        For each chosen leaf $u_r$, only $\psi_r$ may be wrong at
        $u_r$, while the other two are correct there; all leaves in
        $P\setminus\{u_1,u_2,u_3\}$ are correct in all three realizations.
        Apply the conservative majority pointwise to $\psi_1,\psi_2,\psi_3$.
        Because majority is a polymorphism, all constraints stay satisfied; by
        conservativity, truncated values at non-leaf vertices are preserved; and every leaf in $P$
        is pinned by the two-out-of-three argument.  So $\mathsf{S}(P)$ holds.

        Thus $\mathsf{S}(P)$ holds for every finite pin set $P$.  Compactness on
        the family of pin equations $u=s(u)$ (over all lower-level leaves $u$)
        now yields one admissible realization of $\inst T'$ pinning all leaves,
        i.e. a pinned solution of $\inst T'$.
\end{proof}

\subsection{Compatible subinstances and completion}
\label{subsec:compatible-subinstances-completion}
The next step is to define an instance $\inst K_i$ such that $\inst K_i\leq \inst I_i^+$,
and so that each solution of $\inst K_i$ would satisfy the induction invariant 
with all already solved lower levels.
For $v\in V_i$, define $A_v^{\inst K_i}\subseteq A_v^{\inst I}$ to be the set of values
$a$ such that there is pinned solution of the augmented universal cover $\inst T'$ 
sending a copy of $v$ to $a$.
Note that, by \namedlemcref{lem:finite-augmented-covers-realizable}, each $A_v^{\inst K_i}$ is nonempty.
Then put $\constr{R}^{\inst K_i}$ to be the inherited constraints from $\inst I_i^+$, i.e.
\[
        \constr{R}_{xy}^{\inst K_i}:=\{R\cap A_{x}^{\inst K_i}\times A_{y}^{\inst K_i} : R\in\constr{R}_{xy}^{\inst I_i^+}\}.
\]
We proceed to investigate $\inst K_i$:
\begin{namedlemma}[Compatible subinstance is arc-consistent]
\label{lem:compatible-ac-subinstance}
        The instance $\inst K_i$ is arc-consistent.
\end{namedlemma}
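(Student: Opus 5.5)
The plan is to verify the arc-consistency condition of $\inst K_i$ directly from the definition of the potatoes $A_v^{\inst K_i}$ through pinned solutions of $\inst T'$. Fix $x,y\in V_i$, a constraint $R\in\constr{R}_{xy}^{\inst I_i^+}$ (so $R=R_p^{\inst I}$ for some $p\in\mathcal P_i(x,y)$), and $a\in A_x^{\inst K_i}$. We must find $b\in A_y^{\inst K_i}$ with $(a,b)\in R$; the symmetric condition follows by applying the same argument to the reversed pattern $-p\in\mathcal P_i(y,x)$, whose relation is the opposite reading of $R$. Nonemptiness of all potatoes is already given by \namedlemcref{lem:finite-augmented-covers-realizable}.

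By definition of $A_x^{\inst K_i}$ there is a pinned solution $\psi$ of $\inst T'$ and a node $x'$ of $\inst T$, a copy of $x$, with $\psi(x')=a$. The key structural fact I would use is that $\inst T$ is the \emph{universal} cover of $\inst I_i^+$: every node $x'$ labelled $x$ has, for every constraint $R\in\constr{R}_{xy}^{\inst I_i^+}$, a neighbour $y'$ labelled $y$ joined to $x'$ by an edge labelled $R$. Hence $(\psi(x'),\psi(y'))\in R$. Set $b=\psi(y')$. Since $\psi$ is a pinned solution of $\inst T'$ sending the copy $y'$ of $y$ to $b$, we get $b\in A_y^{\inst K_i}$ directly by definition, and $(a,b)\in R\cap(A_x^{\inst K_i}\times A_y^{\inst K_i})$, as required. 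Also $b\in A_y^{\inst I}$ because admissible realizations put truncated values on all non-leaf vertices, consistent with $A_y^{\inst K_i}\subseteq A_y^{\inst I}$.

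The step needing care is making sure the notion of ``pinned solution'' used in defining $A_v^{\inst K_i}$ is the one produced by \namedlemcref{lem:finite-augmented-covers-realizable}, namely an admissible realization where cover nodes take truncated values and constraints of $\inst T$ are realized by $R_p^{\inst I}$; and that every node of $\inst T$ (not just a root) carries its full neighbourhood, which holds since the universal cover is an infinite unfolding in which each node's neighbourhood copies that of its label. With these points settled, the argument is a single step and I expect no real obstacle beyond this bookkeeping.
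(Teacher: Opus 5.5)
Your proposal is correct and follows essentially the same argument as the paper: take a pinned solution of $\inst T'$ sending a copy $x'$ of $x$ to $a$, follow an incident cover edge labelled $R$ to a copy $y'$ of $y$, and set $b$ to the value at $y'$. The reverse direction is then obtained by the symmetric argument. Your explicit remark that every node of the universal cover carries the full neighbourhood of its label is exactly the property the paper uses implicitly when it says ``follow an incident cover edge labeled by $R$''.
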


\begin{proof}
Fix $R\in\constr{R}_{xy}^{\inst I_i^+}$ and $Q:=R\cap(A_x^{\inst K_i}\times
  A_y^{\inst K_i})$; for any $a\in A_x^{\inst K_i}$, choose a pinned solution
  $\alpha$ of $\inst T'$ sending some copy $x'$ of $x$ to $a$, follow an
  incident cover edge labeled by $R$ to a copy $y'$ of $y$, and put $b:=\alpha(y')$, so $(a,b)\in R$, $b\in A_y^{\inst K_i}$, and hence $(a,b)\in Q$, proving that every $a\in A_x^{\inst K_i}$ has a partner in $A_y^{\inst K_i}$ through $Q$.
The same argument with $x,y$ swapped gives the symmetric statement for every $b\in A_y^{\inst K_i}$, so every constraint of $\inst K_i$ is subdirect on its endpoint potatoes, i.e. $\inst K_i$ is arc-consistent.
\end{proof}

\begin{namedlemma}[Compatible subinstance is solvable]
\label{lem:compatible-subinstance-solvable}
        Instance $\inst K_i$ has a solution.
\end{namedlemma}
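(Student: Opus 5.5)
The plan is to feed the tractable-side hypothesis of \Cref{thm:HardnessMain} an ac-leq-cycle pair built from $\inst K_i$. The pair $(\inst K_i,\inst J_i^+)$ does not work directly. The constraints of $\inst K_i$ are restrictions of $R_p^{\inst I}$, and these can be strictly smaller than $R_p^{\inst J}\cap(A_x^{\inst K_i}\times A_y^{\inst K_i})$, because internal variables of $p$ are truncated. So $\inst K_i\leq\inst J_i^+$ fails in general.

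To repair this I will not work at the level of path relations. Instead I will \emph{unfold} the patterns and work on the original variables.

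\textbf{The ambient instance.} Let $W_i$ be the set of variables of level $\geq i$ that lie on some pattern in some $\mathcal P_i(x,y)$, together with $V_i$. Let $\inst J''$ be the restriction of $\inst J$ to $W_i$: full potatoes $A^{\inst J}$ and the original constraints $R_{zz'}$ for $z,z'\in W_i$. Every closed pattern in $\inst J''$ is a closed pattern in $\inst J$. By \Cref{prop:23-cycle-consistency-levels}, $\inst J''$ is therefore cycle-consistent.

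\textbf{The smaller instance.} Unfold the augmented cover $\inst T'$ into a tree $\inst T''$ by replacing every cover edge labelled $R_p^{\inst I}$ with a fresh copy of the path $p$. Every pinned solution of $\inst T'$, which exists by \namedlemcref{lem:finite-augmented-covers-realizable}, lifts to a pinned solution of $\inst T''$. The lift exists because $(\alpha(x'),\alpha(y'))\in R_p^{\inst I}$ is witnessed by a realization of $p$ with all internal values truncated, i.e.\ in $A^{\inst I}$. I then define potatoes for $\inst K''$ as follows:
\begin{itemize}
\item for $z\in W_i$, $A_z^{\inst K''}$ is the set of values taken at copies of $z$ in pinned solutions of $\inst T''$ whose internal vertices are truncated;
\item on $V_i$ these sets coincide with $A^{\inst K_i}$;
\item for level $>i$ they lie inside $A_z^{\inst I}$.
\end{itemize}
Let $\inst K''\leq\inst J''$ be the inherited subinstance. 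The step I expect to be the main obstacle is arc-consistency of $\inst K''$, argued as in \namedlemcref{lem:compatible-ac-subinstance}. Given $a$ at a copy of $z$ in a pinned solution and an edge $R_{zz'}$, I need a neighbour of that copy labelled $z'$ through $R_{zz'}$. The definition of $W_i$ is meant to supply this. The vertex $z'$ lies on a pattern returning to $V_i$, and $z$ lies on one as well, so the edge $z,z'$ can be spliced into a pattern of some $\mathcal P_i$ passing through that copy. That spliced pattern is then a branch in the universal cover, hence it appears in $\inst T''$ adjacent to the copy. The care is in choosing the splice so that the pattern starts and ends in $V_i$ with all internal variables of level $>i$. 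The universality of the cover of $\inst I_i^+$ should make every such branch present at every copy.

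\textbf{Conclusion.} With arc-consistency established, $(\inst K'',\inst J'')$ is an ac-leq-cycle pair. The tractable-side assumption then gives a solution $\beta$ of $\inst K''$. Its restriction to $V_i$ maps into $A^{\inst K_i}$. For every $p\in\mathcal P_i(x,y)$, the internal values of $\beta$ along $p$ lie in truncated potatoes, so $(\beta(x),\beta(y))\in R_p^{\inst I}$. Hence $\beta|_{V_i}$ satisfies every constraint of $\inst K_i$, which proves \namedlemcref{lem:compatible-subinstance-solvable}.
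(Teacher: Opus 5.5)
Your diagnosis matches the paper's starting point: $(\inst K_i,\inst J_i^+)$ is not an ac-leq-cycle pair, so patterns have to be unfolded. The gap is the step you flagged yourself, arc-consistency of $\inst K''$. Your argument for it does not work, and in general the property can fail.

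\textbf{The splicing argument misreads $\inst T''$.} A copy $\hat z$ of a level-$>i$ variable $z$ arises by subdividing one cover edge labelled $R_p^{\inst I}$. So $\hat z$ has exactly two neighbours, its predecessor and successor on that copy of $p$. Universality of the cover of $\inst I_i^+$ gives branching only at copies of variables in $V_i$. The spliced pattern $x\to\dots\to z\to z'\to\dots\to y'$ does occur in $\inst T''$, but as a different cover edge out of the copy of $x$, with its own fresh copy of $z$. The given pinned solution need not place $a$ there. Forcing it to would mean re-solving the whole subtree beyond the copy of $y'$ so that it is reachable from $a$, and nothing in your argument controls that.

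\textbf{The property itself can fail.} You keep the original variables of $W_i$ together with all induced constraints $R_{zz'}$, and truncated potatoes at different levels above $i$ are not mutually supported. Suppose $L(z)>L(z')>i$ and $a\in A_z^{\inst I}$ has $\lambda_z(a)$ just above $t_{L(z)}$, hence below $2^{-L(z')}<t_{L(z')}$. Then $a$ may have a single $R_{zz'}$-support $b$ with $\lambda_{z'}(b)<t_{L(z')}$. This $b$ is cut from $A_{z'}^{\inst I}$ whenever it conflicts with none of $t_1,\dots,t_{L(z')}$, which is the generic situation. \namedlemcref{lem:lp-support-propagation} only gives total support mass at least $\lambda_z(a)$. \namedlemcref{lem:mass-survives-higher-level-truncation} would need mass at least $t_{L(z')}$. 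Nothing prevents such an $a$ from occurring at a copy of $z$ on a pattern that avoids $z'$. It then lies in $A_z^{\inst K''}$ with no partner in $A_{z'}^{\inst K''}$. The same issue arises between a level-$>i$ variable and a $V_i$-variable it is not adjacent to on its pattern. This is precisely why \namedlemcref{lem:path-induced-ac} propagates only backwards from a level-$i$ endpoint.

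\textbf{The paper's fix.} The paper never shares internal variables. It unfolds $\inst J_i^+$ and $\inst K_i$ in parallel, introducing fresh copies of the internal variables for each representative pattern $p\in\mathcal P_i(x,y)$.
\begin{itemize}
\item On the $\inst J$ side the copies get ambient potatoes.
\item On the $\inst K$ side a copy gets the values used by some realization of $p$ in $\inst I$ with endpoints in $A_x^{\inst K_i}\times A_y^{\inst K_i}$.
\end{itemize}
Each internal copy then sees only its two path neighbours. Arc-consistency is immediate from this definition, together with \namedlemcref{lem:compatible-ac-subinstance} at the endpoints. Cycle consistency of the unfolded ambient instance holds because closed patterns project to closed patterns of $\inst J$.

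You do not need one simultaneous realization of all patterns through shared internal variables. The constraints of $\inst K_i$ are path relations, so independent witnesses per constraint suffice. Your $\inst K''$ demands more than that, and the extra demand is exactly what breaks arc-consistency.
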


\begin{proof}
        In this proof we apply the tractable-side assumption of \Cref{thm:HardnessMain},
        so we need to construct an ac-leq-cycle pair.
        The required ac-leq-cycle pair is obtained by ``unfolding'' instances $\inst K_i$ and $\inst J_i^+$.

        We unfold $\inst J_i^+$ first:
        we keep the variables $V_i$ and their potatoes, 
        and note that for fixed endpoints there are only finitely many distinct
        binary relations $R_p^{\inst J}\subseteq A_x^{\inst J}\times A_y^{\inst J}$,
        so only finitely many constraint symbols need to be unfolded;
        and for every constraint $R_p^{\inst J}$ of $\inst J_i^+$, we introduce a fresh copy of the internal variables of $p$~%
        (if $p$ has no internal variables we simply introduce a relation into the corresponding
        $\constr{R}_{xy}$)
        and place along every gadget edge the corresponding original binary constraint from $\inst J$; all potatoes are the ambient ones.
        Call the resulting instance $\widetilde{\inst J}_i$.

        Next we unfold $\inst K_i$;
        the goal is to obtain an instance ``similar'' to $\widetilde{\inst J}_i$ 
        i.e. the same variable set, the same pairs of variables constrained etc.
        To do so, we repeat the unfolding procedure of $\inst J_i^+$:
        we keep the variables $V_i$
        and for every constraint $R_p^{\inst K_i}$ of $\inst K_i$, 
        with $p$ which is $x=x_0,x_1,\dotsc,x_n=y$ we introduce a fresh copy of the internal variables of $p$,
        but for each $x_i$ we restrict the potato to $A_{x_i}^{\inst I}$ to only these
        values that participate in a realization of $p$ in $\inst I$ whose endpoints lie in $A_x^{\inst K_i}$ and $A_y^{\inst K_i}$.
        We then place along every gadget edge the corresponding original binary constraint from $\inst I$, but restricted to the potato of the corresponding gadget variable.
        Call the resulting instance $\widetilde{\inst K}_i$.

        The instance $\widetilde{\inst K}_i$ is a subinstance of $\widetilde{\inst J}_i$,
        since every gadget edge-constraint of $\inst I$ is inherited from $\inst J$,
        and we then only shrink endpoint potatoes further.
        Moreover, $\widetilde{\inst K}_i$ is arc-consistent, since every constraint is subdirect on its endpoint potatoes by \namedlemcref{lem:compatible-ac-subinstance} 
        in the case of variables in $V_i$,
        and for every gadget edge $(x_{j-1},x_j)$ internal endpoint values are
        chosen exactly from realizations of the unfolded path, so each allowed
        value on one endpoint has a supporting value on the other endpoint.
        Finally, $\widetilde{\inst J}_i$ is cycle-consistent, since it is an unfolding of constraints of $\inst J$, which is cycle-consistent.
        Thus $(\widetilde{\inst K}_i,\widetilde{\inst J}_i)$ is an ac-leq-cycle pair.
        By the tractable-side assumption of \Cref{thm:HardnessMain},
        $\widetilde{\inst K}_i$ has a solution.
        By construction, this gives a solution of $\inst K_i$.
\end{proof}

To complete the induction step, 
we argue that every solution of $\inst K_i$ 
satisfies the missing part of the induction invariant.
Take variables $u$ and $w$ with $L(u)<i=L(w)$, and let $p:u=x_0,x_1,\ldots,x_n=w$ be a path in $\inst I$ whose internal variables have level $>i$.
Take a solution of $\inst K_i$ and let $s(w)$ be the value assigned to $w$.
By the definition of $\inst K_i$, there is a pinned solution of the augmented universal cover $\inst T'$ sending a copy of $w$ to $s(w)$.
The pinned solution of $\inst T'$ realizes a copy of $p$ in $\inst I$ whose endpoint $u$ is assigned to $s(u)$, so the induction invariant holds for $u$ and $w$.

This finishes the argument: after all the steps we have a solution to $\inst I$ satisfying the constraints in levels~(because we solved augmented instances in the levels)
and across the levels~(because we maintained the induction invariant).
The \Cref{prop:rounded-instance-has-solution} is proved.

%% file: cycles-are-hard.tex
\section{Inapproximability of \texorpdfstring{$\cUGC(k)$}{cUGC(k)}}
\label{sec:directed-cycle-gap}

We show that, for any $k\geq 2$, $\cUGC(k)$ is NP-hard to approximate within any constant factor
assuming the UGC. For $k=2$ this follows from~\cite{Khot07:sicomp}.
Our strategy is to use Raghavendra's celebrated result, which turns SDP
integrality gaps into UGC hardness~\cite{Raghavendra08:everycsp}.

Let, in this section,
 $[k]=\{0,\dots,k-1\}$. Recall from~\Cref{sec:intro} that $Z_k=\{(a,a+1 \bmod k):a\in [k]\}$; so
$([k],Z_k)$ is the directed $k$-cycle.
An instance of $\cUGC(k)$ is a directed graph $G=(V,E)$. The cost incurred by an
assignment $\alpha:V\to [k]$ on an edge $(u,v)$ 
is
\[
  \operatorname{cost}_{(u,v)}(\alpha):=
    \begin{cases}
    0, & \text{if } (\alpha(u),\alpha(v))\in Z_k,\\
    1, & \text{otherwise.}
    \end{cases}
\]
The objective is to minimize the normalized cost
\[
  \operatorname{cost}_G(\alpha):=
    \frac1{|E|}\sum_{(u,v)\in E}\operatorname{cost}_{(u,v)}(\alpha).
\]
We write
\[
  \OPT_{\min}(G):=\min_{\alpha:V\to [k]}\operatorname{cost}_G(\alpha),
    \qquad
    \OPT_{\max}(G):=1-\OPT_{\min}(G).
\]

\subsection{SDP relaxation}

The \emph{basic semidefinite programming} (SDP)
relaxation~\cite{Raghavendra08:everycsp} for maximizing the number of satisfied
constraints in a $\cUGC(k)$ instance is given in~\Cref{fig:sdp}. It has,
for each vertex $u\in V$ and a label $a\in [k]$, a
vector $\blambda_u(a)\in\mathbb R^q$, where the dimension $q$ is arbitrary and 
not fixed in advance, and also a unit vector $\blambda_0\in\mathbb
R^q$.

\begin{figure}[ht]
\centering
  \begin{alignat}{4}
    \SDP_{\max}(G) :=\ &\ \text{maximize} &\qquad& \frac1{|E|}\sum_{(u,v)\in E}\sum_{a=0}^{k-1} \langle \blambda_u(a),\blambda_v(a+1\bmod k)\rangle \notag \\
&\ \text{subject to} & \qquad& \langle \blambda_u(a),\blambda_u(b)\rangle=0, \qquad && \forall u\in V,\,\forall a\neq b\in [k], \notag\\
    & & \qquad& \langle \blambda_u(a),\blambda_v(b) \rangle \geq 0, \qquad && \forall
    (u,v)\in E,\,\forall a,b\in [k],\notag\\
    & & \qquad & \sum_{a=0}^{k-1}\blambda_u(a)=\blambda_0, \qquad && \forall u\in V,\notag \\
    & &  \qquad& \|\blambda_0||^2 = 1.\notag
\end{alignat}
  \caption{SDP relaxation for an instance $(V,E)$ of $\cUGC(k)$.}
  \label{fig:sdp}
\end{figure}
We let $\SDP_{\min}(G):=1-\SDP_{\max}(G)$.

\subsection{Gap instances}

Let $n\equiv 1\pmod k$, and let $G_n$ be the directed $n$-cycle with vertices
$[n]$ and arcs $(u,u+1\bmod n)$, $u\in [n]$. As an instance of $\cUGC(k)$, we
have
\[
    \OPT_{\min}(G_n)=\frac1n,
    \qquad
    \OPT_{\max}(G_n)=1-\frac1n.
\]
On the one hand we cannot satisfy all the constraints by the choice of $n$, 
and on the
other hand, deleting any one edge leaves a directed path and the labels can then be propagated along the path.

We will show below that $\SDP_{\min}(G_n)=O_k(n^{-2})$. It will be convenient to
work with complex (rather than real) vectors. However, this is only a
notational change as a complex vector $z\in\mathbb C^q$ can
be replaced by its realification $\mathcal
R(z)=(\operatorname{Re}z,\operatorname{Im}z)\in\mathbb R^{2q}$.
One can verify that $\mathcal \langle R(z),\mathcal R(w)\rangle =\operatorname{Re}\langle z,w\rangle_{\mathbb C}$ and $\|\mathcal R(z)\|_2=\|z\|_2$.
In the construction below, all relevant inner products of the
SDP vectors are real and nonnegative, so realification gives a valid real solution.

\begin{namedlemma}[SDP solution for an inconsistent cycle]
\label{lem:sdp-cycle-value-direct}
For the instance $G_n$ defined above, $\SDP_{\min}(G_n)= O_k(n^{-2})$.
\end{namedlemma}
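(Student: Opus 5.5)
We will construct an explicit feasible SDP solution. The vectors will be written with complex coordinates and realified at the end, as described above. The idea is to give vertex $u$ a "phase" that rotates by almost $2\pi/k$ along each arc, so that the total mismatch of $2\pi/k$ around the $n$-cycle is spread evenly. The Fourier weights will be chosen as a Fejér kernel, which keeps every inner product real and nonnegative.

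\emph{Setup.} Index an orthonormal basis $\{e_j\}$ of $\mathbb C^{2k-1}$ by $j\in\{-(k-1),\dots,k-1\}$. Put $w_j=(k-|j|)/k^2$; these are the Fejér weights. The key arithmetic fact is that $j\mapsto w_j$ is equidistributed modulo $k$: for every residue class $r\bmod k$ one has $\sum_{j\equiv r}w_j=1/k$, since $w_r+w_{r-k}=1/k$ for $0<r<k$, and $w_0=1/k$. Let $m=(n-1)/k$, $\beta=2\pi m/n$ and $\omega=e^{2\pi i/k}$. Define
\[
\blambda_u(a)=\sum_{j}\sqrt{w_j}\,\omega^{-ja}e^{i j\beta u}\,e_j ,
\]
which is well defined for $u\in[n]$ because $n\beta\in 2\pi\mathbb Z$. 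Also set $\blambda_0=\sum_a\blambda_u(a)$.

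\emph{Verifying the constraints.} For each vertex $u$ we have
\[
\langle\blambda_u(a),\blambda_u(b)\rangle=\sum_j w_j\,\omega^{-j(a-b)}=\sum_{r\bmod k}\tfrac1k\,\omega^{-r(a-b)}=0 \quad\text{for } a\neq b,
\]
by equidistribution. Summing over $a$ kills every $j\not\equiv 0\pmod k$, leaving $k\sqrt{w_0}\,e_0=\sqrt k\,e_0$. I will rescale all vectors by $1/\sqrt k$; then $\blambda_0=e_0$ is a unit vector independent of $u$, and orthogonality is unaffected. For an arc $(u,u+1)$ and labels $a,b$, the inner product equals $\frac1k\sum_j w_j e^{ij\psi}$ with $\psi=2\pi(b-a)/k-\beta$. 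This is the Fejér kernel evaluated at $\psi$. It is real because $w_j=w_{-j}$, and it is nonnegative everywhere because it equals $\frac{1}{k^3}\bigl|\sum_{l=0}^{k-1}e^{il\psi}\bigr|^2$. The same holds after realification. Hence all constraints of \Cref{fig:sdp} hold.

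\emph{Estimating the objective.} For $b=a+1$ we get $\psi=2\pi/k-\beta=2\pi/(kn)=:\varepsilon$. So each arc contributes
\[
\sum_a\frac1k\sum_j w_j\cos(j\varepsilon)=\sum_j w_j\cos(j\varepsilon)\ \geq\ 1-\tfrac12\varepsilon^2\sum_j w_j j^2 = 1-O(k^2\varepsilon^2)=1-O_k(n^{-2}),
\]
using $\sum_j w_j=1$. Therefore $\SDP_{\max}(G_n)\geq 1-O_k(n^{-2})$, i.e. $\SDP_{\min}(G_n)=O_k(n^{-2})$.

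The main obstacle is the nonnegativity constraint on \emph{all} label pairs along an arc. The naive choice of uniform Fourier weights on $j\in[k]$ gives Dirichlet-kernel inner products that can be slightly negative (and non-real). Indicator-of-arc constructions fix the sign but only achieve $\Theta(1/n)$. The Fejér weights are precisely the choice that resolves both problems simultaneously: their residues modulo $k$ are uniform (giving orthogonality and a constant $\blambda_0$), and their kernel is a squared modulus (giving nonnegativity). Everything else is a routine check.
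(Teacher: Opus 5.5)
Your proof is correct, and it is the paper's own SDP solution in compressed form. The paper takes $\blambda_u(a)=\frac{1}{\sqrt k}\,\bkappa_u(a)\otimes\overline{\bkappa_u(a)}$, whose $(x,y)$-coordinate depends only on $j=x-y$ and occurs $k-|j|$ times, so grouping coordinates by $j$ turns it (up to $j\mapsto -j$) into your Fejér-weighted vector with the same Gram matrix; the paper's nonnegativity identity $\frac1k|\langle\bkappa_u(a),\bkappa_v(b)\rangle|^2\ge 0$ is exactly your Fejér-kernel identity. The only real difference is the objective estimate: your bound $\cos x\ge 1-x^2/2$ is non-asymptotic and, with $\sum_j w_j j^2=(k^2-1)/6$, gives exactly the paper's leading term $\pi^2(k^2-1)/(3k^2n^2)$ without the Taylor remainder the paper carries in expanding $|\ell|^2$.
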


\begin{proof}
For $r\in [k]$, set $b_r=(2\pi r)/k$.
For a vertex $u\in [n]$ and a label $a\in [k]$, define $\bkappa_u(a)\in\mathbb{C}^k$ by
\[
\bkappa_u(a)(r)=\frac{1}{\sqrt k}\exp\!\left(i b_r\left(a-u+\frac{u}{n}\right)\right),
  \qquad r\in [k].
\]
For each fixed $u$, the vectors $\bkappa_u(0),\dots,\bkappa_u(k-1)$ are an orthonormal basis, since
\[
\langle \bkappa_u(a),\bkappa_u(b)\rangle
  =\frac{1}{k}\sum_{r\in [k]}\exp\!\left(\frac{2\pi i r(b-a)}{k}\right)
=
\begin{cases}
1, & a=b,\\
0, & a\ne b.
\end{cases}
\]
Define the SDP vector $\blambda_u(a)\in\mathbb{C}^{k^2}$ by coordinates
  indexed by pairs $(x,y)\in [k]^2$ as follows:
\[
  \blambda_u(a)(x,y)=\frac{1}{\sqrt k}\,\bkappa_u(a)(x)\overline{\bkappa_u(a)(y)}.
\]
Define $\blambda_0\in\mathbb{C}^{k^2}$ by
\[
 \blambda(x,y)=
 \begin{cases}
 1/\sqrt k, & x=y,\\
 0, & x\neq y.
 \end{cases}
\]
We now check the SDP constraints.

First, $\|\blambda_0\|=1$. Also, for every $u$ and every coordinate $(x,y)$,
\[
\sum_{a\in [k]} \blambda_u(a)(x,y)
=\frac{1}{\sqrt k}\sum_{a\in [k]}\bkappa_u(a)(x)\overline{\bkappa_u(a)(y)}
=\frac{1}{k\sqrt k}\exp\!\left(-i(b_x-b_y)\left(u-\frac{u}{n}\right)\right)
\sum_{a\in [k]}\exp\!\bigl(i(b_x-b_y)a\bigr).
\]
The final sum is $k$ if $x=y$ and $0$ otherwise. Hence
\[
  \sum_{a\in [k]} \blambda_u(a)=\blambda_0.
\]
Second,
\[
  \langle
  \blambda_u(a),\blambda_u(b)\rangle=\frac{1}{k}\,\bigl|\langle\bkappa_u(a),\bkappa_u(b)\rangle\bigr|^2.
\]
Therefore
\[
  \langle \blambda_u(a),\blambda_u(b)\rangle=0 \qquad (a\ne b),
\]
  and $\|\blambda_u(a)\|^2=1/k$.

Third, for any vertices $u,v$ and labels $a,b$,
\[
  \langle
  \blambda_u(a),\blambda_v(b)\rangle=\frac{1}{k}\,\bigl|\langle\bkappa_u(a),\bkappa_v(b)\rangle\bigr|^2\ge 0.
\]

It remains to bound the objective. For an edge $(u,u+1\bmod n)\in E$ we
compare $\blambda_u(a)$ and $\blambda_v(a+1 \bmod k)$ as these are the only terms in the
objective. First,
\[
\begin{aligned}
  \langle\bkappa_u(a),\bkappa_{u+1\bmod n}(a+1 \bmod k)\rangle
  &=\frac{1}{k}\sum_{r\in [k]}
\exp\!\left(i b_r\left[a+1-(u+1)+\frac{u+1}{n}-a+u-\frac{u}{n}\right]\right)\\
  &=\frac{1}{k}\sum_{r\in [k]}\exp\!\left(i\frac{b_r}{n}\right) =
\frac{1}{k}\sum_{r\in [k]}\exp\!\left(\frac{2\pi i r}{kn}\right) =
  \frac{1}{k}\sum_{r\in [k]} e^{ib_r/n}.
\end{aligned}
\]
Denote the last sum by $\ell$. Then
\[
  \langle \blambda_u(a),\blambda_{u+1\bmod n}(a+1\bmod
  k)\rangle=\frac{1}{k}|\ell|^2.
\]
Summing over all $a\in [k]$, every edge contributes $|\ell|^2$. Hence
$\SDP_{\max}(G_n)\ge |\ell|^2$.
Using Taylor expansion for fixed $k$,
\[
 e^{i b_r/n}=1+\frac{i b_r}{n}-\frac{b_r^2}{2n^2}+O_k(n^{-3}).
\]
Thus, with
\[
  M_1=\frac{1}{k}\sum_{r\in [k]} b_r,
\qquad
M_2=\frac{1}{k}\sum_{r\in [k]} b_r^2,
\]
we get
\[
 \ell=1+\frac{iM_1}{n}-\frac{M_2}{2n^2}+O_k(n^{-3}),
\]
and hence
\[
|\ell|^2=1-\frac{M_2-M_1^2}{n^2}+O_k(n^{-3}).
\]
Now
\[
M_1=\frac{\pi(k-1)}{k},
\]
and
\[
  M_2=\frac{1}{k}\sum_{r\in [k]}\left(\frac{2\pi r}{k}\right)^2
=\frac{2\pi^2(k-1)(2k-1)}{3k^2}.
\]
Therefore
\[
M_2-M_1^2=\frac{\pi^2(k^2-1)}{3k^2}.
\]
Consequently
\[
\SDP_{\min}(G_n)\le \frac{\pi^2(k^2-1)}{3k^2}\cdot\frac{1}{n^2}+O_k(n^{-3}),
\]
and in particular $\SDP_{\min}(G_n)=O_k(n^{-2})$.
\end{proof}

\subsection{From the SDP gap to hardness}

\begin{theorem}
\label{thm:no-constant-factor-directed-cycle}
  Assume the UGC. Then, for every fixed $k\ge 2$, $\cUGC(k)$ admits no
  polynomial-time constant-factor approximation algorithm. \end{theorem}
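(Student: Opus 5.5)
Assuming the UGC, the plan is to combine the integrality gap of \namedlemcref{lem:sdp-cycle-value-direct} with Raghavendra's theorem~\cite{Raghavendra08:everycsp}. For every CSP, Raghavendra's theorem turns an instance with SDP value $c$ and integral value $s$ into UG-hardness of distinguishing instances of value $\geq c-\eta$ from value $\leq s+\eta$, for every $\eta>0$. We apply it to the maximization version of the single-predicate CSP whose predicate is $Z_k$. The gap is stated for maximization but read in the minimization scale, so we keep track of $1-c$ and $1-s$.

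Fix $k\ge 2$ and a target factor $\rho>1$, and suppose for contradiction that a polynomial-time $\rho$-approximation for $\cUGC(k)$ exists. Choose $n\equiv 1\pmod k$ large enough that $\SDP_{\min}(G_n)\le K_k/n^2$, where $K_k$ depends only on $k$; this is possible by \namedlemcref{lem:sdp-cycle-value-direct}, while $\OPT_{\min}(G_n)=1/n$. In max terms, $c=1-K_k/n^2$ and $s=1-1/n$. Raghavendra's reduction, with a slack $\eta$ that we take to be, say, $K_k/n^2$, then shows that it is UG-hard to distinguish
\begin{itemize}
  \item instances with $\OPT_{\min}\le 2K_k/n^2$, from
  \item instances with $\OPT_{\min}\ge 1/n - K_k/n^2$.
\end{itemize}
The ratio of these two thresholds is $\Omega_k(n)$. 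Taking $n\gg\rho$, the hypothetical approximation algorithm would separate the two cases, which contradicts the UGC. We also note that the instances produced by Raghavendra's reduction use only the predicate $Z_k$ on variables with alphabet $[k]$, with weights rational. These are $\cUGC(k)$ instances up to replacing weights by multiplicities, which is harmless for constant-factor approximation.

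The main obstacle is checking that the SDP in \Cref{fig:sdp} is at least as strong as the relaxation in Raghavendra's theorem. That requires the vector solution to extend to the local distributions on each constraint with matching first and second moments. We handle it in the standard way. For an edge $(u,v)$, define $\mu_{uv}(a,b)=\langle\blambda_u(a),\blambda_v(b)\rangle$. These values are nonnegative by the constraints, and they sum to $\|\blambda_0\|^2=1$, using $\sum_a\blambda_u(a)=\blambda_0$. Their marginals are $\|\blambda_u(a)\|^2$, by the same identity combined with orthogonality. So the pairwise inner products form a valid local distribution consistent with the vectors, and the objective agrees with $\Pr_{\mu_{uv}}[Z_k]$. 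Finally, the construction in \namedlemcref{lem:sdp-cycle-value-direct} is complex-valued, so we pass to real vectors via realification; this is valid because all relevant inner products there are real and nonnegative.
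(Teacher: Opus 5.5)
Your proposal is correct and follows the paper's proof: you use the same gap instance $G_n$ (SDP cost $O_k(n^{-2})$ against integral cost $1/n$), apply Raghavendra's theorem with slack of order $n^{-2}$, and conclude from the $\Omega_k(n)$ ratio between the two thresholds. You also check two things the paper leaves implicit, and both checks are right: the inner products $\langle\blambda_u(a),\blambda_v(b)\rangle$ give valid local distributions, so the relaxation in \Cref{fig:sdp} is at least as tight as Raghavendra's, and rational weights can be replaced by multiplicities.
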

\begin{proof}
Fix $k\ge 2$.  By~\namedlemcref{lem:sdp-cycle-value-direct}, for sufficiently large
$n\equiv 1\pmod k$, we have
$\SDP_{\min}(G_n)\le \frac{B_k}{n^2}$ for a fixed constant $B_k$.
Equivalently, $\SDP_{\max}(G_n) \ge 1-\frac{B_k}{n^2}$.
Recall that $\OPT_{\max}(G_n)=1-\frac1n$.
We set $c_n=1-\frac{B_k}{n^2}$ and $s_n=1-\frac1n$.

By Raghavendra's result~\cite{Raghavendra08:everycsp}, for every $\varepsilon>0$ it is NP-hard to distinguish instances satisfying
$\OPT_{\max}\ge c_n-\varepsilon$ from instances satisfying $\OPT_{\max}\le s_n+\varepsilon$.
Equivalently,
it is NP-hard to distinguish instances satisfying
$\OPT_{\min}\le 1-c_n+\varepsilon =\frac{B_k}{n^2}+\varepsilon$
from
$\OPT_{\min}\ge 1-s_n-\varepsilon =\frac1n-\varepsilon$.

Take $\varepsilon=\frac{B_k}{n^2}$. Then, it is NP-hard to distinguish $\OPT_{\min}\le \frac{2B_k}{n^2}$
from $\OPT_{\min}\ge \frac1n-\frac{B_k}{n^2}$. The ratio between the two sides is $(\frac1n-\frac{B_k}{n^2})/(\frac{2B_k}{n^2}) = \frac{n-B_k}{2B_k}$, which tends to infinity with $n$.
Consequently, it is NP-hard to constant-factor approximate $\cUGC(k)$.
\end{proof}